\newcommand{\providecounter}[1]{%
	\ifcsdef{c@#1}%
		{\setcounter{#1}{0}}%
		{\newcounter{#1}}%
}
\newcommand{\gettranslation}[1]{\GetTranslation{#1}}}%
\newcommand{\gettranslation}[1]{#1}}
\newcommand{\@NNLAYOUTArgument}[1]{\textcolor{blue}{#1}}
\newcommand{\@NNLAYOUTZeroUses}[1]{\textcolor{red}{#1}}
\newcommand{\@NNLAYOUTColumnHeader}[1]{\textbf{#1}}
\newcommand{\@NNLAYOUTTable}[1]{%
	\notbool{@NNUseRTL}%
		{{\Large\textbf{#1}}\hrulefill}%
		{\hrulefill{\Large\textbf{#1}}}%
}
\newcommand{\@NNLAYOUTGroup}[1]{%
	\notbool{@NNUseRTL}%
		{{\large\textbf{#1}}\dotfill}%
		{{\dotfill{\large\textbf{#1}}}}%
}
\newcommand{\@NNDefaultArgumentA}{1}
\newcommand{\@NNDefaultArgumentB}{2}
\newcommand{\@NNDefaultArgumentC}{3}
\newcommand{\@NNDefaultArgumentD}{4}
\newcommand{\@NNDefaultArgumentE}{5}
\newcommand{\@NNDefaultArgumentF}{6}
\newcommand{\@NNDefaultArgumentG}{7}
\newcommand{\@NNDefaultArgumentH}{8}
\newcommand{\@NNDefaultArgumentI}{9}
\let\@NNArgumentA\@NNDefaultArgumentA
\let\@NNArgumentB\@NNDefaultArgumentB
\let\@NNArgumentC\@NNDefaultArgumentC
\let\@NNArgumentD\@NNDefaultArgumentD
\let\@NNArgumentE\@NNDefaultArgumentE
\let\@NNArgumentF\@NNDefaultArgumentF
\let\@NNArgumentG\@NNDefaultArgumentG
\let\@NNArgumentH\@NNDefaultArgumentH
\let\@NNArgumentI\@NNDefaultArgumentI
\newcommand{\@NNShowIf}[3]{\ifthenelse{#2 > #3}{}{{\@NNLAYOUTArgument{#1}}}}
\newcommand{\@NNPrintTableHeader}{%
	\notbool{@NNDetailed}%
	{%
		\notbool{@NNUseRTL}%
		{%
			\@NNLAYOUTColumnHeader{\gettranslation{symbol}} &%
			\@NNLAYOUTColumnHeader{\gettranslation{description}} \\%
		}%
		{%
			\@NNLAYOUTColumnHeader{\gettranslation{description}} & %
			\@NNLAYOUTColumnHeader{\gettranslation{symbol}} \\ %
		}%
	}%
	{%
		\notbool{@NNUseRTL}%
		{%
				\@NNLAYOUTColumnHeader{\gettranslation{symbol}} &%
				\@NNLAYOUTColumnHeader{\gettranslation{macro name}} &%
				\@NNLAYOUTColumnHeader{\gettranslation{uses}} &%
				\@NNLAYOUTColumnHeader{\gettranslation{description}}\\%
		}%
		{%
				\@NNLAYOUTColumnHeader{\gettranslation{description}}&%
				\@NNLAYOUTColumnHeader{\gettranslation{uses}} &%
				\@NNLAYOUTColumnHeader{\gettranslation{macro name}} &%
				\@NNLAYOUTColumnHeader{\gettranslation{symbol}} \\%
		}%
	}%
}
\newcommand{\@NNPrintCommand}[1]{%
	\letcs{\NNnum}{@NNArgumentsOfEntry#1}%
	\ensuremath{%
		\csname @NNCommandOfEntry#1\endcsname%
		{\@NNShowIf{\@NNArgumentA}{1}{\NNnum}}%
		{\@NNShowIf{\@NNArgumentB}{2}{\NNnum}}%
		{\@NNShowIf{\@NNArgumentC}{3}{\NNnum}}%
		{\@NNShowIf{\@NNArgumentD}{4}{\NNnum}}%
		{\@NNShowIf{\@NNArgumentE}{5}{\NNnum}}%
		{\@NNShowIf{\@NNArgumentF}{6}{\NNnum}}%
		{\@NNShowIf{\@NNArgumentG}{7}{\NNnum}}%
		{\@NNShowIf{\@NNArgumentH}{8}{\NNnum}}%
		{\@NNShowIf{\@NNArgumentI}{9}{\NNnum}}%
	}%
}
\newcommand{\@NNPrintCommandName}[1]{%
	\texttt{%
		\csname @NNMacroNameOfEntry#1\endcsname%
		\expandafter\ifthenelse{\csname @NNArgumentsOfEntry#1\endcsname < 1}{}{%
			[\csname @NNArgumentsOfEntry#1\endcsname]}%
		}%
}
\newcommand{\@NNPrintEntry}[1]{%
	\@nameuse{@NNArgDescsOfEntry#1}%
	\notbool{@NNDetailed}{%
		\notbool{@NNUseRTL}%
			{\@NNPrintCommand{#1} & \csname @NNDescriptionOfEntry#1\endcsname \\}%
			{\csname @NNDescriptionOfEntry#1\endcsname & \@NNPrintCommand{#1} \\}%
	}%
	{%
		\notbool{@NNUseRTL}%
		{%
			\@NNPrintCommand{#1} & %
			\@NNPrintCommandName{#1} & %
			\csname @NNNumberOfUsesOfEntry#1\endcsname & %
			\csname @NNDescriptionOfEntry#1\endcsname \\ %
		}%
		{%
			\csname @NNDescriptionOfEntry#1\endcsname & %
			\csname @NNNumberOfUsesOfEntry#1\endcsname & %
			\@NNPrintCommandName{#1} & %
			\@NNPrintCommand{#1} \\ %
		}%
	}%
}
\newcommand{\@NNPrintGroup}[1]{%
	\setboolean{@NNPrintGroup}{true}
	\ifbool{@NNDefaultGroup#1}%
		{%
			\ifthenelse{\expandafter\value{@NNNumberOfEntriesInGroup#1} < 1}%
				{\setboolean{@NNPrintGroup}{false}}%
				{}%
		}%
		{%
			\noindent\expandafter\@NNLAYOUTGroup{\csname @NNNameOfGroup#1\endcsname}%
			\ifthenelse{\expandafter\value{@NNNumberOfEntriesInGroup#1} < 1}%
				{\setboolean{@NNPrintGroup}{false}\gettranslation{no entries}}%
				{}%
		}%
	\ifbool{@NNPrintGroup}%
	{%
		\keepXColumns%
		\ifbool{@NNDetailed}{%
			\notbool{@NNUseRTL}%
				{\begin{tabularx}{\textwidth}{lllX}}%
				{\begin{tabularx}{\textwidth}{Xlll}}%
		}%
		{%
			\notbool{@NNUseRTL}%
				{\begin{tabularx}{\textwidth}{lX}}%
				{\begin{tabularx}{\textwidth}{Xl}}%
		}%
		\@NNPrintTableHeader %
		\forlistcsloop{\@NNPrintEntry}{@NNEntriesInGroup#1} %
		\end{tabularx}%
	}{}%
}
\newcommand{\@NNGroups}{}
\newcommand{\@NNNewGroup}[3]{%
	\stepcounter{@NNNumberOfAllGroups}%
	\setcounter{@NNCurrentGroup}{\value{@NNNumberOfAllGroups}}%
	\listeadd{\@NNGroups}{\the@NNCurrentGroup}%
	\csedef{@NNNameOfGroup\the@NNCurrentGroup}{#1}%
	\csedef{@NNOutputLevelOfGroup\the@NNCurrentGroup}{#3}%
	\expandafter\provideboolean{@NNDefaultGroup\the@NNCurrentGroup}%
	\expandafter\setboolean{@NNDefaultGroup\the@NNCurrentGroup}{#2}%
	\expandafter\providecounter{@NNNumberOfEntriesInGroup\the@NNCurrentGroup}%
}
\newcommand{\@NNPrepareEntry}[1]{%
	\stepcounter{@NNNumberOfAllEntries}%
	\stepcounter{@NNNumberOfEntriesInGroup\the@NNCurrentGroup}%
	\providecounter{@NNCounter#1}%
	\csdef{@NNNumberOfUsesOfEntry\the@NNNumberOfAllEntries}%
	{%
		\ifthenelse{\value{@NNCounter#1} > 0 }%
			{\arabic{@NNCounter#1}}%
			{\@NNLAYOUTZeroUses{\arabic{@NNCounter#1}}}%
	}%
}
\newcommand{\@NNAddEntry}[5]{%
	\listcseadd{@NNEntriesInGroup\the@NNCurrentGroup}{\the@NNNumberOfAllEntries}%
	\csedef{@NNCommandOfEntry\the@NNNumberOfAllEntries}{#1}%
	\csedef{@NNMacroNameOfEntry\the@NNNumberOfAllEntries}{#2}%
	\csedef{@NNArgumentsOfEntry\the@NNNumberOfAllEntries}{#3}%
	\csedef{@NNDescriptionOfEntry\the@NNNumberOfAllEntries}{#4}%
	\csdef{@NNArgDescsOfEntry\the@NNNumberOfAllEntries}{#5}%
}
\newcommand{\notationSetLayoutTable}[1]{\renewcommand{\@NNLAYOUTTable}[1]{#1}}
\newcommand{\notationSetLayoutGroup}[1]{\renewcommand{\@NNLAYOUTGroup}[1]{#1}}
\newcommand{\notationSetLayoutColumnHeader}[1]{\renewcommand{\@NNLAYOUTColumnHeader}[1]{#1}}
\newcommand{\notationSetLayoutArgument}[1]{\renewcommand{\@NNLAYOUTArgument}[1]{#1}}
\newcommand{\notationSetLayoutZeroUses}[1]{\renewcommand{\@NNLAYOUTZeroUses}[1]{#1}}
\newcommand{\notationarg}[2]{%
	\ifthenelse{#1 = 1}{\renewcommand{\@NNArgumentA}{#2}}{%
	\ifthenelse{#1 = 2}{\renewcommand{\@NNArgumentB}{#2}}{%
	\ifthenelse{#1 = 3}{\renewcommand{\@NNArgumentC}{#2}}{%
	\ifthenelse{#1 = 4}{\renewcommand{\@NNArgumentD}{#2}}{%
	\ifthenelse{#1 = 5}{\renewcommand{\@NNArgumentE}{#2}}{%
	\ifthenelse{#1 = 6}{\renewcommand{\@NNArgumentF}{#2}}{%
	\ifthenelse{#1 = 7}{\renewcommand{\@NNArgumentG}{#2}}{%
	\ifthenelse{#1 = 8}{\renewcommand{\@NNArgumentH}{#2}}{%
	\ifthenelse{#1 = 9}{\renewcommand{\@NNArgumentI}{#2}}%
		{\PackageWarning{newnotation}{notationarg: Invalid argument number}{Must be an integer between 1 and 9.}}%
	}}}}}}}}%
}
\DeclareDocumentCommand{\tableofnotation}{O{0}}{%
	\setboolean{@NNInListMode}{true}%
	{%
		
		\bigskip\noindent%
		\@NNLAYOUTTable{\gettranslation{Table of notation}}

	}%
	\renewcommand*{\do}[1]{%
		\ifthenelse{\@nameuse{@NNOutputLevelOfGroup##1} < #1}{}{%
			\@NNPrintGroup{##1}%
		}%
	}%
	\dolistloop{\@NNGroups}%
	\setboolean{@NNInListMode}{false}%
}
\DeclareDocumentCommand{\detailedtableofnotation}{O{0}}{%
	\setboolean{@NNDetailed}{true}%
	\tableofnotation[#1]%
	\setboolean{@NNDetailed}{false}%
}
\DeclareDocumentCommand{\notationnewgroup}{O{0} m}{%
	\@NNNewGroup{#2}{false}{#1}%
}
\newcommand{\notationnewtable}{%
	\renewcommand{\@NNGroups}{}%
	\@NNNewGroup{}{true}{0}%
}
\newcommand{\notationsavetable}[1]{%
	\csedef{@NNTableCurrentGroup#1}{\the@NNCurrentGroup}%
	\forlistloop{\listcsadd{@NNTable#1}}{\@NNGroups}%
}
\newcommand{\notationloadtable}[1]{%
	\ifcsundef{@NNTable#1}{\PackageError{newnotation}{notationloadtable: Unknown table '#1'.}{Misspelled name?}}{}%
	\notationnewtable%
	\expandafter\setcounter{@NNCurrentGroup}{\csname @NNTableCurrentGroup#1\endcsname}%
	\forlistcsloop{\listcsadd{@NNGroups}}{@NNTable#1}%
}
\newcommand{\newnotationclass}[2]{%
	\csdef{@NNClass#1}{#2}%
}
\DeclareDocumentCommand{\newnotation}{s o m O{0} o m O{} O{}}{%
	\ifdef{#3}%
		{\PackageError{newnotation}{newnotation: Command '\string#3' already defined.}{Misspelled name?}}{}%
	\@NNPrepareEntry{\string#3}%
	\IfNoValueTF{#5}{%
		\newrobustcmd{#3}[#4]{%
			\IfNoValueTF{#2}%
				{\IfBooleanTF{#1}%
					{\ensuremath{#6}}%
					{\ensuremath{{#6}}}}%
				{\IfBooleanTF{#1}%
					{\ensuremath{\@nameuse{@NNClass#2}{#6}}}%
					{\ensuremath{{\@nameuse{@NNClass#2}{#6}}}}}%
			\ifthenelse{\boolean{@NNInListMode}}{}{\protect\stepcounter{@NNCounter\string#3}}%
			\xspace%
		}%
	}{%
	\newrobustcmd{#3}[#4][#5]{%
			\IfNoValueTF{#2}%
				{\IfBooleanTF{#1}%
					{\ensuremath{#6}}%
					{\ensuremath{{#6}}}}%
				{\IfBooleanTF{#1}%
					{\ensuremath{\@nameuse{@NNClass#2}{#6}}}%
					{\ensuremath{{\@nameuse{@NNClass#2}{#6}}}}}%
			\ifthenelse{\boolean{@NNInListMode}}{}{\protect\stepcounter{@NNCounter\string#3}}%
			\xspace%
		}%
	}%
	\@NNAddEntry{#3}{\string#3}{#4}{#7}{#8}%
}
\DeclareDocumentCommand{\newrelation}{s o m O{0} o m O{} O{}}{%
	\ifdef{#3}%
		{\PackageError{newnotation}{newrelation: Command '\string#3' already defined.}{Misspelled name?}}{}%
	\@NNPrepareEntry{\string#3}%
	\IfNoValueTF{#5}{%
		\newrobustcmd{#3}[#4]{%
			\IfNoValueTF{#2}%
				{\IfBooleanTF{#1}%
					{\ensuremath{\mathrel{#6}}}%
					{\ensuremath{\mathrel{{#6}}}}}%
				{\IfBooleanTF{#1}%
					{\ensuremath{\mathrel{\@nameuse{@NNClass#2}{#6}}}}%
					{\ensuremath{\mathrel{\@nameuse{@NNClass#2}{#6}}}}}%
			\ifthenelse{\boolean{@NNInListMode}}{}{\protect\stepcounter{@NNCounter\string#3}}%
			\xspace%
		}%
	}{%
	\newrobustcmd{#3}[#4][#5]{%
			\IfNoValueTF{#2}%
				{\IfBooleanTF{#1}%
					{\ensuremath{\mathrel{#6}}}%
					{\ensuremath{\mathrel{{#6}}}}}%
				{\IfBooleanTF{#1}%
					{\ensuremath{\mathrel{\@nameuse{@NNClass#2}{#6}}}}%
					{\ensuremath{\mathrel{\@nameuse{@NNClass#2}{#6}}}}}%
			\ifthenelse{\boolean{@NNInListMode}}{}{\protect\stepcounter{@NNCounter\string#3}}%
			\xspace%
		}%
	}%
	\@NNAddEntry{#3}{\string#3}{#4}{#7}{#8}%
}
\DeclareDocumentCommand{\renewnotation}{s o m O{0} o m O{} O{}}{%
	\ifundef{#3}%
		{\PackageError{newnotation}{renewnotation: Command '\string#2' not defined.}%
		{Misspelled name?}}{}%
	\@NNPrepareEntry{\string#3}%
	\IfNoValueTF{#5}{%
		\renewrobustcmd{#3}[#4]{%
			\IfNoValueTF{#2}%
				{\IfBooleanTF{#1}%
					{\ensuremath{#6}}%
					{\ensuremath{{#6}}}}%
				{\IfBooleanTF{#1}%
					{\ensuremath{\@nameuse{@NNClass#2}{#6}}}%
					{\ensuremath{{\@nameuse{@NNClass#2}{#6}}}}}%
			\ifthenelse{\boolean{@NNInListMode}}{}{\protect\stepcounter{@NNCounter\string#3}}%
			\xspace%
		}%
	}{%
	\renewrobustcmd{#3}[#4][#5]{%
			\IfNoValueTF{#2}%
				{\IfBooleanTF{#1}%
					{\ensuremath{#6}}%
					{\ensuremath{{#6}}}}%
				{\IfBooleanTF{#1}%
					{\ensuremath{\@nameuse{@NNClass#2}{#6}}}%
					{\ensuremath{{\@nameuse{@NNClass#2}{#6}}}}}%
			\ifthenelse{\boolean{@NNInListMode}}{}{\protect\stepcounter{@NNCounter\string#3}}%
			\xspace%
		}%
	}%
	\@NNAddEntry{#3}{\string#3}{#4}{#7}{#8}%
}
\theoremstyle{plain}
\newtheorem{observation}[theorem]{Observation}
\newcommand{\ShrinkSuggestion}{}
\newcommand{\AdjustDefinitionItemizationIndent}{\setlength{\itemindent}{+1.5em}}
\newcommand{\ProofStep}[1]{\medskip\noindent\textsf{#1.}}
\newcommand{\ProofSubstep}[1]{\smallskip\noindent\textit{#1.}}
\newenvironment{new}{}{}
\newcommand{\codelineref}[2][]{%
	\ifthenelse{\equal{#1}{} \OR \equal{\getrefnumber{#1}}{\getrefnumber{#2}}}{Line~\ref{#2}}{Lines \ref{#1}~--~\ref{#2}}%
}
\tikzset{pattern box/.style={draw=gray!50!black,minimum size=1.5em}}
\tikzset{bounded/.style={fill=gray!15!white}}
\newcommand{\nodeBox}[2][]{\raisebox{-0.5em}{\tikz{\node [#1] {\ensuremath{\scriptscriptstyle #2}}}}}
\newcommand{\dotbox}{\nodeBox[pattern box,draw=none]{\scriptstyle\cdots}}
\newcommand{\unbounded}[1]{\nodeBox[pattern box]{#1}}
\newcommand{\bounded}[1]{\nodeBox[pattern box,bounded]{#1}}
\newnotation{\class}{\calC}
\newnotation{\na}{\bar{a}}
\newnotation{\nb}{\bar{b}}
\newnotation{\bb}{\boldsymbol{b}}
\newnotation{\bc}{\boldsymbol{c}}
\newnotation{\bm}{\boldsymbol{m}}
\newnotation{\br}{\boldsymbol{r}}
\newnotation{\bx}{\boldsymbol{x}}
\newnotation{\by}{\boldsymbol{y}}
\newnotation{\bz}{\boldsymbol{z}}
\renewnotation{\phi}{\varphi}
\newnotation{\mydef}{\mathrel{\smash{\stackrel{\scriptscriptstyle{\text{def}}}{=}}}}
\newnotation{\Nat}{\mathbb{N}}
\newnotation{\Int}{\mathbb{Z}}
\newnotation{\Rat}{\mathbb{Q}}
\newnotation{\powerset}{\mathcal{P}}
\newnotation{\dpowerset}[1][{\le d}]{\powerset_{#1}}
\newnotation{\id}{\mathsf{id}}
\newrelation{\setminus}{-}
\newnotation{\bin}{\mathsf{bin}}
\newnotation{\norm}[1]{\left\lVert#1\right\rVert}
\newnotation{\bigO}{\mathcal{O}}
\newnotation{\poly}{\mathsf{poly}}
\newnotation{\encsize}[1]{\Vert{#1}\Vert}
\newnotation{\domain}{d}
\newnotation{\Schemas}[1]{\mathfrak{S}_{#1}}
\newnotation{\schema}{\mathcal{S}}
\newnotation{\globalSchema}{\bar{\mathcal{S}}}
\newnotation{\globalInstance}{\bar{I}}
\newnotation{\famI}{\mathcal{I}}
\newnotation{\fact}{f}
\newnotation{\factA}{g}
\newnotation{\facts}{\mathcal{F}}
\newnotation{\factsB}{\mathcal{G}}
\newnotation{\allfacts}{\mathsf{facts}}
\newnotation{\atoms}{\mathcal{A}}
\newnotation{\atomsB}{\mathcal{B}}
\newnotation{\comparisons}{\mathcal{C}}
\newnotation{\dtag}[1]{\mathsf{t}_{#1}}
\newnotation{\dtagk}{\dtag{\node}}
\newnotation{\dtagstar}{\dtag{*}}
\newnotation{\undtag}{\mathsf{t}^{-1}}
\newnotation{\local}[1]{{#1}^{@}}
\renewnotation{\deg}[1]{\mathsf{deg}_{#1}}
\newnotation{\cdeg}[1]{\mathsf{cdeg}_{#1}}
\newnotation{\loc}[1][d]{\mathsf{loc}_{#1}}
\newnotation{\loceq}[1][d]{\equiv_{#1}}
\newnotation{\qr}{\mathcal{Q}}
\newnotation{\qrA}{\mathcal{Q}'}
\newrelation{\from}{:\!-}
\newnotation{\body}[1]{\mathsf{body}_{#1}}
\newnotation{\rbody}[1]{\mathsf{rbody}_{#1}}
\newnotation{\cbody}[1]{\mathsf{cbody}_{#1}}
\newnotation{\head}[1]{\mathsf{head}_{#1}}
\newnotation{\bodycomp}[1]{\mathsf{comp}_{#1}}
\newnotation{\At}[3][]{#2\at[#1]#3}
\newnotation{\at}[1][]{{\scriptstyle @}_{\scriptscriptstyle #1}}
\newnotation{\Ifamily}[1][\nw]{(I_\nodek)_{\nodek\in\nw}}
\newnotation{\Qnaive}[1][Q]{Q_{\text{naive}}}
\newnotation{\Tclass}{\mathcal{T}}
\newnotation{\Eclass}{\mathcal{E}}
\newnotation{\Tall}{\Tclass_{\text{all}}}
\newnotation{\Tdf}{\Tclass_{\text{df}}}
\newrelation{\Tos}{\Tclass_{\text{1s}}}
\newnotation{\Tbd}[1][]{\Tclass^{#1}_{\text{bc}}} %
\newnotation{\Tbc}[1][]{\Tclass^{#1}_{\text{bc}}}
\newnotation{\Tbg}[1][]{\Tclass^{#1}_{\text{bg}}}
\newnotation{\Twbd}[1][]{\Tclass^{#1}_{\text{wbc}}}
\newnotation{\Tmid}{\Tclass_{\text{mid}}}
\newnotation{\Eall}{\Eclass_{\text{all}}}
\newnotation{\Ebd}[1][]{\Eclass^{#1}_{\text{bc}}}
\newnotation{\glob}{\natural}
\newnotation{\globR}{R^\glob}
\newnotation{\globS}{S^\glob}
\newnotation{\globT}{T^\glob}
\newnotation{\nvars}{\mathsf{nvar}}
\newnotation{\dvars}{\mathsf{dvar}}
\newnotation{\xvars}{\mathsf{var}}
\newnotation{\svars}{\mathsf{svar}}
\newnotation{\jvars}{\mathsf{jvar}}
\newnotation{\vars}{\mathsf{var}}
\newnotation{\data}{\mathsf{data}}
\newnotation{\context}[1]{\mathsf{cont}_{#1}}
\newnotation{\network}{\mathcal{N}}
\newnotation{\nw}{\network}
\newnotation{\node}{\kappa}
\newnotation{\nodeA}{\lambda}
\newnotation{\nodeB}{\mu}
\newnotation{\nodes}{\calK}
\newnotation{\nodesA}{\calL}
\newnotation{\nodesB}{\calM}
\newnotation{\nodeBseq}{{\boldsymbol\mu}}
\newnotation{\myglobal}{\text{global}}
\newnotation{\mylocal}{\text{local}}
\newnotation{\nodek}{k}
\newnotation{\nodel}{\ell}
\newnotation{\nodem}{m}
\newnotation{\vark}{\kappa}
\newnotation{\varl}{\lambda}
\newnotation{\varm}{\mu}
\newnotation{\Imp}{\Implication}
\newnotation{\gto}{\to}
\newnotation{\lto}{\rightsquigarrow}
\newnotation{\gcond}{\mathsf{G}}
\newnotation{\lcond}{\mathsf{L}}
\newnotation{\bchase}[2]{\mathsf{chase}^{#2}_{#1}}
\newnotation{\chase}[1]{\mathsf{chase}^{}_{#1}}
\newnotation{\Chase}{\mathsf{chase}}
\newnotation{\apply}[1]{\mathsf{apply}^{}_{#1}}
\newnotation{\base}[1][\chseq]{\mathsf{base}_{#1}}
\newnotation{\ibase}[1][\chseq]{\mathsf{ibase}_{#1}}
\newnotation{\irbase}[1][\chseq]{\mathsf{base}^\circ_{#1}}
\newnotation{\extbase}[1][\chseq]{\mathsf{base}^\infty_{#1}}
\newnotation{\iextbase}[1][\chseq]{\mathsf{ibase}^\infty_{#1}}
\newrelation{\predec}[1][\chseq]{\to_{#1}}
\newrelation{\predecstar}[1][\chseq]{\to^*_{#1}}
\newnotation{\Pred}{\mathsf{Pred}}
\newnotation{\transit}[1][c]{\xrightarrow{#1}}%
\newnotation{\BNI}[1]{\mathcal{B}_{#1}}
\newnotation{\GenA}{\Gen_1}
\newnotation{\GenB}{\Gen_2}
\newnotation{\ColA}{\Col_1}
\newnotation{\ColB}{\Col_2}
\newnotation{\Qex}{Q_\exists}
\newnotation{\Qall}{Q_\forall}
\newnotation{\blank}{\texttt{\char32}}
\newnotation{\lmark}{\triangleright}
\newnotation{\rmark}{\triangleleft}
\newnotation{\lmove}{\leftarrow}
\newnotation{\rmove}{\rightarrow}
\newnotation{\stay}{\downarrow}
\newnotation{\SigmaTrans}{\Sigma_{\text{trans}}}
\newnotation{\SigmaAcc}{\Sigma_{\text{acc}}}
\newnotation{\relStateAll}{\relState_\forall}
\newnotation{\relStateEx}{\relState_\exists}
\title{Distribution Constraints:\newline The Chase for Distributed Data}
\titlerunning{Distribution Constraints: The Chase for Distributed Data}
\author{Gaetano Geck}{Dortmund University}{gaetano.geck@tu-dortmund.de}{https://orcid.org/0000-0002-8946-9440}{}
\author{Frank Neven}{Hasselt University and transnational University of Limburg}{frank.neven@uhasselt.be}{https://orcid.org/0000-0002-7143-1903}{}
\author{Thomas Schwentick}{Dortmund University}{thomas.schwentick@tu-dortmund.de}{https://orcid.org/0000-0002-1062-922X}{}
\authorrunning{G.~Geck, F.~Neven and T.~Schwentick}
\keywords{tuple-generating dependencies, chase, conjunctive queries, distributed evaluation}
\begin{document}

\maketitle

\begin{abstract}
This paper introduces a declarative framework to specify and
reason about distributions of data over computing nodes in a
distributed setting. More
specifically, it proposes distribution constraints which are tuple and
equality generating dependencies (tgds and egds) extended with node
variables ranging over computing nodes. In particular, they can
express co-partitioning constraints and constraints about
range-based data distributions by using 
comparison atoms. The main technical
contribution  is the study of the implication problem of
distribution constraints. While implication is undecidable in general,
relevant fragments of so-called data-full constraints are exhibited for
which the corresponding implication problems are complete for
\EXPTIME, \PSPACE and \NP. These results yield bounds on deciding parallel-correctness
for conjunctive queries in the presence of distribution constraints. 
\end{abstract}

\section{Introduction}
\label{sec:Introduction}
Distributed storage and processing of data has been used and studied since the 1970s and became more and more important in the recent past.
One of the most fundamental questions in distributed data management is the following: \emph{how should data be replicated and partitioned over the set of computing nodes?} It is paramount to answer this question well as the placement of data determines the reliability of the system and is furthermore critical for its scalability including the performance of query processing.

On the one hand, despite the importance of this question and decades of research, the placement strategies remained rather simple for a long time: horizontal or vertical fragmentation of relations---or hybrid variants thereof~\cite{DBLP:books/daglib/0029498}. These placement strategies often require a reshuffling of the data for each binary join in the processed query which are commonly based on a range or hash partitioning of the relevant attributes. Recently, however, more elaborated schemes of data placement like co-partitioning, single hypercubes (for multiway-joins) or multiple hypercubes (for skewed data) gained some attention~\cite{DBLP:journals/pvldb/ShuteVSHWROLMECRSA13,DBLP:journals/pvldb/SamwelCHGVYPSTA18,DBLP:conf/sigmod/ZamanianBS15,DBLP:conf/pods/AbiteboulBGA11,DBLP:journals/jacm/BeameKS17,DBLP:conf/pods/KetsmanS17}.

On the other hand, there is a long tradition in studying tuple- and equality-generating dependencies (tgds/egds) as a simple but versatile tool to describe relationships among relational data. The research on these dependencies focuses mainly on the implication\footnote{Does a dependency~$\tau$ always hold if a set~$\Sigma$ of dependencies is satisfied, $\Sigma\models\tau$?} problem. More precisely, since the implication problem in general is undecidable, several fragments have been considered in an attempt to locate the boundaries of decidability and complexity. Commonly, these fragments are defined by syntactical restrictions on the sets of dependencies, like weak acyclicity, weak guardedness, stickiness, wardedness, \dots~\cite{DBLP:journals/tcs/FaginKMP05,DBLP:journals/jair/CaliGK13,DBLP:journals/pvldb/CaliGP10,DBLP:journals/corr/abs-1809-05951}.

\emph{It seems desirable to connect these two strands of research.}
Being able to reason about the placement of data offers database management systems additional optimisation potential, for instance, when it comes to the placement of new data or when the cost of a query execution plan is estimated. In the latter case, a reshuffling phase, which often dominates the processing time, can sometimes be omitted completely because the query at hand is already parallel-correct\footnote{Parallel correctness is a basic notion of distributed query evaluation~\cite{DBLP:journals/jacm/AmelootGKNS17}, also addressed in Section~\ref{sec:appl:dtgd}.} under the current distribution.

\emph{The goal of this paper is to make a first step towards a connection between existing partitioning schemes and well-known reasoning frameworks.}
With this intent, we introduce \emph{distribution constraints}---a variant of tgds/egds that is specifically geared towards distributed data---and study its implication problem. In particular, we identify fragments of distribution constraints by the complexity of the associated implication problem. Although the implication problem is certainly not the only---and, admittedly, not the most innovative---problem related to reasoning about distributed data, it is yet a basic problem that is likely to have connections to other algorithmical questions centering around this topic (like \emph{how to derive a new distribution for the next query, making use of the current distribution?}).

\medskip
\noindent
\textbf{Contributions.}
We start by defining \emph{distribution constraints} as tgds and egds with atoms of the form $R(x,y)\at\node$, in which \node is understood as a node variable with the intended meaning that fact $R(x,y)$ is at node \node. To achieve decidability, we further require that distribution tgds are \emph{data-full}, i.e., only node variables may be quantified existentially. %

We demonstrate that distribution constraints can express several common distribution schemes, incorporating range and hash partitionings~\cite{DBLP:books/daglib/0029498}, 
co-partitionings~\cite{50905,Fushimi:1986:OSS:645913.671448},
hierarchical partitionings (as used in Google's F1~\cite{DBLP:journals/pvldb/ShuteVSHWROLMECRSA13,DBLP:journals/pvldb/SamwelCHGVYPSTA18}), predicate-based reference partitionings~\cite{DBLP:conf/sigmod/ZamanianBS15}, hypercube distributions~\cite{DBLP:conf/pods/AbiteboulBGA11,DBLP:journals/jacm/BeameKS17}, and multi-round communication.
\begin{example}
	As an example, consider the following set of distribution tgds, describing a \enquote{derived horizontal} fragmentation~\cite{DBLP:books/daglib/0029498} of relation~\RelMessage based on the \RelRange-predicate and the message's sender id~$s$:
	\begin{equation*}
		\begin{array}{l}
			\RelRange(\ell,u) \to \RelRange(\ell,u)\at\node, \\
			\RelMessage(s,r) \to \RelMessage(s,r)\at\node, \\
			\RelMessage(s,r)\at\node, \RelRange(\ell,u)\at\nodeA, \ell \le s, s \le u \to \RelMessage(s,r)\at\nodeA
		\end{array}
	\end{equation*}
	 The first two rules enforce that, for every \RelRange- and every \RelMessage-fact, there is a responsible node (indicated by the node variable $\kappa$). The third rule ensures that every \RelMessage-fact can be found at every node whose \RelRange-bounds match the sender id. We remark that the above set of constraints implies the following distribution tgd:
	\begin{equation*}
		\RelMessage(s_1,r), \RelMessage(s_2,r),
		\RelRange(\ell,u),
		\ell \le s_1, s_1 \le u,
		\ell \le s_2, s_2 \le u
		\to
		\RelMessage(s_1,r)\at\node, \RelMessage(s_2,r)\at\node,
	\end{equation*}
	which states that all pairs of messages with the same receiver can be found at a common node if their senders fall in the same range. 
    In other words, if the above set of constraints is satisfied over a distributed instance, then so is the just mentioned dtgd.
	\qed 
\end{example}\label{example:intro}
On the technical level, we show that the implication problem is \EXPTIME-complete for these constraints in general, and we identify classes of distribution constraints where the complexity drops to~\PSPACE or even~\NP and classes where this is not the case. These classes are determined by simple syntactic criteria based on the amount of data associated with node variables.

Since distribution constraints incorporate all \emph{full} tgds (without existential quantification), \EXPTIME-hardness of their implication problem readily follows from an early result by Chandra, Lewis and Makowsky~\cite{DBLP:conf/stoc/ChandraLM81}. However, the latter result relies on the use of relation atoms of arbitrarily high arity, while the \EXPTIME-hardness results in this paper already hold for a fixed schema of maximum arity of~3 (or~2, w.r.t.\  data variables).
The corresponding upper bounds are established by an adaptation of the standard chase procedure~\cite{DBLP:journals/tcs/FaginKMP05,Onet12}.

The fragments studied here are defined depending on, first, the sizes of the node variables' \emph{contexts} (the data variables occuring together with the node variable in some atom) and, second, on the distinction of data-collecting tgds and node-creating tgds (without/with existentially quantified node variable in the head). For a fixed integer~$b$, a node variable has \emph{bounded context} if its context size is at most~$b$. Thanks to the obvious relationship between distribution constraints and standard constraints,  the complexity results in this paper can also be viewed as results on fragments of standard tgds/egds.

\medskip
\noindent
\textbf{Related work.}
There is a rich literature on restrictions of (sets of) tgds that yield a decidable (general and finite) implication problem~\cite{DBLP:books/aw/AbiteboulHV95,Onet12}. We discuss how our distributed constraints relate to classical constraints in Section~\ref{sec:constraints}.  Restricting the use of existential variables in tgds is a common approach to define fragments of tgds that yield a decidable implication problem. Interestingly, the rather simple restriction to data-full dtgds studied here, is orthogonal to prominent examples like weak acyclicity, weak guardedness, stickiness and wardedness~\cite{DBLP:journals/tcs/FaginKMP05,DBLP:journals/jair/CaliGK13,DBLP:journals/pvldb/CaliGP10,DBLP:journals/corr/abs-1809-05951}.

Dependencies with arithmetic comparisons have been used in the context of Data Exchange \cite{AfratiLP08,CateKO13}. However, these papers  mainly study full and weakly acyclic tgds and are thus orthogonal to our framework. There is further work on dependencies with stronger arithmetic constraints, e.g. \cite{MaherS96,BaudinetCW99,DouC13,ArtaleKKRWZ17}.  

Declarative specifications for distributed data have also been studied before.
Notable examples are Webdamlog and the already mentioned Data Exchange setting (which can be seen as a restricted form of distribution constraints with a global and a single local database). We refer to the book \cite{ArenasBLM2014} for a relatively recent overview of Data Exchange.

Our notation $R(x)\at\node$ for distributed atoms resembles that of Webdamlog, $R\at\node(x)$, a dialect of datalog that was designed for distributed data management.\footnote{Annotated atoms have already been used before in Datalog dialects. For instance, in Dedalus~\cite{DBLP:conf/datalog/AlvaroMCHMS10}, where they describe timestamps.} Besides implementing a system~\cite{DBLP:journals/corr/abs-1304-4187,DBLP:conf/sigmod/MoffittSAM15} based on this dialect, the theoretical research on this language has mostly focussed on establishing a hierarchy among some of its fragments in terms of their expressiveness~\cite{DBLP:conf/pods/AbiteboulBGA11}.
Neglecting the notational similarities, there seems to be no overlap between the research on Webdamlog---with its fixpoint evaluation mechanism (which even allows facts to vanish)---and the results on distribution constraints that we present in this paper. Particularly, Webdamlog seems to prohibit existential quantification of node variables and assumes, accordingly, that the number of nodes is explicitly fixed with the input.  Distribution constraints, in contrast, \emph{do allow} existential quantification of node variables, which affects the modeling capabilities and the complexity of the reasoning process.

\medskip
\noindent
\textbf{Organisation of this paper.}
After providing the necessary preliminaries in Section~\ref{sec:Preliminaries}, we formally define distribution constraints in Section~\ref{sec:constraints}, compare them with classical constraints, and give examples of their versatility. %
In Section~\ref{sec:reasoning}, we define the implication problem and extend the standard chase to distribution constraints.  In Section~\ref{sec:complexity}, we address the complexity of the implication problem and, finally, conclude in Section~\ref{sec:Conclusion}.  

\section{Preliminaries}
\label{sec:Preliminaries}

In this section, we fix our notation for the basic concepts of this paper. Specific definitions for our framework are given in Section~\ref{sec:constraints}.

\subsection{Databases and queries}
Let~\dom and~\var be disjoint infinite sets of \emph{data values} and \emph{data variables}, respectively. For simplicity, we do not distinguish between different data types and  assume that \dom is linearly ordered. We denote data variables as usual with $x$, $y$, $z$, \ldots.
A \emph{schema} is a set~\schema of relation symbols, where each relation symbol $R \in \schema$ has some fixed arity $\ar(R)$. We write $\ar(\schema)$ for the \emph{maximum} arity $\ar(R)$ of any $R\in\schema$.
A \emph{relation atom over~\schema} is of the form $R(t_1,\dots,t_k)$ where~$R$ is a relation symbol of arity~$k$ and $t_1,\dots,t_k \in \dom\cup\var$. A relation atom is a \emph{fact}  if $t_1,\dots,t_k \in \dom$.   A \emph{comparison atom} is of the form $t < t'$ or $t \le t'$ with $t,t' \in \dom\cup\var$. The set of data values occuring in a set~\atoms of (relational or comparison) atoms is denoted $\adom(\atoms)$. Similarly, the set of variables in~\atoms is denoted $\vars(\atoms)$.
Instances are finite sets of facts over a given schema~\schema.

A \emph{valuation} for a set $\atoms$ of atoms is a mapping $V: \vars(\atoms) \to \dom$. It \emph{satisfies~$\atoms$ on instance~$I$} if $V(A)\in I$ holds for each relation atom $A\in\atoms$ and $V(t)\theta V(t')$ holds for each comparison atom $t \theta t'$ in $\atoms$. We often denote by $V$ also the extension of $V$ to $\dom$ defined by $V(a)=a$ for every $a\in\dom$.

A \emph{conjunctive query~$Q$} is of the form
\begin{math}
	S(x_1,\dots,x_m) \from R_1(\bz_1),\dots,\allowbreak R_\ell(\bz_\ell),
\end{math}
where the \emph{head} of the query, $\head{Q}=S(x_1,\dots,x_m)$, has a relation atom $S$ not in~$\schema$ and its \emph{body}, $\body{Q}=\{R_1(\bz_1),\dots,R_\ell(\bz_\ell)\}$, is a finite set of relation atoms over~\schema. In the following, all queries are assumed to be \emph{safe}, that is, each variable in the head occurs at least once in some body atom.
If $V$ is a valuation that satisfies $\body{Q}$, we say that~$V$ \emph{derives} fact~$V(\head{Q})$. 
The \emph{result~$Q(I)$} of query~$Q$ on instance~$I$ is the set of all derived facts.

\subsection{Dependencies}

A \emph{tgd} $\sigma $ is of the form $\atoms,\comparisons\to\atoms'$, for sets $\atoms,\atoms'$ of relation atoms and a set $\comparisons$ of comparison atoms with $\vars(\comparisons)\subseteq\vars(\atoms)$. {Here, $\atoms'$ form its \emph{head}, and $\atoms,\comparisons$ its \emph{body}, denoted $\head{\sigma}=\atoms'$ and $\body{\sigma}=\atoms\cup\comparisons$, respectively.} We refer to $\atoms$ by $\rbody{\sigma}$. %
The tgd is called \emph{full} if $\vars(\atoms') \subseteq \vars(\atoms)$. An instance~$I$ \emph{satisfies} a tgd~$\sigma$ if, for every valuation~$V$ of~$\body{\sigma}$ that satisfies $\body{\sigma}$ on $I$, there is an extension~$V'$ onto~\head{\sigma} that satisfies $\head{\sigma}$ on $I$.

An \emph{egd} $\sigma$ is of the form $\atoms,\comparisons\to{x=y}$,  for a set $\atoms$ of relation atoms and a set $\comparisons$ of comparison atoms with $\vars(\comparisons)\cup\{x,y\}\subseteq\vars(\atoms)$.   An instance~$I$ \emph{satisfies} an egd~$\sigma$ if $V(x)=V(y)$ for every valuation~$V$ that satisfies $\body{\sigma}$ on $I$, where $\body{\sigma}$ is defined as for tgds.

Sets of dependencies are satisfied by an instance if each dependency in the set is satisfied. Satisfaction of a single dependency~$\sigma$ or a set~$\Sigma$ of dependencies by some instance~$I$ is denoted $I \models \sigma$ and $I \models \Sigma$, respectively.

A dependency~$\tau$ is \emph{implied} by a set~$\Sigma$ of dependencies, denoted $\Sigma \models \tau$, if $I \models \Sigma$ implies $I \models \tau$ for every instance~$I$. For more precise statements, we can mention the actual domain in our notation. For example, we write $I \models_\Nat \tau$ if implication holds for all (finite) instances over \Nat.

We use the terms \emph{dependencies} and \emph{constraints} interchangeably. 

\subsection{Distributed Databases}

We model a \emph{network} of database servers as a finite set $\nw$ of \emph{nodes}  and we denote its \emph{size} by $|\nw|$. We usually denote nodes by $\nodek$ and $\nodel$.
A \emph{distributed instance} $D=(G,\dist)$ consists of a \emph{global instance}~$G$ and a family $\dist=\Ifamily$ of \emph{local instances}, one for each node of $\nw$, such that $\bigcup I_\nodek \subseteq G$.
We denote $G$ by $\myglobal(D)$ and $\Ifamily$ by $\mylocal(D)$.

We note that distributions allow redundant placement of facts, which is often desirable. Furthermore, it is not necessary to place all facts of the global instance on some node. A fact~\fact is \emph{skipped}\footnote{We note that allowing skipped facts makes the framework more flexible. They can be disallowed by simple distribution constraints, as discussed in Subsection~\ref{subsec:globallocal}.} by $D$ if $\fact \in \myglobal(D)$ but $\fact$ does not occur  in $\mylocal(D)$.

We write $\At[D]{f}{\nodek}$ to denote that a fact $f$ occurs at some node $\nodek$, that is, $f\in I_\nodek$. We drop $D$ if it is clear from the context. We call $\At[D]{f}{\nodek}$ a \emph{distributed fact}.
Sometimes we say that a set of facts \emph{meet} in $D$ when they all occur in the same local instance.

\begin{example}
  \label{ex:RepresentationOfDistributions}%
 Consider a network $\nw=\{1,2\}$ of size 2 and a distributed instance $D=(G,\{I_1,I_2\})$ with $G = \{R(a,b),S(b),S(c),S(d)\}$,
 $I_1=\{R(a,b),S(b)\}$ and $I_2=\{S(b),S(c)\}$. Then fact $S(d)$ is skipped by $D$. The instance $D$ can also be represented by the distributed
  facts $\{
 R(a,b),S(b),S(c),S(d),
 \At{R(a,b)}{1}, \At{S(b)}{1}, \At{S(b)}{2}, \At{S(c)}{2}\}$. \qed
\end{example}

\subsection{Parallel-correctness}
Building on the computation model of massively parallel communication (MPC)~\cite{DBLP:journals/jacm/BeameKS17}, the naive evaluation of a conjunctive query~$Q$ over a distributed instance $D$ %
evaluates $Q$ separately for each local instance in $\mylocal(D)$.
For $\mylocal(D)=\Ifamily$, we write $\Qnaive(D)$ for $\bigcup_{\nodek\in\nw} Q(I_\nodek)$. 
Following~\cite{DBLP:journals/jacm/AmelootGKNS17}, we say that a query~$Q$ is \emph{parallel-correct} on $D$, if the naive evaluation produces the correct result, i.e., if $\Qnaive(D)=Q(\myglobal(D))$. 

\section{Distribution constraints}
\label{sec:constraints}

We first introduce our framework for distribution
constraints and afterwards give examples for its use. 

\subsection{Definition}\label{sec:def:dtgd}
 Let \nvar be an
infinite set of {\em node variables} disjoint from \dom and \var. 
A \emph{distributed atom} $\At{A}{\vark}$ consists of a relation atom $A$ and a \emph{node variable} $\vark$ in \nvar.
Recall that we refer to the variables of $A$ as $\emph{data
variables}$. For a set of (distributed) atoms $\calA$, we denote by
$\nvar(\calA)$ the set of node variables occurring in atoms in
$\calA$. For a set $\calA$ of relation atoms and a node variable $\vark$,
$\At{\calA}{\vark}$ denotes the set $\{\At{A}{\vark} \mid
A\in\calA\}$.

\emph{Distribution tgds (dtgds)} are defined just as tgds but they can
additionally have distributed atoms in their body and their head.
\emph{Distribution egds (degds)} are defined just as egds but can have
distributed atoms in their body. We do not allow node variables in
comparison atoms ({but we do allow them in the equality atom of a head in the case of degds}). A degd $\atoms,\comparisons \to A'$ is \emph{node-identifying} if the equality atom~$A'$ refers to node variables only and \emph{value-identifying} if,  instead,~$A'$ refers to data variables
	only. We do \emph{not} consider equality atoms where a
node variable is identified with a data variable.
We are particularly interested in \emph{data-full} dtgds, for which  the data variables in the head all occur in the body.

By $\Tall$ we denote the class of \emph{all} dtgds and by $\Tdf$  the
class of {data-full} dtgds. By $\Eall$ we denote the class of \emph{all} degds.

Satisfaction of dtgds and degds is defined in the obvious way with
generalised  valuations that may additionally map node variables to nodes.
For a distributed atom $A'=\At{A}{\kappa}$, we write $V(A')\in D$, if $V(A)\at{}V(\kappa)$ is a distributed fact of $D$.
For a relation atom $A$, we write $V(A)\in D$ if $V(A)\in \myglobal(D)$. 

\begin{example}
	Given a schema~\schema with binary relation symbols~$R$ and~$S$, the following dtgd 
	\begin{math}
		\sigma = R(x,y), S(x,y) \to \At{R(x,y)}{\node}, \At{S(x,y)}{\node}
	\end{math}
	is satisfied on a distributed instance $D$ if, whenever $\myglobal(D)$ contains two facts $R(a,b)$ and~$S(a,b)$, for arbitrary data values $a,b \in \dom$, they meet in some local instance. \qed %
\end{example}

Below, in~Section~\ref{sec:more:exam:dtgd}, we illustrate how distribution constraints can model %
global, local and global-to-local constraints.

\begin{example} 
	The dtgd $\At{E(x,y)}{\kappa},\At{E(y,z)}{\kappa},E(z,x) \to \At{E(z,x)}{\kappa}$
	stipulates that every computing node has `complete' information w.r.t.\ open triangles on a binary relation~$E$. 
	That is, whenever a node contains two legs of a triangle, it also contains the closing leg if it exists in the global database. \qed
\end{example}

Clearly, the differentiation between node and data variables in
dtgds/degds can be seen as just syntactic sugar for standard
relational schemas. The above restrictions (at most one node variable,
at a fixed position, data-fullness) can then be seen as restrictions
of classical constraints. In this sense, a dtgd like $R(x)\at\node, S(x)\at\nodeB \to T(x)\at\node$ could be rewritten into a standard tgd of the form $R(\node,x), S(\nodeB,x) \to T(\node,x)$. The restriction to data-full dtgds thus translates to the restriction of existential quantification to these first attributes.

However, as the following example illustrates, our restriction to existential quantification of node
variables does not translate into any of the restricted fragments
with low complexity, which we
are aware of. 
\begin{example}
  Let $\Sigma$ consist of a node-creating
  dtgd $R(x)\at\node \to T(x)\at\nodeB$ and a data-collecting dtgd
  $T(x)\at\node, T(y)\at\node, T(z)\at\nodeB, T(w)\at\nodeB \to
  U(x,y,z,w)\at\node$. The corresponding set of standard tgds%
  \begin{eqnarray*}
          R(\node,x) & \to & T(\nodeB,x),
    \\
          T(\node,x), T(\node,y), T(\nodeB,z), T(\nodeB,w) & \to & U(\node,x,y,z,w),
  \end{eqnarray*}
  is neither sticky nor weakly guarded nor warded.\footnote{The
    set~$\Sigma'$ is not sticky because the marked variable~\nodeB
    occurs more than once in~$\tau'$. It is not (weakly) guarded
    because a single atom cannot contain both variables~\node
    and~\nodeB that occur in affected positions of~$\tau'$. Finally,
    it is not warded because the dangerous variable~\node appears in
    more than one atom in the body of~$\tau'$.} The set~$\Sigma$ has,
  however, bounded context (and its associated implication problem is
  shown to be in~\NP in Section~\ref{sec:complexity}). 
\end{example}
Furthermore, the set consisting  of $R(x,y)\at\node \to S(x,y)\at\nodeB$ and
    $S(x,x)\at\node \to R(x,x)\at\nodeB$ is not weakly acyclic but
    data-full with
    bounded context.

\subsection{Examples of distribution constraints}
\label{sec:more:exam:dtgd}

In the following, we provide examples illustrating the versatility of distribution constraints.
We begin with an examination of certain uses of distributed atoms. In principle, distributed atoms can be used in the body and in the head of constraints, referring to multiple node variables. Some more restricted uses seem particularly useful however.%

We use the schema
\begin{math}
  \ShrinkSuggestion
   \{
     \Emp(\name,\mytitle),\ 
     \Sal(\mytitle,\allowbreak \salary),\ 
     \Addr(\name,\myaddress)
  \}
\end{math}
as a running example for the remainder of this section.

\subsubsection{Global dtgds and degds}

We call distribution constraints \emph{global} if they do not contain any distributed atom.
These constraints refer to the global instance of a distributed database only---irrespective of the local databases.
Formally, a dtgd (resp., degd) $\sigma$ is a \emph{global constraint} if $\sigma$ is a tgd (resp., egd).

\begin{example}\label{ex:schema}
  The following constraints are examples of a global dtgd and a
  global degd:
  \begin{math}
    \ShrinkSuggestion
    \Emp(n,t) \to \Sal(t,s),
  \end{math}
  and
  \begin{math}
    \ShrinkSuggestion
    \Sal(t,s), \Sal(t,s')  \to  s=s'.
  \end{math}
  Together they specify that every employee has a unique salary. \qed
\end{example}

\subsubsection{Local dtgds and degds}
Distribution constraints where every relation atom is a distributed atom and where all these atoms refer to the same node variable are called \emph{local}. These constraints specify conditions that hold on \emph{every} local instance, viewed on its own---irrespective of the global instance or other local instances.
\begin{example}
  The following is a local dtgd expressing that whenever a fact 
  $\Emp(a,b)$ occurs at node $\nodek$ there is a fact
  $\Sal(b,c)$, for some element $c$ in \dom, that occurs at node $\nodek$ as well:
  \begin{math}
    \ShrinkSuggestion
    \At{\Emp(x,y)}{\kappa}\to  \At{\Sal(y,z)}{\kappa}.
  \end{math}
  The following local value identifying degd expresses that, relative to each node, each employee (name) has a unique address.
  \begin{math}
    \ShrinkSuggestion
    \At{\Addr(x,y)}{\kappa}, \At{\Addr(x,y')}{\kappa}\to  y=y'.
  \end{math}
  \qed
\end{example}

\subsubsection{Global-Local dtgds}\label{subsec:globallocal}
Lastly, we call a dtgd \emph{global-local} if none of its body atoms is distributed while all its heads atoms are distributed and refer to the same node variable.

\begin{example}
  The global-local constraint
  \begin{math}
    \ShrinkSuggestion
    \Emp(x,y), \Sal(y,z) \to  \At{\Emp(x,y)}{\kappa}, \At{\Sal(y,z)}{\kappa}
  \end{math}
  expresses that if there is an $\Emp$-fact and a $\Sal$-fact with the same $\mytitle$-attribute then these facts meet at some node. This means that the join condition between $\Emp$ and $\Sal$ induced by the schema is maintained in the horizontal decomposition of the global database.\qed
\end{example}
                    
Global-local constraints can also express that the database has no
skipped facts, i.e., facts $\fact \in \myglobal(D)$ with $\fact \notin
\mylocal(D)$. To this end, for each relation symbol $R$ a global-local
constraint $R(x_1,\ldots,x_{\ar(R)})\allowbreak \to
R(x_1,\ldots,x_{\ar(R)})\at\vark$ can be added. Indeed, it is this
ability of distributed constraints to disallow skipped facts that made us allow them in
first place. For a
schema \schema, we denote by $\calU(\schema)$ the set of all
global-local constraints that express that there are no skipped
facts.%
                    
By symmetry also local-global constraints can be defined. An example would be the dtgd
$\At{\Emp(x,y)}{\kappa}, \At{\Sal(y,z)}{\kappa} \to \Addr(x,z')$ (even though for this particular schema, the constraint is rather contrived).
Nevertheless, local-global constraints allow to state explicitly that
every local fact is also a global fact, by stating
$\At{R(x_1,\ldots,x_m)}{\kappa} \to R(x_1,\ldots,x_m)$, for every
relation $R$ (with $m=\ar(R)$).

\subsection{Applications of distribution constraints}
\label{sec:appl:dtgd}

We give some applications of distribution constraints like 
defining range, hash and co-partitionings and testing for  parallel-correctness. In the appendix (\ref{sec:more:appl:dtgd}), we illustrate how hypercube distributions can be incorporated and discuss query answering and multi-round query evaluation.
The paper~\cite{DBLP:conf/icdt/NevenSSV19}  further explores the use of distribution constraints to model distributed evaluation strategies for Datalog in the context of parallel-correctness and parallel-boundedness in the multi-round MPC model.%

\subsubsection{Range and hash partitioning}

Distribution constraints can easily incorporate the commonly used range and hash partitionings~(see for example \cite{DBLP:books/daglib/0029498,DBLP:books/oreilly/Kleppmann2014}). Example~\ref{example:intro} already illustrates range partitionings.

The following two distribution constraints define a hash partitioning of the relation $\Emp(\text{name},\text{dept})$ on the attribute department: 
\begin{eqnarray*}
{\Emp(n,d)}\to  \At{\Emp(n,d)}{\kappa}\\
\At{\Emp(n,d)}{\kappa},{\Emp(n',d)} \to \At{\Emp(n',d)}{\kappa}
\end{eqnarray*}
The first rule enforces that every Emp-tuple occurs at a node while the second rule ensures that Emp-tuples within the same department are placed together. The above approach where hash functions are implicit should be contrasted with the modeling of Hypercube distributions, discussed in the appendix (\ref{app:hc}), where hash functions are made explicit.

\subsubsection{Co-partitioning}

\begingroup
A popular way to avoid expensive remote join operations---already used in early parallel systems---is to co-partition tables on their join key~\cite{50905,Fushimi:1986:OSS:645913.671448}.
Generalizations of the latter technique where co-partitioning is determined by more complex join predicates have been shown to be effective in modern systems as well \cite{DBLP:journals/pvldb/SamwelCHGVYPSTA18,DBLP:journals/pvldb/ShuteVSHWROLMECRSA13,DBLP:conf/sigmod/ZamanianBS15,DBLP:conf/icde/RodigerMURK014}. 

Consider, for instance, the following (simplified) relations from the TPC-H schema~\cite{tpc-h}: $\Lineitem(\linekey,\orderkey)$, $\Orders(\orderkey,\custkey)$, and $\CCustomer(\custkey,\cname)$.

Zamanian, Binnig, and Salama~\cite{DBLP:conf/sigmod/ZamanianBS15} exemplify the following co-partitioning scheme: \Lineitem is hash-partitioned by linekey, \Orders tuples are co-partitioned with \Lineitem tuples with the same orderkey, and \CCustomer tuples are co-partitioned with \Orders tuples with the same custkey.
As a consequence, the join
$\text{\Lineitem} \bowtie \text{\Orders} \bowtie \text{\CCustomer}$ can be evaluated without expensive remote joins.
We note that the work in \cite{DBLP:conf/sigmod/ZamanianBS15} is by no means restricted to single-round or communication-free evaluation of queries. Knowledge of co-location of tuples is used to rewrite query plans and to determine those parts that can be evaluated without additional reshuffling. Partitionings are also not considered to be static but should adapt over time to changes in workload and the data~(e.g., \cite{DBLP:journals/pvldb/LuSJM17,DBLP:journals/pvldb/SerafiniTEPAS16}). 

\begin{example}
	\newnotation{\xc}{c}
	\newnotation{\xl}{\ell}
	\newnotation{\xn}{n}
	\newnotation{\xo}{o}
	\newnotation{\xoB}{o'}
	\rm \label{example:lineitem}
	The following distribution constraints define the co-partitioning scheme mentioned above:
	\begin{eqnarray}
		\Lineitem(\xl,\xo) \to \At{\Lineitem(\xl,\xo)}{\kappa} \label{meq1}\\
		\Orders(\xo,\xc)\to \At{\Orders(\xo,\xc)}{\kappa}\label{meq11}\\
		{\CCustomer(\xc,\xn)}\to\At{\CCustomer(\xc,\xn)}{\kappa} \label{meq111}\\
		\Lineitem(\xl,\xo), \At{\Lineitem(\xl,\xoB)}{\kappa} \to \At{\Lineitem(\xl,\xo)}{\kappa}\label{meq2}\\
		\At{\Lineitem(\xl,\xo)}{\kappa}, \Orders(\xo,\xc)\to \At{\Orders(\xo,\xc)}{\kappa}\label{meq3}\\
		\At{\Orders(\xo,\xc)}{\kappa},{\CCustomer(\xc,\xn)}\to\At{\CCustomer(\xc,\xn)}{\kappa} \label{meq4}
	\end{eqnarray}
	Basically, Constraint~\eqref{meq1} expresses that every \Lineitem fact in the global database occurs at some node; similarly for Constraints~\eqref{meq11} and~\eqref{meq111}. Constraint~\eqref{meq2} then expresses that \Lineitem facts are hashed on the first attribute: for every item~$\ell$ with order~$o'$ stored on some server, every other order of that item is stored there too. Constraint~\eqref{meq3} expresses that $\Orders(\xo,\xc)$ facts are co-located with $\Lineitem(\xl,\xo)$ facts, while Constraint~\eqref{meq4} expresses that $\CCustomer(\xc,\xn)$ facts are co-located with $\Orders(\xo,\xc)$ facts. All together the distribution constraints imply that the join condition between the three relations is maintained in the horizontal decomposition of the global database.  \hfill \qed
\end{example}
\endgroup

\subsubsection{Hierarchical partitioning schemes}
\label{sec:hierar}
\newnotation[relation]{\Customer}{Cust}
\newnotation[relation]{\Supplier}{Supp}
\newnotation[relation]{\Campaign}{Camp}
\newnotation[relation]{\AdGroup}{AdGrp}
\newnotation[text]{\custId}{custId}
\newnotation[text]{\supId}{supId}
\newnotation[text]{\campId}{campId}
\newnotation[text]{\nation}{natKey}
Recent database systems like Google's F1~\cite{DBLP:journals/pvldb/SamwelCHGVYPSTA18,DBLP:journals/pvldb/ShuteVSHWROLMECRSA13} use hierarchical partitioning schemes to provide performance while ensuring consistency under updates. Hierarchical partitioning is a variant of the \emph{co-partitioning} approach~\cite{DBLP:conf/sigmod/SundarmurthyKN18}, introduced as \emph{predicate-based reference partitioning}~\cite{DBLP:conf/sigmod/ZamanianBS15}. This approach allows to formulate a hashing condition for a relation~$S$, given that a relation~$R$ is already distributed, in the following style: first, every $S$-fact has to be distributed, and second, if an $S$-fact joins with an $R$-fact on a predefined set of attributes, then the $S$-fact is distributed to every node where such an $R$-fact exists.

This is easily modeled by the following dtgds:
\begin{eqnarray*}
	S(\bz) \to S(\bz)\at{\vark}, \\
	R(\by)\at{\vark}, S(\bz) \to S(\bz)\at{\vark},
\end{eqnarray*}
where variables $\by$ and~$\bz$ share some common variables $x_1,\dots,x_n$ representing the join predicate.
Notice that Example~\ref{example:lineitem} follows this scheme. Another example, illustrating Google's AdWord scenario, is given in Appendix~\ref{app:HierarchicalPartitioning}.

\subsubsection{Parallel-Correctness}

We show that parallel-cor\-rect\-ness of a conjunctive query can be captured by a dtgd but not always by a data-full one. 

\begin{example}
	Let $Q = H(n,s) \gets \Emp(n,t),\Sal(t,s)$
	be a conjunctive query. Then, $Q$ is parallel-correct on a distributed instance $D$ if every fact from $Q(\myglobal(D))$ is derived at some node (due to the monotonicity of CQs, we do not need to check the converse statement). This can be expressed by the dtgd
	\begin{math}
		\ShrinkSuggestion
		\Emp(x,y), \Sal(y,z) \to  \At{\Emp(x,y')}{\kappa}, \At{\Sal(y',z)}{\kappa}. 
	\end{math}
	We note that in this dtgd $\kappa$ and $y'$ occur only in the head.
	So, this dtgd is \emph{not} data-full. \qed    
\end{example}

\section{Reasoning}
\label{sec:reasoning}

We consider the implication problem for distribution constraints in Section~\ref{sec:implication}, and adapt the chase to degds and data-full dtgds in Section~\ref{sec:chase}, as a means to solve it.

\subsection{The implication problem}
\label{sec:implication}

We stress that the implied dependency $\tau$ in the definition below, is not required to belong to the class $\class$. That is, $\tau$ can be an arbitrary distribution constraint.
\begin{definition}
  The \emph{implication problem~$\Implication(\class,\domain)$}, parameterised by a class~\class of dependencies and a domain \domain asks, for a finite set~$\Sigma$ from~\class and a single distribution constraint~$\tau$, whether $\Sigma\models_\domain\tau$. Possible choices for \domain are \Nat, \Int and \Rat. If the choice of \domain does not matter or is clear from the context, we also write ~$\Implication(\class)$.
  For each $\alpha\ge 1$, we denote by $\Implication_\alpha(\class,\domain)$ the restriction of $\Implication(\class,\domain)$ to inputs $(\Sigma,\tau)$ in which the arity of each relation symbol (with respect to data) is at most $\alpha$.  
\end{definition}
Since every tgd is a dtgd and the implication problem for   tgds without comparison atoms is undecidable, we instantly get the following.
\begin{observation}[$\!\!$\cite{DBLP:conf/icalp/BeeriV81,DBLP:conf/stoc/ChandraLM81}]
  $\Implication(\Tall)$ is undecidable.
\end{observation}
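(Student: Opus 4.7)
The plan is to obtain undecidability of $\Implication(\Tall)$ by a direct reduction from the classical implication problem for tgds (without comparison atoms), which is known to be undecidable by the cited works of Beeri--Vardi and Chandra--Lewis--Makowsky.

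The reduction is essentially the identity: every standard tgd is, by definition, a global dtgd (one in which no body or head atom is distributed), so a classical instance $(\Sigma,\tau)$ of the tgd implication problem is already a valid instance of $\Implication(\Tall)$. The substantive task is to verify that implication is preserved in both directions under this embedding. To this end, observe that a global dtgd $\sigma$ is satisfied on a distributed instance $D$ iff $\myglobal(D)\models\sigma$ when $\sigma$ is viewed as an ordinary tgd, simply because all of its atoms are non-distributed and a valuation into $D$ never touches $\mylocal(D)$. From this equivalence, the forward direction is immediate: if $\Sigma\models\tau$ classically and $D\models\Sigma$ in the distributed sense, then $\myglobal(D)\models\Sigma$, hence $\myglobal(D)\models\tau$, hence $D\models\tau$. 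For the backward direction, given any instance $I$, form the trivial distributed instance $D=(I,(I_\nodek)_{\nodek\in\nw})$ over a singleton network $\nw=\{\nodek\}$ with $I_\nodek=\emptyset$; then $\myglobal(D)=I$ and the same equivalence gives $I\models\Sigma \Leftrightarrow D\models\Sigma$ and $I\models\tau \Leftrightarrow D\models\tau$, so any distributed counterexample yields a classical one and vice versa.

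The main obstacle is essentially non-existent: the hard work is already done by the classical undecidability results, and the only verification required is that satisfaction of global dtgds coincides with classical tgd satisfaction at the level of $\myglobal(D)$. As a bonus, since the classical constructions can be carried out over any of $\Nat$, $\Int$ or $\Rat$, the same reduction shows that $\Implication(\Tall,\domain)$ is undecidable for each choice of \domain, justifying the unparameterised notation in the observation.
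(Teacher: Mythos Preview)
Your proposal is correct and follows exactly the paper's approach: the observation is stated as an immediate consequence of the fact that every tgd is a (global) dtgd, so the classical undecidability results of Beeri--Vardi and Chandra--Lewis--Makowsky apply directly. Your additional verification that satisfaction of global dtgds on a distributed instance coincides with classical satisfaction on $\myglobal(D)$ is more explicit than the paper (which simply says ``we instantly get the following''), but the argument is the same.
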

To facilitate automatic reasoning, it thus makes sense to consider restricted
kinds of constraints. An immediate observation is that most of the
examples in Section~\ref{sec:constraints} only use data-full distribution constraints. In the remainder of this paper, we therefore restrict our attention to this class.
\begin{remark}\label{rem:single-head}
A full tgd $\sigma=\atoms,\comparisons \to
\{A'_1,\dots,\allowbreak A'_p\}$ %
can be
transformed into an equivalent set of tgds $\{\atoms,\comparisons \to
A'_1,\dots,\atoms,\comparisons \to A'_p\}$ with a singleton head.%
\footnote{This observation is used also used the context of normalised schema mappings \cite{DBLP:conf/pods/FaginPKT04,DBLP:journals/vldb/GottlobPS11}.}%
In particular, this applies to dtgds \emph{without existential quantification}. Similarly, data-full dtgds \emph{with existentially quantified node variables} can be decomposed into  data-full dtgds with  at most one node variable in their head. We thus assume w.l.o.g.\ in \emph{upper bound proofs} that all dtgds in $\Sigma$ are decomposed in this fashion. \qed
\end{remark}
Since data-full dtgds have at most one node variable in their head, we can distinguish three kinds of data-full dtgds:
\begin{itemize}
	\item  \emph{node-creating dtgds} like $R(x,y) \to R(x,y)\at\vark$, in which the one node variable in their head is  existentially quantified; 
	\item \emph{data-collecting dtgds} like $S(x)\at\vark, T(x)\at\varl \to T(x)\at\vark$, which are dtgds that have one distributed head atom without existential quantification (they collect facts in a local node); and,
	\item \emph{global dtgds} like $R(x,y) \to U(x)$ or $R(x,y)\at\vark, T(x)\at\vark \to S(y)$ that have one head atom without node variable (they contribute
	global facts).\footnote{The term \emph{global dtgd} thus represents a superset of global and local-global dtgds from Section~\ref{sec:more:exam:dtgd}.}%
\end{itemize}
For brevity, we sometimes refer to \emph{generating} and \emph{collecting} dtgds.

We call the unique node variable that occurs in the head of a node-creating or data-collecting dtgd $\sigma$ the \emph{head variable} of $\sigma$.

\subsection{The chase for distribution constraints}
\label{sec:chase}

The classical way for deciding implication of tgds and egds builds on
the \emph{chase} procedure. In the following, we adapt the chase from \cite{DBLP:journals/tcs/FaginKMP05} for distribution 
constraints from $\Tdf$ and $\Eall$.
\begin{definition}[chase step]
  A  dtgd $\sigma$ is  \emph{applicable} to a distributed instance $D$ with  a  valuation~$W$  if $W$ satisfies $\body{\sigma}$ on $D$ and  there exists no valuation~$W'$ for~$\sigma$ identical to~$W$ on~\body{\sigma} such that $W'(\head{\sigma}) \subseteq D$. Furthermore, if $\sigma$ is node-creating then $W(\kappa)$ must be a node $k$ not occurring in $D$, where $\kappa$ is the head variable of~$\sigma$. 
 The \emph{result} $\Chase(c,D)$ of applying $c=(\sigma,W)$ on $D$ is the distributed instance
	$ D' = D \cup W(\head{\sigma})$ and we write $D\transit D'$. 
      \end{definition}%
	  For instance, if a distributed database~$D$ consists of facts $R(a)\at{1}$ and~$S(a,b)\at{2}$, then dependency $\sigma = R(x)\at\vark, S(x,y)\at\varl \to R(x)\at\varm,S(x,y)\at\varm$ is applicable to~$D$, as witnessed by the valuation~$W$ where $W(x,y) = (a,b)$ and $W(\vark,\varl,\varm) = (1,2,3)$. Application leads to $D' = D \cup \{R(a)\at{3}, S(a,b)\at{3}\}$.

If  $\sigma$ is node-creating, we say that
          the chase step \emph{generates} the new node $W(\vark)$
          with an initial set $W(\head{\sigma})$ of facts. If $\sigma$ is data-collecting, we say that
          the chase step \emph{collects} the facts from
          $W(\head{\sigma})$ in node $W(\vark)$.  If $\sigma$ is global, we say that the chase
          step \emph{targets} the global database.  The chase step
          \emph{contributes} the set~$W(\head{\sigma})$ of global
		  facts. 

\begin{definition}[chase sequence]
Let~$\Sigma$ be a set of
distribution constraints and $D$ a distributed instance. A \emph{chase sequence} for $D$ with $\Sigma$ is a sequence $\chseq=D_0,D_1,\ldots$ of distributed instances with $D_0=D$ and $D_i\transit[(\sigma_i,W_i)] D_{i+1}$, for every $i\ge 0$ and some $\sigma_i\in\Sigma$ and valuation~$W_i$. We write  $\Chase(\chseq)$ for the final instance of $\chseq$, if \chseq is finite.
        
        A chase sequence   $\chseq$ \emph{fails}, if there is a degd $\sigma\in\Sigma$ with a head $t=t'$ and a valuation $W$, such that $W$ satisfies  $\body{\sigma}$ on $\Chase(\chseq)$ and $W(t)\not=W(t')$. It        is \emph{successful}, if it is finite, does not fail and $\Chase(\chseq)$ has no applicable chase step. 
\end{definition}

The following easy observation is crucial for our results.
\begin{proposition}\label{prop:chase-terminates}
  For each distributed database $D$ and each set $\Sigma$ of constraints from $\Tdf$ and $\Eall$, there are no infinite chase sequences for $D$. %
\end{proposition}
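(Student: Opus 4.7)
The plan is to bound the total length of any chase sequence by separately bounding the number of node-creating steps and the number of remaining steps. The main lever is data-fullness: since every dtgd in $\Sigma$ has all its head data variables among its body variables, no chase step introduces a fresh data value, so every fact occurring in any~$D_i$ uses data values from the finite set $A \mydef \adom(D) \cup \adom(\Sigma)$. Let $F$ denote the (finite) set of all relation atoms over $\schema$ with data values from~$A$; each local instance and the global instance is a subset of~$F$ at every step, and both grow monotonically along the chase.

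To bound the number of newly created nodes, I would invoke Remark~\ref{rem:single-head} and assume every node-creating dtgd has the form $\body{\sigma} \to A_1\at\kappa,\dots,A_m\at\kappa$ with a single existential head variable~$\kappa$. When such a step with valuation~$W$ creates a fresh node~$k$, define its \emph{initial signature} $\mathsf{init}(k) \mydef \{W(A_1),\dots,W(A_m)\} \subseteq F$ to be the set of facts added to~$k$ at creation. The key claim is that distinct newly created nodes have distinct initial signatures: if nodes $k$ and~$\ell$ were created at times $t < t'$ with $\mathsf{init}(k) = \mathsf{init}(\ell)$, then by monotonicity the signature of~$k$ at time~$t'$ already contains~$\mathsf{init}(\ell)$. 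Consequently, the valuation that would create~$\ell$ can be extended by mapping its existential head variable to~$k$ to give some $W'$ with $W'(\head{\sigma'}) \subseteq D_{t'}$, contradicting the applicability condition. Hence at most $2^{|F|}$ nodes are ever created throughout the chase.

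It remains to bound the number of non-node-creating chase steps. By the applicability condition, each data-collecting or global step must add at least one distributed fact (local or global) that was not already present, because otherwise one could take $W'=W$ as witness. Since the node set has size at most $|\nw| + 2^{|F|}$ and each node hosts at most~$|F|$ local facts (with at most~$|F|$ further global facts), the total inventory of distributed facts that can ever appear is finite, and so is the number of non-node-creating steps. Combining both bounds yields the proposition. Degds need not be treated separately, as the chase sequence advances only through dtgd applications; degds only contribute to the failure condition. The main obstacle is the no-repeat claim for initial signatures: its proof is short but depends crucially on the restricted (rather than oblivious) nature of the chase, on signature monotonicity, and on the single-existential-head-variable shape guaranteed by Remark~\ref{rem:single-head}.
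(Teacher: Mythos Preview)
Your proof is correct and follows essentially the same strategy as the paper's sketch: data-fullness prevents new data values, hence only finitely many relation facts are possible, and the restricted chase cannot repeat work. The paper compresses this into two bullets (``no new data values'' and ``each constraint fires at most once per valuation of data variables''), while you make the argument more rigorous by explicitly separating node-creating from non-node-creating steps and proving the no-repeat claim via distinct initial signatures; your decomposition in fact patches a slight imprecision in the paper's sketch, since ``once per data valuation'' is only literally true for node-creating dtgds, whereas data-collecting dtgds additionally depend on the head node variable and require the prior bound on the node set that you establish first.
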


For classical tgds and egds, to test $\Sigma\models\tau$, the chase is
basically applied to a \enquote{canonical database} $V(\rbody{\tau})$,
for some one-one valuation $V$. Due to comparison atoms, this does not
suffice in our setting.  Instead, we  consider a set of canonical
databases, which allows for all possible linear orders on the
variables of  $\body{\tau}$. It depends on the general domain $\domain$ which
we allow to be one of $\Nat$, $\Int$, $\Rat$.
More precisely, it is defined over a set $\dom(\Sigma,\tau,\domain)$ of data values that
contains all constants of $\Sigma$ and $\tau$ and, between each pair of successive
constants all intermediate values from $\domain$ or as many
intermediate values as there are variables in $\body{\tau}$. The set of
canonical databases then consists  of all databases of the form
$V(\rbody{\tau})$ for valuations whose range is in
$\dom(\Sigma,\tau,\domain)$.

Towards a formal definition, \mbox{$c_1<\cdots<c_\ell$} denote the
constants in  $\Sigma\cup\{\tau\}$ and let $m$ be the number of data variables
in $\body{\tau}$. If $\domain=\Rat$ then $\dom(\Sigma,\tau,\domain)$ consists of $c_1,\ldots,c_\ell$, all values $c_1-m,\ldots,c_1-1$, all values $c_\ell+1,\ldots,c_\ell+m$ and all values of the form $c_i+\frac{j}{m+1}(c_{i+1}-c_i)$, for $i\in\{1,\ldots,\ell-1\}$ and $j\in\{1,\ldots,m\}$. If  $\domain=\Int$, it consists of $c_1,\ldots,c_\ell$, all values $c_1-m,\ldots,c_1-1$, all values $c_\ell+1,\ldots,c_\ell+m$ and all values of the form $c_i+j$, for $i\in\{1,\ldots,\ell-1\}$ and $j\in\{1,\ldots,m\}$ with $c_i+j<c_{i+1}$.  If  $\domain=\Nat$, it is defined as for $\domain=\Int$ with the additional constraint that elements of the form $c_1-j$ must be non-negative. 

By $\calD(\Sigma,\tau,\domain)$ we denote the set of all distributed databases $V(\rbody{\tau})$, for which $V$ maps data variables to values in  $\dom(\Sigma,\tau,\domain)$ and node variables one-one to an initial segment of (a disjoint copy of) the natural numbers.

The following result shows that, to decide implication, it suffices to apply the chase to all databases in $\calD(\Sigma,\tau,\domain)$.
\begin{proposition}\label{prop:chase-correct}
   Let $\Sigma\cup\{\tau\}$ be a set distribution 
constraints from $\Tdf$ and $\Eall$ and let $\domain$ be one of $\Nat$, $\Int$, $\Rat$. Then the following statements are equivalent.
\begin{enumerate}[(1)]
\item $\Sigma\models_\domain \tau$.
\item For every database $D=V(\rbody{\tau})$ in $\calD(\Sigma,\tau,\domain)$ and every
  successful chase sequence $\chseq $ for $D$ with $\Sigma$, there is an extension $V'$ of $V$ that satisfies $\head{\tau}$ on $\Chase(\chseq)$.
  \item For every database $D=V(\rbody{\tau})$ in $\calD(\Sigma,\tau,\domain)$ there exists a chase sequence $\chseq $ for $D$ with $\Sigma$, that fails or for which there is an extension $V'$ of $V$ that satisfies $\head{\tau}$ on $\Chase(\chseq)$. 
\end{enumerate}
\end{proposition}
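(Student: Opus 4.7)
The plan is to prove the cycle $(1) \Rightarrow (2) \Rightarrow (3) \Rightarrow (1)$, treating $(3) \Rightarrow (1)$ as the substantive direction. Before starting, I would invoke Proposition~\ref{prop:chase-terminates} to note that every chase sequence can be extended to a maximal one which is either successful or failing; this is what makes statements (2) and (3) meaningful.

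For $(1) \Rightarrow (2)$, fix any $D = V(\rbody{\tau})$ in $\calD(\Sigma,\tau,\domain)$ and any successful chase sequence $\chseq$ for $D$ with $\Sigma$. By successfulness $\Chase(\chseq)$ satisfies $\Sigma$; because all dtgds in $\Sigma$ are data-full, no fresh data values are introduced during the chase, so $\Chase(\chseq)$ is an instance over $\domain$. Since $V(\rbody{\tau}) \subseteq \Chase(\chseq)$, $V$ satisfies the relation atoms of $\body{\tau}$ on $\Chase(\chseq)$, and if $V$ also satisfies the comparison atoms then $\Sigma \models_\domain \tau$ supplies the required extension $V'$; if not, the statement is vacuous. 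For $(2) \Rightarrow (3)$, given any $D \in \calD(\Sigma,\tau,\domain)$, extend the trivial chase sequence to a maximal one; by termination it is either failing (directly witnessing (3)) or successful, in which case (2) provides the required $V'$.

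The main direction is $(3) \Rightarrow (1)$, proved by contrapositive. Suppose $\Sigma \not\models_\domain \tau$; then there exist a distributed instance $D^*$ over $\domain$ with $D^* \models \Sigma$ and a valuation $V^*$ satisfying $\body{\tau}$ on $D^*$ such that no extension of $V^*$ satisfies $\head{\tau}$ on $D^*$. The key construction is to pick $V \in \calD(\Sigma,\tau,\domain)$ whose order pattern on the data variables of $\body{\tau}$, relative to the constants of $\Sigma \cup \{\tau\}$, mirrors that of $V^*$. The explicit padding in the definition of $\dom(\Sigma,\tau,\domain)$ provides enough intermediate values (strictly between and beyond the finitely many constants) to realize any such pattern; node variables are mapped bijectively to fresh nodes. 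The induced map $h$ that sends $V(x) \mapsto V^*(x)$ and each $V(\vark)$ to the node of $D^*$ playing the corresponding role under $V^*$ is then a constant-preserving, order-preserving homomorphism from $D = V(\rbody{\tau})$ into $D^*$.

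The crux is an inductive simulation: along any chase sequence $\chseq = D_0, D_1, \ldots$ starting from $D$, the map $h$ extends step by step to a homomorphism $h_i$ from $D_i$ into $D^*$. For a global or data-collecting step $(\sigma, W)$, the valuation $h_i \circ W$ witnesses that $\sigma$ is applicable in $D^*$, and since $D^* \models \Sigma$ the required head facts already exist there, supplying the extension; for a node-creating step, the freshly generated node in $D_{i+1}$ is mapped to any node of $D^*$ witnessing the existential quantifier, using again that $D^* \models \Sigma$. Symmetrically, any egd violation during the chase would transport under $h_i$ to a violation in $D^*$, so no chase sequence fails. Now the chase sequence promised by (3) for our particular $D$ cannot fail, so there is an extension $V'$ of $V$ with $V'(\head{\tau}) \subseteq \Chase(\chseq)$; composing with the final homomorphism produces an extension of $V^*$ satisfying $\head{\tau}$ on $D^*$, contradicting the choice of $D^*$. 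The main obstacle is precisely this simulation lemma, where care is required to preserve comparison atoms throughout (hence the carefully padded $\dom(\Sigma,\tau,\domain)$) and to handle the interplay between node-creating dtgds, which introduce fresh nodes, and the (potentially arbitrary) node structure already present in $D^*$.
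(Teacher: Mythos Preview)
Your proposal is correct and follows essentially the same approach as the paper: the core of both arguments is the inductive simulation lemma that extends a constant- and order-preserving homomorphism from a canonical instance $D\in\calD(\Sigma,\tau,\domain)$ into a counterexample $D^*$ along any chase sequence. The only organisational difference is that the paper proves $(1)\Leftrightarrow(2)$ directly and then invokes \cite{DBLP:journals/tcs/FaginKMP05} for $(2)\Leftrightarrow(3)$, whereas your cycle derives $(2)\Rightarrow(3)$ straight from termination (Proposition~\ref{prop:chase-terminates}), which makes your argument slightly more self-contained.
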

The straightforward proof is given in the appendix.

\section{Complexity}
\label{sec:complexity}

In this section, we study the complexity of the implication problem
for data-full  dtgds (and arbitrary degds). In general, this problem
turns out to be  in \EXPTIME, in fact as \EXPTIME-complete. We then
study restrictions of dtgds and degds that lower the complexity of the
implication problem. In fact, we identify fragments whose implication
problems are \pitwo-complete (\NP-complete without comparison atoms) or
\PSPACE-complete. To wrap up the picture, we finally identify fragments that already yield  \EXPTIME-hardness.

For most practical cases, the relevant complexity is $\pitwo$ or even
$\NP$: the former is the case, e.g., if the database schema (or at
least its arity) is fixed and if the number of atoms (or at least the
number of variables) is bounded by some a-priori constant. The latter
is the case if, additionally, there are no comparison atoms. In particular, the
\enquote{natural} generalisations of our examples have at most $\pitwo$
(or \NP) complexity.

  The first result of this section states that  $\Implication(\Tdf)$
  is \EXPTIME-complete. The upper bound is very simple but shows that
  the problem is not harder than implication of full (non-distributed)
  tgds. On the other hand,  in the distributed setting,
  \EXPTIME-hardness already holds for fixed schemas with small arity,
  whereas this problem is easily seen to be in~\pitwo for
  (non-distributed) full
  dependencies.%

\begin{theorem}
	\label{thm:ImplicationForDataFullDTGDisEXPTIMEComplete}%
	$\Implication(\Tdf\cup \Eall)$ is  $\EXPTIME$-complete. The lower bound already holds for a fixed schema of arity 2.
\end{theorem}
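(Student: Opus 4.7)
The plan has two parts: an $\EXPTIME$ upper bound via the chase, and a hardness reduction that adapts the classical argument of Chandra, Lewis and Makowsky to the low-arity distributed setting.

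For the \emph{upper bound}, I would invoke Proposition~\ref{prop:chase-correct} and reduce $\Sigma \models_\domain \tau$ to running the chase from each canonical database in $\calD(\Sigma,\tau,\domain)$. Since $\calD(\Sigma,\tau,\domain)$ has at most exponentially many elements and termination is guaranteed by Proposition~\ref{prop:chase-terminates}, it suffices to show that each individual chase runs in exponential time. Two cardinality bounds drive this. First, every dtgd in $\Tdf$ is data-full, so the set of data values occurring throughout the chase never leaves the polynomial pool $\dom(\Sigma,\tau,\domain)$. Second, a node-creating dtgd~$\sigma$ becomes inapplicable with a body valuation~$W$ as soon as some existing node already witnesses $W(\head{\sigma})$; the "required head pattern" is a set of facts over a polynomial value pool, so only exponentially many such patterns can arise, capping the number of freshly created nodes at exponential. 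Combined, these facts bound the total number of distributed facts ever present during the chase by an exponential, and individual chase steps (including degd failure checks) are polynomial in the current instance, giving an overall $\EXPTIME$ algorithm.

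For the \emph{lower bound}, I would adapt the $\EXPTIME$-hardness proof of Chandra, Lewis and Makowsky~\cite{DBLP:conf/stoc/ChandraLM81} for implication of full tgds, which relies on a schema of unbounded arity. The key observation is that node-creating dtgds let us introduce fresh nodes existentially, so that a classical $k$-ary fact $R(a_1,\dots,a_k)$ can be represented by a dedicated node~$\kappa$ carrying the atoms $\At{A^R_1(a_1)}{\kappa},\dots,\At{A^R_k(a_k)}{\kappa}$ of arity at most~$2$; this simulation recovers arbitrary arity. Each full tgd of the original reduction is then translated into a combination of data-collecting dtgds that fire once all required "tuple-nodes" are present on the correct local stores, a node-creating dtgd that produces the "tuple-node" for the head fact, and degds that identify duplicate encodings of the same tuple. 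The target dependency~$\tau$ is translated analogously, yielding a polynomial reduction.

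The upper bound is mostly a counting argument, so the \emph{main obstacle} lies in the lower bound. The delicate part is ensuring that the chase on the translated instance simulates the original high-arity chase faithfully---that each abstract tuple is represented by exactly one node, that distinct tuples are not inadvertently collapsed, and that rule firings respect the intended dependencies. Auxiliary "type" relations tagging each node with the relation symbol it encodes, together with node-identifying degds that merge duplicate encodings, should handle these concerns; but working under the restriction to arity~$2$ data atoms (rather than using unary atoms freely) is where the proof requires the most careful design.
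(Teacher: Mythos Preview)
Your upper bound argument matches the paper's: chase from each canonical database in $\calD(\Sigma,\tau,\domain)$, use data-fullness to keep the value pool polynomial, bound the number of fresh nodes (and hence distributed facts) exponentially via the applicability condition, and conclude.

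Your lower bound takes a genuinely different route. The paper does not go through Chandra--Lewis--Makowsky; instead it reduces directly from the word problem for a \emph{fixed} alternating linear-space Turing machine (Theorem~\ref{prop:lowerbounds}). Each node encodes one configuration, node-creating dtgds generate all configurations, and further constraints propagate acceptance backwards. Because the machine is fixed, its state set and tape alphabet are fixed, so the schema (relations such as $\relSym$, $\relSucc$, $\relState_q$, $\relHead$, $\relAcc$) is fixed and has arity at most~$2$ automatically. The direct reduction also buys the finer statement of Theorem~\ref{prop:lowerbounds}, pinpointing which small combinations of rule types already suffice for hardness.

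Your route has a real gap with respect to the ``fixed schema'' clause. Your encoding uses relation symbols $A^R_i$ indexed by the relation $R$ and the position $i$ of the \emph{input} tgd set; since the Chandra--Lewis--Makowsky reduction needs unbounded arity, the number of such symbols is unbounded and the schema is not fixed. Bounded arity is a weaker claim than what the theorem asserts. The repair is not deep---replace $A^R_i(a)$ by a single binary symbol $\mathsf{Val}(c_i,a)$ together with a unary type marker $\mathsf{Type}(c_R)$, with $c_i,c_R$ constants---but as written the proposal does not establish the fixed-schema lower bound. Separately, the data-collecting dtgds and node-identifying degds you sketch are unnecessary and muddy the correctness argument: a single node-creating dtgd per original single-headed full tgd already gives a faithful simulation, because each created node then carries exactly the encoding of one tuple and is never touched again.
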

\begin{proof}[Proof sketch]
  The lower bound follows from Theorem~\ref{prop:lowerbounds}, which is shown in Subsection~\ref{sec:lowerbounds} and offers a collection of types of distributed constraints that make the implication problem \EXPTIME-hard.
  
  The upper bound uses Proposition~\ref{prop:chase-correct}. Since
  $\dom(\Sigma,\tau,\domain)$ has polynomial size in
  $|\Sigma|+|\tau|$, the set $\calD(\Sigma,\tau,\domain)$ contains
  only exponentially many databases, from which the chase needs to start.
 Furthermore, each constraint in $\Sigma$ can
 be applied at most an exponential number of times, since there are
 only exponentially many different valuations, and
 each of them can fire at most once. Therefore, each chase sequence is
 of at most exponential length.
\end{proof}  

Intuitively, \EXPTIME-hardness for the class~\Tdf of data-full dtgds
(and even without degds) follows from the need to keep track of an
exponential number of nodes, as can be seen from the proof of the
lower bound of Theorem~\ref{prop:lowerbounds}. 

In the following two subsections, we turn to  restricted classes with
lower complexity\footnote{This statement holds under the common assumption that $\pitwo$
  and \PSPACE are smaller than \EXPTIME.} for the implication problem.  The fragments that we study are not
motivated from practical considerations (since there we already have
\enquote{low} complexity).
They were rather obtained by considering syntactic properties
under which the chase behaves better than in general.

First of all, these fragments require a fixed bound on the
arity of relations.  Furthermore, they  bound the amount of data that
is associated with a single node in a dtgd, in various ways. To state this more precisely, we use the following notions. 
\begin{definition}[bounded context]
	The \emph{context} $\context{\node}(\atoms)$  of a node variable~\node in a set~\atoms of atoms is the set %
	of (data) variables occurring in atoms referring to~\node. The
        context $\context{\node}(\sigma)$ of $\kappa$ in a dtgd
        $\sigma$ is
        $\context{\node}(\rbody{\sigma}\cup\head{\sigma})$. The
        context $\context{\node}(\sigma)$ of $\kappa$ in a degd
        $\sigma$ is
        $\context{\node}(\rbody{\sigma})$. %

        A node variable~\node has \emph{$b$-bounded  context} in $\sigma$ if $|\context{\node}(\sigma)|\le b$. It has \emph{$b$-bounded body context} if $|\context{\node}(\rbody{\sigma})|\le b$. %
      \end{definition}
	  For instance, in the two following constraints,
	  \begin{itemize}
		  \item dtgd
			  \begin{math}
				  \sigma_1 = R(x,y,z), S(y)\at\vark \to T(x)\at\vark
			  \end{math}
			  and
		  \item degd
			  \begin{math}
				  \sigma_2 = S(x)\at\vark, S(y)\at\vark, R(x,y,z)\at\varl \to \vark=\varl,
			  \end{math}
	 \end{itemize}
	 node variable~\vark has context~$\{x,y\}$ and thus 2-bounded context. The body context of~\vark in~$\sigma_1$ is even 1-bounded.
	 Note that, in~$\sigma_2$, node variable~\varl has 3-bounded body context and that, since there is no other node variable, the body context of this constraint is bounded by $3 = \max\{2,3\}$ in general.

      We sometimes simply speak of \emph{bounded context} if $b$ is clear from the, well, context.

	  \subsection{Classes with \texorpdfstring{\pitwo}{Pi_2}-reasoning}
\label{sec:np-reasoning}

In this subsection, we consider two fragments which allow reasoning in
\pitwo in general, and in~\NP,  if there are no comparison atoms. %

 The first fragment requires only
 one restriction (besides the usual arity restriction). The \emph{bounded generation} fragment \Tbg[b] allows all global and data-collecting dtgds but only node-creating dtgds, in which the head variable has $b$-bounded context. We refer to the latter as  node-creating dtgds of Type (G1), cf.\ Table~\ref{table:fragments}. 

\begin{theorem}
\label{thm:ImplicationForBoundedGenerationNP}%

For fixed $\alpha \ge 1$ and $b \ge 1$, problem $\Implication_\alpha(\Tbg[b] \cup \Eall)$ is
\begin{enumerate}
	\item \pitwo-complete in general and
	\item \NP-complete, if restricted to inputs without comparison atoms.
\end{enumerate}
\end{theorem}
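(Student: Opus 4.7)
The plan is to obtain both upper bounds through Proposition~\ref{prop:chase-correct} after establishing that every chase sequence in $\Tbg[b] \cup \Eall$ of fixed arity $\alpha$ has polynomial length; the matching lower bounds will follow from reductions of conjunctive query containment and $\forall\exists$-3-SAT.

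The first step is to prove the following key lemma: for any $\Sigma \in \Tbg[b] \cup \Eall$ of arity $\le \alpha$, any $D \in \calD(\Sigma,\tau,\domain)$, and any chase sequence $\chseq$ for $D$ with $\Sigma$, the number of distinct nodes in $\Chase(\chseq)$ and the total number of facts are polynomial in $|\Sigma| + |\tau|$. Only node-creating dtgds of Type (G1) can introduce new nodes, and for such a $\sigma$ the head facts $W(\head{\sigma})$ are fully determined by the values $W$ assigns to $\context{\kappa}(\sigma)$, where $\kappa$ is the head variable of $\sigma$. Since $|\context{\kappa}(\sigma)| \le b$ and $|\dom(\Sigma,\tau,\domain)|$ is polynomial in the input size, there are only polynomially many such \emph{signatures}, and the chase's applicability condition prevents $\sigma$ from firing twice with the same signature. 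Together with the fixed arity $\alpha$, which caps the number of possible facts per node and globally, this yields a polynomial bound on the total number of facts; since each chase step either strictly grows the instance, merges two entities via a degd, or fails, the chase length is polynomial.

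The upper bounds then fall out of Proposition~\ref{prop:chase-correct}(3). Without comparison atoms, all canonical databases in $\calD(\Sigma,\tau,\domain)$ behave identically (one fresh value per data variable suffices), so only one $D$ needs to be examined; a nondeterministic algorithm guesses a polynomial-length chase sequence $\chseq$ from this canonical $D$ together with a polynomial extension $V'$ of $V$, and accepts if $\chseq$ either fails or satisfies $V'(\head{\tau}) \subseteq \Chase(\chseq)$, placing the problem in \NP. With comparison atoms, $\calD(\Sigma,\tau,\domain)$ contains exponentially many canonical databases parameterised by the linear order that $V$ induces on the variables relative to the constants; prefixing the previous algorithm by a universal guess of $V$ yields the $\forall\exists$ alternation required for \pitwo.

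For the lower bounds, \NP-hardness follows from a reduction of conjunctive query containment, which is already \NP-complete over binary schemas: given CQs $Q_1,Q_2$ with matching head variables, $\emptyset \models \tau$ with $\tau = \body{Q_1} \to \body{Q_2}$ iff $Q_1 \subseteq Q_2$. For \pitwo-hardness, I would reduce from $\forall\exists$-3-SAT, using a small ordered domain and comparison atoms to encode the universal quantification over truth assignments (so that it aligns with the universal over $V$ arising from Proposition~\ref{prop:chase-correct}(3)), while the existential satisfaction check is handled by the chase together with the existential guess of $V'$. The main obstacle is precisely the polynomial bound on node creation, which is what separates $\Tbg[b]$ from the general class $\Tdf$; in $\Tdf$, node-creating dtgds may have unbounded head context, so exponentially many distinct signatures can arise, driving the complexity of Theorem~\ref{thm:ImplicationForDataFullDTGDisEXPTIMEComplete} up to \EXPTIME.
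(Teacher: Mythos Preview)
Your upper-bound strategy matches the paper's: bound the number of nodes created by (G1) dtgds via the $b$-bounded head context (polynomially many possible valuations of $\context{\kappa}(\sigma)$ over $\dom(\Sigma,\tau,\domain)$, so each (G1) rule fires polynomially often), then use the arity bound $\alpha$ to cap the number of facts per node and the number of global facts, yielding polynomial chase length; finally read off $\pitwo$ and $\NP$ from the quantifier structure of Proposition~\ref{prop:chase-correct}(3).

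There is, however, a genuine gap in your $\NP$ argument. The claim that ``all canonical databases in $\calD(\Sigma,\tau,\domain)$ behave identically'' is false once degds are present. If $\Sigma$ contains, say, $R(x,y)\to x=y$, then the chase from the injective canonical database \emph{fails}, whereas from the database that identifies $x$ and $y$ it may terminate successfully without producing $\head{\tau}$. Your algorithm, which accepts on failure from the single injective $D$, would then give the wrong answer. (Your own phrase ``merges two entities via a degd'' in the length-bound paragraph already hints at the right fix, but it is not the chase you are using elsewhere.) The paper closes this gap by switching, in the comparison-free case, to the classical chase \emph{with identification}: when a degd fires one unifies the two values and fails only if both are distinct constants from $\tau$. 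With that modified chase one injective canonical database does suffice, and the $\NP$ bound goes through.

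For the lower bounds, your $\NP$-hardness via $\emptyset\models\tau$ with $\tau=\body{Q_1}\to\body{Q_2}$ is correct and essentially equivalent to the paper's encoding $\{\body{\qrA}\to\head{\qrA}\}\models(\body{\qr}\to\head{\qr})$; the paper's variant has the minor advantage that both $\sigma$ and $\tau$ are full. For $\pitwo$-hardness the paper takes a shorter route than your $\forall\exists$-3-SAT sketch: it reuses the very same reduction, now from containment of CQs \emph{with comparisons}, which is already known to be $\pitwo$-complete (van der Meyden). This avoids building a bespoke order-based encoding of Boolean assignments and keeps the reduction uniform for both hardness results.
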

\begin{proof}[Proof idea]
	The lower bounds follow by reductions from the containment problem for conjunctive queries (with or without comparisons)~\cite{DBLP:journals/jcss/Meyden97,DBLP:conf/stoc/ChandraM77}. For two queries~\qr and~\qrA of the respective classes, query~\qr is contained in~\qrA if and only if $\{\sigma\} \models \tau$, where $\sigma = \body{\qrA} \to \head{\qrA}$ and $\tau = \body{\qr} \to \head{\qr}$ are considered as global data-collecting dtgds (with or without comparisons).

	\smallskip
	The proofs of the upper bounds use Condition~(3) from Proposition~\ref{prop:chase-correct} and rely on the fact that, in each chase sequence, thanks to the (G1)-restriction only a polynomial number of nodes is generated and thanks to the arity restriction, each can carry only a polynomial number of facts. The \pitwo upper bound can be almost directly inferred from the quantifier structure of Condition~(3). The \NP upper bound follows since, essentially, only one initial database needs to be considered. Below, we provide the details.

We begin by showing that all possible chase sequences have polynomial length.
  Let $\Sigma$ be a set of dependencies in $\Tbg[b]\cup \Eall$ and $\tau$
 be a distribution constraint. We recall that the chase applies a
 node-creating chase step with a dtgd $\sigma$ and a valuation $W$ only if no node with the
 facts from $W(\head{\sigma})$ exists.   However, for each
 $\sigma\in\Sigma$,  the number of variables in $\head{\sigma}$
 {occurring in atoms related to $\kappa$} is at most $b$ and thus the number of different valuations of $\head{\sigma}$ (with the initial values derived from %
{$D_\tau$}) is at most {$|\dom(\Sigma,\tau,\domain)|^b$}, and thus
polynomial. Therefore, the number of chase steps using a $\sigma$ of
Type (G1) is
polynomially bounded.  In particular, the chase generates only a polynomial number of nodes.  
 {Since data-collecting dtgds have only one atom in their head and $\alpha$ is a bound on the arity of atoms, there can only be a polynomial number of chase steps using data-collecting dtgds, for each node. Similarly, there can only be a polynomial number of chase steps using global dtgds.}
  Altogether there can be only a polynomial number of chase steps in
  each chase sequence.
  As mentioned before, the \pitwo upper bound follows from Condition~(3) in Proposition~\ref{prop:chase-correct}. Universal
  quantification is over all databases in
  $\calD(\Sigma,\tau,\domain)$, the chase sequence \chseq is
  existentially quantified and that it fails or there exists an
  appropriate extension can be verified by further existential
  quantification.

  If there are no comparison atoms in $\Sigma$ and $\tau$ it suffices
  to start the chase from \emph{one} canonical database of the form
  $V(\body{\tau})$, for some one-one valuation $V$ that does not map
  any variables of $\body{\tau}$ to constants of $\body{\tau}$. However, the chase
  needs to be defined in a slightly different fashion: if  a
  degd with a head of the form $t=t'$ is applicable via a valuation
  $W$ then in the result  $W(t)$ and $W(t')$ are identified, unless
  they are different constants from $\body{\tau}$. If the latter is
  the case, the chase fails. For this version of the chase,
  Proposition~\ref{prop:chase-correct} holds as well.
 
 The \NP upper bound then follows, since only one initial database needs to
 be used and only one chase sequence of polynomial length needs to be guessed. 
\end{proof}

\medskip\noindent
The other class of dtgds considered in this subsection allows node-creating dtgds with head variables with unbounded context.  The simple argument of the proof of Theorem~\ref{thm:ImplicationForBoundedGenerationNP} therefore does not work anymore. However, it turns out that there are simple (and still generous) restrictions that guarantee a \pitwo (\NP) upper bound for the implication problem.
 To this end, we define the \emph{bounded context fragment}  \Tbd[b]
 of dtgds as follows (cf. Table~\ref{table:fragments}).
 \begin{definition}[bounded context dtgds]
  A node-creating dtgd $\sigma$ is in \Tbd[b], if 
   \begin{itemize}
	   \AdjustDefinitionItemizationIndent
 \item[(G1)] its head variable  has $b$-bounded  context, or
\item[(G2)] all node variables in its body have $b$-bounded  context.
 \end{itemize}
 A data-collecting dtgd $\sigma$ is in \Tbd[b], if
   \begin{itemize}
	   \AdjustDefinitionItemizationIndent
\item[(C1)]  its head variable  has $b$-bounded  body context, or
\item[(C2)] all other node variables have $b$-bounded  context.
\end{itemize}
For instance, all global dtgds and degds are in \Tbd[b].
 \end{definition}

\noindent
The \emph{bounded context fragment}  \Ebd[b] of degds is defined similarly.
\begin{definition}[bounded context degds]
  A degd $\sigma$ with only data variables in its  head is in $\Ebd[b]$. A degd \mbox{$\sigma=\atoms \to \kappa=\mu$} is in  \Ebd[b], if
  \begin{itemize}    
	   \AdjustDefinitionItemizationIndent
  \item[(E1)] $\kappa$ and $\mu$ have $b$-bounded context, or
  \item[(E2)] $\mu$ and all node variables that do not occur in the head have   $b$-bounded context.
  \end{itemize}
\end{definition}
The degds of \Ebd[b] are illustrated in Table~\ref{table:fragments}.
In degds of Type (E2), we call $\kappa$ (but not $\mu$) the \emph{head variable}.

We can now state the second result of this subsection.

\begin{theorem}
	\label{thm:ImplicationForBoundedTransferIsNPComplete}%
	\label{thm:ImplicationForLinearlyUnboundedTransferIsNPComplete}%

		For fixed $\alpha \ge 1$ and $b \ge 1$, problem $\Implication_\alpha(\Tbd[b] \cup \Ebd[b])$ is
		\begin{enumerate}
			\item \pitwo-complete in general and
			\item \NP-complete,  if restricted to inputs without comparison atoms.
		\end{enumerate}
\end{theorem}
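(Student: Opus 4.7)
The plan closely follows that of Theorem~\ref{thm:ImplicationForBoundedGenerationNP}, with extra work to accommodate Type~(G2)/(C2) dtgds and Type~(E2) degds.

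The lower bounds transfer via the same reductions from (C)CQ containment used in Theorem~\ref{thm:ImplicationForBoundedGenerationNP}: the dtgds $\sigma$ and $\tau$ constructed there are global (contain no distributed atoms), and therefore lie in $\Tbd[b]\cup\Ebd[b]$ automatically, regardless of~$b$. Consequently, the reduction yields $\pitwo$-hardness in the presence of comparison atoms and $\NP$-hardness otherwise.

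For the upper bounds, my plan is to apply condition~(3) of Proposition~\ref{prop:chase-correct}: for each initial database $D \in \calD(\Sigma, \tau, \domain)$, guess a chase sequence~$\chseq$ for $D$ with $\Sigma$ that either fails or admits an extension $V'$ of $V$ satisfying $\head{\tau}$ on $\Chase(\chseq)$. Without comparison atoms, a single canonical database $V(\body{\tau})$ suffices, using the value-identifying chase variant from the proof of Theorem~\ref{thm:ImplicationForBoundedGenerationNP}. The $\pitwo$ and $\NP$ upper bounds then reduce to the existence, in all cases, of a polynomial-length witness chase sequence that can be guessed and verified in polynomial time.

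For Type~(G1), (C1) and (E1) dependencies, the analysis of Theorem~\ref{thm:ImplicationForBoundedGenerationNP} lifts verbatim: the head (resp.\ identified) variable's $b$-bounded context restricts the number of distinct head valuations of each dependency to $|\dom(\Sigma,\tau,\domain)|^b$, which is polynomial; combined with the bounded arity, this yields polynomially many new nodes, facts and identifications.

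The main obstacle is Type~(G2), (C2) and (E2), whose head (resp.\ identified) variable may have unbounded context. Here an exhaustive chase could in principle create exponentially many nodes, so the direct counting argument from Theorem~\ref{thm:ImplicationForBoundedGenerationNP} fails. However, condition~(3) only requires a chase sequence long enough to derive $\head{\tau}$, not the full chase. Since $\head{\tau}$ has polynomial size, I would fix a target extension $V'$ and argue that each atom of $V'(\head{\tau})$ admits a polynomial-size derivation. The bounded-context restriction on \emph{body} node variables of the (G2)/(C2)/(E2) dependencies is used to control the branching of these derivations: every body node variable accesses only $b$ data variables, so matching any body atom against a fixed node yields polynomially many witnesses. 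A minimality argument on the chase sequence then shows that only polynomially many intermediate facts are ever required across all head atoms, and these can be composed into one chase sequence of polynomial length.

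Once the polynomial-witness claim is established, the $\NP$ and $\pitwo$ upper bounds follow by standard guess-and-verify. The delicate point, and the principal obstacle I expect to encounter, is carrying out the minimality argument for the (G2)/(C2)/(E2) cases so as to produce a polynomial bound that is uniform in $|\Sigma|+|\tau|$; this amounts to showing that, although the full chase may explode, only a polynomial ``skeleton'' of it needs to be guessed to witness $\Sigma\models_{\domain}\tau$.
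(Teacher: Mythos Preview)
Your framework is correct --- lower bounds via (C)CQ containment, upper bounds via Condition~(3) of Proposition~\ref{prop:chase-correct} plus a polynomial-length chase witness --- and you correctly locate the obstacle in Types~(G2)/(C2)/(E2). But the ``minimality argument'' you leave open is exactly the heart of the proof, and your backward-from-the-target sketch is not the route the paper takes.

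The paper's key idea is a \emph{forward normalisation} of an arbitrary chase sequence~$\chseq$ into a sequence~$\chseq'$ that uses only polynomially many \emph{witness nodes} (nodes assigned to non-head node variables). The rule is concrete: in step~$i$, for every node variable~$\lambda$ with $b$-bounded context (the non-head variables in (G2)/(C2)/(E2)), if some earlier step $j<i$ used the same constraint with the same valuation on~$\context{\lambda}$, redirect $W'_i(\lambda)$ to the earliest such $W'_j(\lambda)$. For (C1)/(E1), where the \emph{head} variable~$\kappa$ has $b$-bounded body context, the rule is dual: if an earlier step used the same constraint with the same valuation on $\kappa$'s context, reuse \emph{all} witness nodes from that earlier step. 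Since redirected variables have $b$-bounded context, there are at most $|\dom(\Sigma,\tau,\domain)|^{b}$ patterns per variable per constraint, giving $\bigO(\norm{\Sigma}^2|\tau|^{2b+\alpha})$ witness nodes overall. The valuation on head atoms is never touched, so $\chseq'$ produces the same facts as $\chseq$. One then extracts the subsequence~$\chseq''$ of steps whose head node lies in this witness set (plus the nodes needed for $\head{\tau}$); it has polynomial length and still behaves like~$\chseq$.

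Two corrections to your plan. First, the analysis of Theorem~\ref{thm:ImplicationForBoundedGenerationNP} does \emph{not} separately handle (C1)/(E1): in $\Tbg[b]$ all data-collecting dtgds and degds are unrestricted, and the polynomial bound comes solely from (G1) capping the number of nodes. Once (G2) is allowed that argument collapses, so (C1)/(E1) need their own treatment --- precisely the reuse-by-bounded-head-context rule above. Second, a purely backward derivation-tree argument is delicate here because a single node may be populated by many collecting steps and then serve as a witness for many later steps across different constraints; the forward normalisation sidesteps this by rewriting the whole sequence uniformly rather than tracing dependencies.
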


\begin{proof}[Proof idea]
  For the upper bounds, we show that any chase sequence for
  $\Tbd[b]\cup \Ebd[b]$ can be \emph{normalised} such that only a
  polynomial number of \emph{witness nodes} are needed to trigger any
  chase steps. Since every node has only a polynomial number of facts,
  this implies that it suffices to consider chase sequences of polynomial length. %
  The remaining arguments are then as for Theorem~\ref{thm:ImplicationForBoundedGenerationNP}.
  The lower bounds follow by the same reduction as
  in
  Theorem~\ref{thm:ImplicationForBoundedGenerationNP}.
\end{proof}

\subsection{Classes with \texorpdfstring{\PSPACE}{PSPACE}-reasoning}
\label{sec:pspace-reasoning}
 
In this subsection, we consider a fragment of distribution constraints that does not guarantee polynomial-length chase sequences but, intuitively, sequences of polynomial \enquote{width}. Consequently, their implication problem turns out as \PSPACE-complete.

The fragment $\Twbd[b]$ is defined as follows (cf.\ Table~\ref{table:fragments}). %
\begin{definition}[weakly bounded distribution tgds]
  Let $b\ge 1$.
A dtgd $\sigma$ is in the class $\Twbd[b]$ of \emph{weakly bounded distribution tgds} if it is in  $\Tbd[b]$ or it obeys the following restriction:
\begin{itemize}
	   \AdjustDefinitionItemizationIndent
	\item[(G3)] $\sigma$ is node-creating and exactly one of its node variables does \emph{not} have $b$-bounded body context.
\end{itemize}
\end{definition}

\begin{theorem}\mbox{ }
	\label{thm:ImplicationForWeaklyBoundedTransferIsPSPACEcomplete}%
	\begin{enumerate}
		\item $\Implication_\alpha(\Twbd[b]\cup \Ebd[b])$ is in \PSPACE, for every $\alpha\ge 1$  and $b\ge 1$. 
		\item $\Implication_\alpha(\Twbd[b])$ is \PSPACE-hard
                  for $\alpha\ge 1$ and $b \ge 0$.   This lower bound
                  even holds without comparison atoms.%
	\end{enumerate}
      \end{theorem}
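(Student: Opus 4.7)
For the upper bound, the plan is to invoke Proposition~\ref{prop:chase-correct}(3): to show $\Sigma\models_\domain\tau$ is in \PSPACE, it suffices to decide, for each canonical database $D=V(\rbody{\tau})\in\calD(\Sigma,\tau,\domain)$, whether some chase sequence starting from $D$ fails or yields an extension $V'$ of $V$ satisfying $\head{\tau}$. Since $\calD(\Sigma,\tau,\domain)$ has exponential size, we can enumerate its members in polynomial space. The obstacle is that in $\Twbd[b]$ the (G3) rules can generate exponentially many new nodes through chains and trees of node creation, so the chase itself can no longer be kept in memory. However, each individual node carries only polynomially many facts: by the arity bound $\alpha$ and by $|\dom(\Sigma,\tau,\domain)|$ being polynomial, the total number of different facts is polynomial and, for nodes created by (G1) or (C1)/(C2) or appearing in (G3) with bounded context, the facts referring to them use variables from a set of size at most $b$ plus the node's own "seed" values.

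The heart of the upper bound is a recursive, depth-first simulation of chase sequences that reuses space across branches. The idea is that each (G3) chase step generates one fresh node $\mu$ from body atoms in which only one node variable $\kappa$ is permitted to carry unbounded body context, while all others carry only $b$-bounded context. Thus, to certify the applicability of a (G3) rule at a point in the chase, one needs to (i) retain a polynomial-size description of the local contents at $\mu$ and at the bounded-context body nodes, and (ii) recursively certify in polynomial space that the required facts exist at $\kappa$ by exploring the sub-chase rooted at $\kappa$. Since each recursion frame has polynomial size, and since any (G1)/(G2) node-creating, data-collecting, or (E1)/(E2)-style degd step can be handled in the same polynomial frame as in Theorem~\ref{thm:ImplicationForBoundedTransferIsNPComplete}, a single path through the node-generation tree suffices in memory at any time. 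Using alternation (equivalently, $\APSPACE=\PSPACE$) the existential guesses for chase steps and the universal quantification over initial databases can be interleaved appropriately.

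For the lower bound, the plan is a reduction from acceptance of a deterministic polynomial-space Turing machine $M$ on input $w$ (or equivalently from QBF). A configuration of $M$ is encoded by the facts stored at a single node: the symbol at each tape cell, the position of the head and the state are all represented by constant-arity relations. A (G3) dtgd, whose head variable has unbounded body context containing the variables describing the current tape contents, generates a successor configuration node from the facts at the current configuration node; collecting and global dtgds implement the transition function by copying the unchanged cells and updating the cell under the head. Because (G3) permits the current configuration's node variable to carry unbounded body context, arbitrarily long computations (exponentially many configurations) can be simulated. Acceptance is witnessed by the derivation of a designated global fact, and we choose $\tau$ to assert the non-existence of this fact in the naive style of Theorem~\ref{thm:ImplicationForBoundedGenerationNP}'s reduction. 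Hardness for $b\ge 0$ without comparison atoms follows because the encoding never compares data values and uses bounded context on all node variables except the single ``current-configuration'' variable per generation rule.

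The main obstacle I anticipate is the depth-first simulation in the upper bound: ensuring that (G3) node creation interacts correctly with (C1)/(C2) collecting rules and (E1)/(E2) degds without forcing us to materialise more than polynomially many nodes simultaneously. The point to be verified carefully is that whenever a collecting or degd rule needs facts from several nodes at once, the bounded context restriction on all but one of those nodes limits the auxiliary information to polynomial size, so that the required correlations can be recovered by separate recursive calls rather than by keeping all participating nodes in memory at the same time.
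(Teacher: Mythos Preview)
Your lower-bound sketch is essentially the paper's: encode a configuration of a polynomial-space machine on a single node via constant-arity relations and use (G3) rules to step from one configuration node to the next. That part is fine.

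The upper bound, however, has a real gap, and also a slip. The slip first: $\APSPACE=\EXPTIME$, not $\PSPACE$; you probably meant Savitch's $\NPSPACE=\PSPACE$, but then you cannot use alternation, only nondeterminism.

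The gap is in the recursive depth-first simulation. (G3) rules can spawn an \emph{exponentially long} chain of nodes, each created from the previous one via the single unbounded body variable. Your plan keeps ``a single path through the node-generation tree'' on a recursion stack, with one frame per node; but that path can have exponential length, so the stack itself is exponential. Bounding the size of each frame does not help if the number of frames is unbounded. There is also a second difficulty you flag but do not resolve: when a (G3), (C1)/(C2), or (E1)/(E2) rule needs facts from several nodes at once, those auxiliary nodes may themselves sit at the end of long (G3) chains, and separate recursive calls do not obviously synchronise (you must rule out circular justification between calls).

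The paper's argument avoids both problems by inverting the control flow. It first \emph{guesses} a polynomial-size ``timed witness set'' $(F,t)$: the facts on all nodes that are ever used to instantiate \emph{bounded} node variables anywhere in the chase, together with timestamps recording when each fact is produced. The bound on $|F|$ comes from the same counting as in Theorem~\ref{thm:ImplicationForBoundedTransferIsNPComplete}. The key structural observation is that each node has at most one immediate (G3)-predecessor, so the (G3)-ancestry of any node is a \emph{linear} chain, not a tree. One then verifies, for each node $k$ occurring in $F$, that its facts arise from a ``partial linear chase sequence'' relative to $F\setminus\{k\}$: this sequence is simulated \emph{forward}, step by step, keeping in memory only the facts of the single current node along the chain plus the fixed polynomial-size $F$. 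The timestamps are what prevents circular reasoning among the guessed witnesses. This is $\NPSPACE=\PSPACE$, with no recursion stack of unbounded depth.
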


      \begin{proof}[Proof idea]
        The lower bound~(2) is shown similarly as
        \PSPACE-hardness of the implication problem for inclusion
        dependencies over schemas of \emph{unbounded}
        arity~\cite{DBLP:books/aw/AbiteboulHV95,DBLP:journals/jcss/CasanovaFP84}. In a nutshell, in
        this reduction each node carries \emph{one tuple}, encoded
        with unary relations.

For the upper bound,        unlike for \Tbd, we do not have a
polynomial length bound for chase sequences for \Twbd. In fact, it
might be the case that a chase sequence generates an exponential
number of nodes. However, we can still  use a polynomially bounded set
$Z$ of witness nodes
for the bounded node variables of (G3) dtgds and for all other
constraints. They do not account for the unbounded node variables in
(G3) constraints, but we show that those only need to occur in linear
succession.  The basic idea of the algorithm is to guess $Z$ (and the
facts on nodes from $Z$) and to verify in polynomial space, for each
node in $Z$, that it is produced by a chase sequence. These verifying
computations all assume the same set $Z$. We use a kind of timestamps
to avoid cyclic reasoning.  
		The details of this proof are given in Appendix~\ref{app:PSPACEReasoning}.
      \end{proof}

\subsection{Classes with \texorpdfstring{\EXPTIME}{EXPTIME}-hard reasoning}\label{sec:lowerbounds}

In this subsection, we turn to combinations of constraints that yield an \EXPTIME-hard implication problem. In particular, we complete the proof of Theorem~\ref{thm:ImplicationForDataFullDTGDisEXPTIMEComplete}.  
To this end, we consider the following additional types of constraints:
\begin{itemize}
	   \AdjustDefinitionItemizationIndent
   \item[(G4)]  node-creating dtgds with two unbounded node variables;
\item[(C3)] data-collecting dtgds with two unbounded node variables;
   \item[(E3)] degds with two unbounded node variables; and,
    \item[(E4)] degds with three unbounded node variables.
\end{itemize}

\begin{theorem}\label{prop:lowerbounds}
  $\Implication(\Tdf)$ is  $\EXPTIME$-hard. This statement holds
  already without comparison atoms and with only the following combinations
  of constraint types allowed: 
  \begin{itemize}
  \item[(a)] Node-creating dtgds of Type (G2) and data-collecting dtgds of Type (C3);
  \item[(b)] Node-creating dtgds of Type (G2) and (G4); %
  \item[(c)]  Node-creating dtgds of Type (G2) and degds of Type (E4);
 \item[(d)]  Node-creating dtgds of Types (G2) and (G3), and degds of Type (E3).
 \end{itemize}
 In all cases,  schemas with (at most) binary relations suffice.
\end{theorem}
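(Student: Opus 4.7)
The plan is to reduce from the acceptance problem for alternating polynomial-space Turing machines ($\APSPACE = \EXPTIME$). Given such a machine $M$ with polynomial space bound $p(n)$ and input $w$, I will construct in polynomial time a set $\Sigma$ of data-full dtgds (and, in cases (c) and~(d), degds) of the allowed types, together with a target distribution constraint $\tau$, such that $M$ accepts $w$ iff $\Sigma \models \tau$. No comparison atoms are used and all relation symbols are binary.

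The encoding is common to all four cases. Each reachable configuration of $M$ on $w$ is represented by a distinct node in the chase. The polynomial-size content of a configuration (state, head position, per-cell symbol) is stored at that node as binary facts such as $\text{Cell}(i,c)\at\kappa$ and $\text{Head}(h,q)\at\kappa$, with cell indices $1,\ldots,p(n)$ encoded as data values in $\body{\tau}$. Because there are exponentially many reachable configurations, the chase must generate exponentially many nodes; this is possible in all four combinations because the head variable of a node-creating dtgd is allowed to have unbounded context (only Type~(G1) is excluded), and the weak-bounded-context restriction of $\Twbd$ is not imposed.

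The reduction is organised into three rule groups. (Initialisation) A constant number of rules with purely global body---hence vacuously satisfying the bounded-body-context requirements---populate a fresh node with the facts encoding the start configuration. (Transitions) For each transition of $M$, one rule derives a successor-configuration node from a predecessor-configuration node; since the new node needs essentially a copy of the predecessor's tape, the rule must relate two nodes both carrying unbounded context, and this is where the cases diverge. In case (b), a single G4 rule with two unbounded node variables (predecessor $\kappa$, successor $\kappa'$) suffices. In case (a), a G2 rule first creates the successor from many auxiliary bounded-context body node variables (one per cell, prepared beforehand), and a C3 data-collecting rule then copies each unchanged tape cell directly from $\kappa$ to $\kappa'$ along a two-node unbounded context. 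In cases (c) and (d), successor candidates are overgenerated by G2 (resp.\ G2 together with G3) rules whose body touches only bounded-context node variables or the single unbounded body variable permitted by G3, and E4 (resp.\ E3) degds then merge any two nodes that carry exactly the same configuration content, so that the chase stabilises with one canonical node per reachable configuration. (Acceptance) A unary marker is placed at every configuration node whose state is accepting and is propagated backwards: existentially from any marked successor to its predecessor, and universally to a universal-state predecessor only once all its successors carry the marker, with the latter implemented by a rule inspecting the predecessor together with the two successor nodes. The target $\tau$ asks whether the initial-configuration node carries the marker.

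The main obstacle will be engineering the transition rules so that polynomially many cell facts can be copied from a predecessor to its successor without violating the case-specific bounded-context restrictions. In case (a) this forces a two-phase rewriting (G2 generation followed by a C3 transfer along the two-node unbounded context); in cases (c) and (d) the merging degds must collapse overgenerated duplicates without ever identifying genuinely distinct configurations, which I would ensure by letting the degd fire only when all corresponding cell, head and state facts coincide. Granted these modular pieces, correctness reduces via Proposition~\ref{prop:chase-correct} to the standard observation that every successful chase sequence faithfully unfolds the alternating computation tree of $M$, establishing \EXPTIME-hardness for a fixed schema of arity two in every listed combination.
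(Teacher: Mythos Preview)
Your reduction from $\APSPACE$ and the idea of encoding configurations as nodes match the paper. However, the forward-simulation strategy you propose---generating only \emph{reachable} configurations by transition rules that read a predecessor and create its successor---does not fit the allowed types in case~(a). A (G2) rule cannot have an unbounded body node variable, so it cannot read the full tape of a predecessor node~$\kappa$. Your workaround (auxiliary per-cell nodes prepared beforehand) does not tie the cells to one specific configuration, and your subsequent (C3) copy rule cannot link a freshly created~$\kappa'$ to its intended predecessor, since node identities are not data values; the copy would mix cells from unrelated configurations. The paper sidesteps this entirely: it generates \emph{all} configurations (not only reachable ones) with (G2) rules whose bodies consist solely of global atoms---so the bounded-body requirement is vacuous---and uses the two-node (C3) rule only for \emph{backward} acceptance propagation between a configuration and one of its successors.

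Two further issues. First, your universal-state acceptance rule inspects the predecessor together with both successors, i.e.\ three unbounded node variables in one rule, but (C3) and (E3) admit only two; the paper splits this via separate markers $\relAcc_1$, $\relAcc_2$, each propagated by a two-node rule, then combined locally. Second, in cases~(c) and~(d) you describe the degds as merely merging duplicates, but with no unbounded data-collecting rule available there is no other way to deposit an $\relAcc$ fact on an existing node. In the paper the degds \emph{are} the propagation mechanism: for each configuration several nodes are created carrying distinct auxiliary markers ($\relEval$, $\relAcc_1$, $\relAcc_2$), and the degd---conditioned on an accepting successor node, which is why (E4) needs a third unbounded variable---merges two of them so that the markers accumulate on a single node.
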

The four \EXPTIME-hard fragments are illustrated in
Table~\ref{table:fragments}. The reductions use an alternating Turing machine with \emph{linearly} bounded space. 

\subsection{Parallel-correctness revisited}

We lift parallel-correctness to the setting of distribution constraints. In particular, we say that a query $Q$ is \emph{parallel-correct w.r.t.\ a set of distribution constraints} $\Sigma$ if $Q$ is parallel-correct on every database that satisfies~$\Sigma$.%

As parallel-correctness of a conjunctive query can be expressed as a dtgd, the results of the present section lead to the following:

\begin{corollary}\label{coro:pc}
	For a CQ $Q$ and a set of distribution constraints $\Sigma$,
        the complexity of deciding parallel-correctness of $Q$ w.r.t.\
        $\Sigma$ is in \EXPTIME. Furthermore, it is in
        \pitwo (or  \NP, without comparsion atoms) and \PSPACE if $\Sigma\subseteq \Tbd[b]\cup \Ebd[b]$ and
$\Sigma\subseteq \Twbd[b]\cup \Ebd[b]$, respectively, for a fixed $b$ and a fixed bound $\alpha$ on the maximal arity of relation symbols.
\end{corollary}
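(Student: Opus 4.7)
The plan is to express parallel-correctness of a CQ $Q$ w.r.t.\ $\Sigma$ as a single instance of the implication problem and then invoke the complexity bounds already established in Sections~\ref{sec:np-reasoning} and~\ref{sec:pspace-reasoning}, together with Theorem~\ref{thm:ImplicationForDataFullDTGDisEXPTIMEComplete}.

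Given a CQ $Q$ with body $\body{Q}$, let $\bar x$ be the variables that occur in $\head{Q}$ and let $\by$ be the remaining (non-head) variables in $\body{Q}$. Introduce a fresh node variable $\kappa$ and fresh data variables $\by'$, and define
\begin{equation*}
	\tau_Q \mydef \body{Q}(\bar x, \by) \to \bigl(\body{Q}(\bar x, \by')\bigr)\at[\kappa]{},
\end{equation*}
where the notation on the right-hand side denotes the set of distributed atoms obtained from $\body{Q}$ by renaming $\by$ to the fresh copy $\by'$ and annotating every atom with~$\kappa$. This is a dtgd in $\Tall$ (not necessarily data-full, as already observed in Section~\ref{sec:appl:dtgd}). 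Unfolding the definitions, a distributed instance $D$ satisfies $\tau_Q$ iff every fact in $Q(\myglobal(D))$ is already produced by some valuation of $\body{Q}$ on some local instance of $D$, i.e.\ iff $Q(\myglobal(D)) \subseteq \Qnaive(D)$. Since the reverse inclusion always holds by monotonicity, this is exactly parallel-correctness of $Q$ on $D$. Hence $Q$ is parallel-correct w.r.t.\ $\Sigma$ iff $\Sigma \models \tau_Q$.

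The construction of $\tau_Q$ is polynomial-time in $|Q|$, uses the same schema as $Q$ (so the arity bound $\alpha$ is preserved), and $\tau_Q$ may be chosen arbitrarily on the right-hand side of the implication problem (the implied constraint need not lie in the class of $\Sigma$). Consequently, the reduction lifts all complexity bounds on the implication problem directly: the \EXPTIME bound follows from Theorem~\ref{thm:ImplicationForDataFullDTGDisEXPTIMEComplete} applied to $\Sigma \subseteq \Tdf \cup \Eall$; the \pitwo bound (and the \NP bound in the comparison-free case) follows from Theorem~\ref{thm:ImplicationForBoundedTransferIsNPComplete} when $\Sigma \subseteq \Tbd[b]\cup\Ebd[b]$; and the \PSPACE bound follows from Theorem~\ref{thm:ImplicationForWeaklyBoundedTransferIsPSPACEcomplete} when $\Sigma \subseteq \Twbd[b]\cup\Ebd[b]$.

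The only step requiring attention is the equivalence between $D \models \tau_Q$ and parallel-correctness of $Q$ on $D$: a valuation $W$ of $\body{\tau_Q}$ on $D$ is exactly a derivation of a fact of $Q$ on $\myglobal(D)$, and an extension $W'$ satisfying $\head{\tau_Q}$ exhibits a node $W'(\kappa)$ together with a valuation (agreeing with $W$ on $\bar x$ and mapping $\by'$ into $I_{W'(\kappa)}$) that produces the same head fact locally. This observation, together with monotonicity, yields the equivalence and thereby the corollary.
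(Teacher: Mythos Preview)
Your proposal is correct and follows exactly the approach the paper intends: it expresses parallel-correctness of $Q$ by the single dtgd $\tau_Q$ (as already sketched in Section~\ref{sec:appl:dtgd}) and then invokes Theorems~\ref{thm:ImplicationForDataFullDTGDisEXPTIMEComplete}, \ref{thm:ImplicationForBoundedTransferIsNPComplete}, and~\ref{thm:ImplicationForWeaklyBoundedTransferIsPSPACEcomplete}, using that the implied constraint need not belong to the restricted class. The paper itself gives no proof beyond the one-line remark that parallel-correctness can be expressed as a dtgd, so your write-up is a faithful (and more detailed) expansion of that argument.
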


\section{Conclusion}
\label{sec:Conclusion}

In this work, we introduced a novel declarative framework based on classical tgds and egds with comparison atoms to specify and reason about classes of data distributions. We illustrated our framework by various examples and performed an initial study of the complexity of the implication problem. As an application, we derived bounds (in Corollary~\ref{coro:pc}) for the complexity of parallel-correctness
of conjunctive queries.

Of course, there are many immediate general directions for extending the line of work started in this paper. For instance, one could 
study the implication problem for more expressive distribution constraints than data-full ones. There is a plethora of work on fragments of dependencies for improving the complexity of decision problems~(e.g.,\cite{DBLP:conf/ijcai/BagetMRT11,DBLP:conf/pods/Benedikt18,DBLP:journals/jair/CaliGK13,Onet12}). 
It could be investigated if any of these or others lead to a decidable implication problem. Another direction for future work is to study parallel-correctness w.r.t.\ distribution constraints for more expressive query languages than conjunctive queries. Some possibilities are unions of conjunctive queries~\cite{DBLP:journals/jacm/AmelootGKNS17}, conjunctive queries with negation~\cite{DBLP:conf/icdt/GeckKNS16} or Datalog~\cite{DBLP:conf/icdt/KetsmanAK18}. 

Example~\ref{example:lineitem} and Section~\ref{sec:hierar} illustrate how co-partitioning schemes can be translated into distribution constraints. It would be interesting to investigate the converse direction. That is, by design, distribution constraints specify in a declaratively way which properties a horizontal partitioning should satisfy. They do not provide a direct operational way to compute an actual partitioning. A natural question is to find an optimal partitioning satisfying a given set of distribution constraints.

Section~\ref{sec:constraints} mentions a translation of distribution constraints to classical tgds and egds by increasing the arity of relations by one to take the node variables into account. It would be interesting to see whether the resulting fragment of dependencies is worthwhile to study it on its own in the classical setting.

The main technical challenge left open from this work is whether the \EXPTIME-hardness result in Theorem~\ref{prop:lowerbounds}(c) can be extended to rules of Type (E4) that contain two rather than three unbounded node variables. 

\begin{table}
  \centering
  \newcommand{\thickLine}{1.3pt}%
  \newcommand{\thickVerticalLine}{!{\vrule width\thickLine}}%
  \newcommand{\thinHorizontalLine}{}%
  \newcommand{\xmark}{\checkmark}%
  \begin{tabular}{cl!{\vrule width\thickLine}c|c!{\vrule width\thickLine}c!{\vrule width\thickLine}c|c|c|c}
	  &&\multicolumn{2}{c!{\vrule width\thickLine}}{\pitwo (\NP)}& \PSPACE &\multicolumn{4}{c}{\EXPTIME}\\
	  \midrule[\thickLine]
  	\textsf{(G1)} & $\unbounded{\nodeA_1} \dotbox \unbounded{\nodeA_r}  \to \bounded{\node}$&\xmark&\xmark&\xmark&&&&\\\thinHorizontalLine
  \textsf{(G2)} & $\bounded{\nodeA_1} \dotbox \bounded{\nodeA_r}  \to \unbounded{\node}$&&\xmark&\xmark&\xmark&&\xmark&\xmark\\\thinHorizontalLine
  \textsf{(G3)} & $\bounded{\nodeA_1} \dotbox \bounded{\nodeA_r} \unbounded{\mu}  \to \unbounded{\node}$&&&\xmark&&&&\xmark\\\thinHorizontalLine
 \textsf{(G4)} & $\unbounded{\varl} \unbounded{\varm}  \to \unbounded{\node}$ &&&&&\xmark&&\\
 \midrule[\thickLine]
 \multicolumn{2}{l!{\vrule width\thickLine}}{\textsf{Unrestricted data-collecting dtgds}}&\xmark&&&&&&\\\thinHorizontalLine
  \textsf{(C1)} & $\bounded{\node} \unbounded{\nodeA_1} \dotbox \unbounded{\nodeA_r}  \to \unbounded{\node}$&&\xmark&\xmark&&&&\\\thinHorizontalLine
  \textsf{(C2)} & $\unbounded{\node} \bounded{\nodeA_1} \dotbox \bounded{\nodeA_r}  \to \unbounded{\node}$&&\xmark&\xmark&&&&\\\thinHorizontalLine
  \textsf{(C3)} &  $\unbounded{\node} \unbounded{\nodeA}  \to \unbounded{\node}$ &&&&\xmark&&&\\
 \midrule[\thickLine]
 \multicolumn{2}{l!{\vrule width\thickLine}}{\textsf{Unrestricted  degds}}&\xmark&&&&&&\\\thinHorizontalLine
  \textsf{(E1)} & $\bounded{\kappa} \bounded{\mu} \unbounded{\nodeA_1} \dotbox \unbounded{\nodeA_r}  \to \kappa=\mu$&&\xmark&\xmark&&&&\\\thinHorizontalLine
  \textsf{(E2)} & $\unbounded{\kappa} \bounded{\mu} \bounded{\nodeA_1} \dotbox \bounded{\nodeA_r}  \to \kappa=\mu$ &&\xmark&\xmark&&&&\\\thinHorizontalLine
  \textsf{(E3)} & $\unbounded{\node} \unbounded{\nodeA}  \to \node = \nodeA$&&&&&&&\xmark\\\thinHorizontalLine
  \textsf{(E4)} & $\unbounded{\node} \unbounded{\nodeA} \unbounded{\nodeB}  \to \node = \nodeA$&&&&&&\xmark&\\
	  \midrule[\thickLine]
 & Theorem & \ref{thm:ImplicationForBoundedGenerationNP} & \ref{thm:ImplicationForBoundedTransferIsNPComplete} & \ref{thm:ImplicationForWeaklyBoundedTransferIsPSPACEcomplete} & \ref{prop:lowerbounds}(a)  & \ref{prop:lowerbounds}(b)  & \ref{prop:lowerbounds}(c)  & \ref{prop:lowerbounds}(d)\\ 
    \bottomrule[\thickLine]
\end{tabular}
\caption{Illustration of restricted classes of dtgds and degds. Node variables that have to be bounded are shaded, others may be unbounded. The columns indicate the complexity of (some) combinations of fragments.}
\label{table:fragments}
\end{table}

\newpage
\bibliographystyle{abbrv}
\bibliography{references.bib}
\clearpage
\section*{Appendix}
\appendix

\renewcommand{\thesection}{A}
\section{Examples for Section~\ref{sec:constraints}}

\subsection{More applications of distribution constraints}
  \label{sec:more:appl:dtgd}

  \subsubsection{Strong Parallel-Correctness}
  A query is \emph{strongly parallel-correct}\footnote{The qualification
    \emph{strong} stems from the fact that this condition is sufficient
    but not necessary for parallel-correctness
    \cite{DBLP:journals/jacm/AmelootGKNS17}.} for a distributed instance
  if, for each valuation $V$ that derives some result tuple, all facts
  in $V(\rbody{Q})$ meet at some node
  \cite{DBLP:journals/jacm/AmelootGKNS17}. Strong parallel-correctness can be captured by data-full dtgds as we exemplify next.

   \begin{example}
  Strong parallel-correctness of the query 
  $H(n,s) \gets \Emp(n,t),\Sal(t,s)$
  can be expressed by the dtgd
  \begin{math}
    \ShrinkSuggestion
    \Emp(x,y), \Sal(y,z) \to  \At{\Emp(x,y)}{\kappa}, \At{\Sal(y,z)}{\kappa}. 
  \end{math}
  Here, only the node variable $\kappa$ is quantified in the head. So, the dtgd is data-full.
  \qed        \end{example}

  We note that strong parallel-correctness of a CQ $Q$ is essentially the basic property that is guaranteed by a hypercube distribution based on $Q$ if nothing is known about the actual hash functions. We discuss hypercube distributions next.

  \subsubsection{More on hierarchical partitioning schemes}
  \label{app:HierarchicalPartitioning}
The AdWords example for F1~\cite{DBLP:journals/pvldb/ShuteVSHWROLMECRSA13}, on relations for customers, advertising campaigns and adword groups, can be modeled as follows:
\begin{eqnarray*}
    \Customer(\custId,\bx)\at\vark, \Campaign(\custId,\campId,\by) \to \Campaign(\custId,\campId,\by)\at\vark, \\
    \Campaign(\custId,\campId,\by)\at\vark, \AdGroup(\custId,\campId,\bz) \to \AdGroup(\custId,\campId,\bz)\at\vark.
\end{eqnarray*}

However, the dtgd framework allows to specify more advanced co-hashing strategies by using multiple relations in the body of a dtgd. For instance, the hierarchical distribution above could be enforced when some information of a supplier with the same nation key as the customer is available at a node, as in the dtgd below:
\begin{eqnarray*}
    \Customer(\custId,\nation,\bx)\at\vark, \Campaign(\custId,\campId,\by), \Supplier(\supId,\nation,\bz)\\  \hfill{} \to \Campaign(\custId,\campId,\by)\at\vark.
\end{eqnarray*}
Furthermore, dtgds do not require the head to refer to one of the relations referred to in the body, thus allowing to model the co-hashing of mere summaries (of relations or joins over relations) like in relation $\Campaign^*$ in
\begin{displaymath}
	\Customer(\custId,\nation,\bx)\at\vark, \Campaign(\custId,\campId,\by) \to \Campaign^*(\custId,\campId)\at\vark.
\end{displaymath}

  \subsubsection{Hypercube Distributions}
\label{app:hc}

  Consider the query $Q=H(u,x,y,w) \gets R(u,x), S(x,y), T(y,w)$. The hypercube algorithm \cite{DBLP:journals/tkde/AfratiU11,DBLP:journals/jacm/BeameKS17} evaluates this query in the MPC framework over a two-dimensional network $\nw = \{1,\dots,p_1\} \times \{1,\dots,p_2\}$ using some hashing functions $h_1:\dom\to\{1,\dots,p_1\}$ and $h_2:\dom\to\{1,\dots,p_2\}$, respectively, such that in distribution $\distH_Q$
  \begin{align*}
    R(a,b) & \text{ is mapped to all nodes } (h_1(b),\star); \\
    S(b,c) & \text{ is mapped to all nodes } (h_1(b),h_2(c)); \text{ and,} \\
    T(c,d) & \text{ is mapped to all nodes } (\star,h_2(c)). 
  \end{align*}
  The hypercube distribution~$\distH_Q$ is designed such that the initial query~$Q$ is strongly parallel-correct under it. Thus, for every valuation~$V$ for~$Q$, there is some node where the facts required by~$V$ meet. Therefore, $\distH_Q$ satisfies the global-local constraint
  \begin{displaymath}
    \sigma_Q = R(u,x), S(x,y), T(y,w) \to \At{R(u,x)}{\vark}, \At{S(x,y)}{\vark}, \At{T(y,w)}{\vark}.
  \end{displaymath}

  However, dependency~$\sigma_Q$ covers only a small aspect of a
  hypercube distribution for~$Q$. This already becomes clear for a query
  like $Q'=H'(u,x,y) \gets R(u,x), S(x,y)$, for which parallel-correctness
  transfers\footnote{We say that parallel-correctness
    \emph{transfers} from query $Q$ 
  to query $Q'$ if $Q'$ is parallel-correct under every distribution
  policy for which $Q$ is
  parallel-correct~\cite{DBLP:journals/jacm/AmelootGKNS17}.}  from~$Q$.
  Distribution~$\distH_Q$ satisfies global-local constraint
  \begin{displaymath}
    \sigma_{Q'} = R(u,x), S(x,y) \to \At{R(u,x)}{\vark}, \At{S(x,y)}{\vark},
  \end{displaymath}
  which is not implied by~$\sigma_Q$ because instances with missing $T$-facts are not guaranteed to be parallel-correct for~$Q'$. Furthermore, there are queries that are parallel-correct under~$\distH_Q$ although parallel-correctness does not transfer from~$Q$ to them. One such example is $Q''=H''(u_1,u_2,x) \gets R(u_1,x), R(u_2,x)$, whose corresponding dtgd
  \begin{displaymath}
    \sigma_{Q''} = R(u_1,x), R(u_2,x) \to \At{R(u_1,x)}{\vark}, \At{R(u_2,x)}{\vark}
  \end{displaymath}
  is again not implied by~$\sigma_Q$.

  As an example, we describe how 2-dimensional hypercube distributions can be modeled by distribution constraints. Technically, due to the absence of functions in distribution constraints, we neglect the meeting of facts in a hypercube distribution that is solely caused by collisions under the hash functions. This can be viewed as reasoning about all 2-dimensional hypercube distributions or an \enquote{abstract} hypercube distribution over network $\nw=\dom\times\dom$.

  The modeling relies on two auxiliary relations. Unary relation~\Dom is intended to contain all data values of the global database. A distributed fact $H(a,b)\at\nodek$ is intended to represent the mapping of $(a,b)$ to node~\nodek by the pair $(h_1,h_2)$ of hashing functions.

  For each relation symbol~$R$ of arity~$r$, a dtgd
  \begin{math}
    \ShrinkSuggestion
    R(x_1,\dots,x_r) \to \Dom(x_j)
  \end{math}
  is added for every $j \in \{1,\dots,r\}$ capturing the semantics of the $\Dom$-predicate. Then, a single generating dtgd
  \begin{math}
    \ShrinkSuggestion
    \Dom(x), \Dom(y) \to \At{H(x,y)}{\vark}
  \end{math}
  is added such, for each pair of data values, there is at least one node responsible for them.

  Finally, each of the hashing rules is described by a single collecting dtgd,
  \begin{align*}
    R(u,x), \Dom(z), \At{H(x,z)}{\vark} & \to \At{R(u,x)}{\vark}, \\
    S(x,y), \At{H(x,y)}{\vark} & \to \At{S(x,y)}{\vark}, \\
    T(y,w), \Dom(z), \At{H(z,y)}{\vark} & \to \At{T(y,w)}{\vark}.
  \end{align*}

  In particular, the resulting set of distribution constraints has bounded context as defined in Section~\ref{sec:np-reasoning}.

  \subsubsection{Naive query answering}
  \label{sec:subsec:qa}
  Next, we introduce query answering in the context of distribution constraints adopting certain answers as the underlying semantics. We stress that this is only one possible way of many to define certain answers.
  Given an instance $I$, a set of dependencies $\Sigma$, and a query $Q$, the certain answers $\text{certain}(Q,I,\Sigma)$ are defined as those facts that are selected by the naive evaluation of $Q$ over every distributed instance that is consistent with $I$ and $\Sigma$. Formally, $$\text{certain}(Q,I,\Sigma)= \bigcap_D \{\Qnaive(D) \mid  \myglobal(D) = I \text{ and } D \models \Sigma\}.$$

  \begin{example}
  Let
  $I=\{\Emp(a,t),\Emp(a',t'),\Sal(t,s_1),\Sal(t,s_2),\allowbreak\Sal(t',s')\}$,
  and let $\Sigma$ consist of the non-skipping constraints
  $\calU(\schema)$ and the single dtgd 
  \begin{math}
    \ShrinkSuggestion
    \At{\Emp(x,y)}{\kappa}\to \At{\Sal(y,z)}{\kappa},
  \end{math}
  and let $Q=H(x,z) \gets \Emp(x,y),\Sal(y,z)$. 
  Then
  \begin{math}
    \ShrinkSuggestion
    \text{certain}(Q,I,\Sigma) = \{Q(a',t')\}.
  \end{math}
  \qed
  \end{example}

  \subsubsection{Multi-round communication}
  \label{sec:subsec:multi-round}

  Parallel-correctness---as defined in the previous sections---is set within the single-round communication model where each node naively evaluates the same query over its local database. We exemplify how distribution constraints can be adapted to incorporate (say, query evaluation in) the multi-round communication model where data can be reshuffled in between rounds.

  For this, we assume there is a constant number of rounds (or an upper bound on that number). The basic idea is that for every relation symbol $R$ constraints can use atoms of the form $R^{(i)}$ referring to the contents of relation $R$ in the $i$-th communication round. For instance, the rule $\At{R^{(i)}(x,y)}{\kappa}\to \At{R^{(i+1)}(x,y)}{\kappa}$ expresses that every fact $R(a,b)$
  occurring on a node at round $i$ is also at that node for round $i+1$.
  \begin{example}
  Consider the query 
  \begin{math}
    \ShrinkSuggestion
    Q = H(x,z) \gets R(x,y),S(y,z),T(z,x).
  \end{math}
  The following constraints are consistent with the two stage evaluation of $Q$ that first evaluates $O(x,z) \gets R(x,y),S(y,z)$ in the first round and $O(x,z),T(z,x)$
  in the second round:
  \begin{eqnarray*}
  R(x,y),S(y,z) \to \At{R^{(1)}(x,y)}{\kappa},\At{S^{(1)}(y,z)}{\kappa}\\
  \At{O^{(1)}(x,z)}{\kappa}, T(z,x) \to \At{O^{(2)}(x,z)}{\kappa'},\At{T^{(2)}(z,x)}{\kappa'}
  \end{eqnarray*}
  Here, $O$ is viewed as an intensional relation that is computed during the first computation round. \qed
  \end{example}
  With this translation, which relies on an \emph{a priori} known number of rounds, our upper bounds hold unchanged. It remains, however, unclear how multiple rounds can be modelled without an \emph{a priori} bound on the number of rounds.

\renewcommand{\thesection}{B}
\section{Missing proofs for Section~\ref{sec:implication}}

\begin{proof}[Proof sketch of Proposition~\ref{prop:chase-terminates}]
Termination follows from two simple observations.
\begin{itemize}
\item The constraints from $\Sigma$ never introduce any new data
  values and thus, for each rule $\sigma$, the number of valuations of
  data variables from  $\body{\sigma}$ is finite (in fact at most
  exponential in $|D\cup\data(\Sigma)|$, where $\data(\Sigma)$
  denotes  the set of data values in $\Sigma$).
\item Each constraint  $\sigma\in \Sigma$ fires at most once, for each
  valuation of
  data variables from  $\body{\sigma}$.
\end{itemize}
\end{proof}

\begin{proof}[Proof sketch of Proposition~\ref{prop:chase-correct}]
  We first show the equivalence of (1) and (2) by two contrapositions. Clearly, if (2)
  fails due to a chase sequence \chseq then  $\Chase(\chseq)$  witnesses
  that (1) fails. We thus show in the following that failure of (1) implies
  failure of (2).

  To this end, let $D'$ be a distributed database with $D'\models
  \Sigma$ and $D'\not\models\tau$ and let $V$ be a valuation of
  $\var(\body{\tau})$ that witnesses  $D'\not\models\tau$.
   Let  $D$ be  $V(\rbody{\tau})$ and let \chseq be a maximal chase sequence
   for $D$ with $\Sigma$. By induction on the number of
   steps, it is easy to show that, for each prefix  $\chseq'$ of
   $\chseq$, there is a homomorphism from $\Chase(\chseq')$ to $D'$
   that is the identity on $\dom$. We conclude that $\chseq$ is
   successful, since if it failed due to some degd, $D'$ would also violate
   that degd. Furthermore, $V$ satisfies $\body{\tau}$ on
   $\Chase(\chseq)$. If $V$ could be extended to a satisfying
   valuation $V'$ of $\head{\tau}$ on $\Chase(\chseq)$, this would
   also be possible on $D'$. Thus, $\Chase(\chseq)\not\models\tau$.

   By construction of $\calD(\Sigma,\tau,\domain)$, there is a
   database $D''\in \calD(\Sigma,\tau,\domain)$ that is isomorphic to
   $D$, even with respect to the linear order, and therefore $D''$
   witnesses the failure of (2).

   The equivalence of  (2) and (3) follows immediately from
   \cite[Theorem 3.3]{DBLP:journals/tcs/FaginKMP05} and \cite[Prop.\
   2.6]{DBLP:journals/tcs/FaginKMP05}. If some $D$ has a failing chase
   sequence then it has no successful chase sequence at all. And the results
   of all successful chase sequences are homomorphically equivalent. Therefore it
   suffices to consider, for each $D\in \calD(\Sigma,\tau,\domain)$,
   only one chase sequence.
\end{proof}

\renewcommand{\thesection}{C}
\section{Missing proofs for Section~\ref{sec:complexity}}

In lower bound proofs, we depart from the convention that dtgds have
only one head atom. We thus allow ourselves to use data-collecting dtgds
with more than one atom in their head. This is only a convenience as
explained earlier.

\subsection{Proof for classes with bounded context}\label{app:BoundedContext}
\begin{proof}[Proof details for Theorem~\ref {thm:ImplicationForBoundedTransferIsNPComplete}]
  We show the upper bounds in more detail.
  
Let, in the following,   $\alpha\ge 1$ and $b\ge 1$, and $\Sigma$ be a
set of dtgds from $\Tbd[b]\cup \Ebd[b]$ and $\tau$ any distribution
constraint. Let, furthermore, $D \in \calD(\Sigma,\tau,\domain)$ be of the
form $V(\body{\tau})$ and let
\chseq be a chase sequence for $D$. We show that  there is a chase sequence %
    of length at most $p(\norm{\Sigma},|\tau|)$ which behaves like
    \chseq in the following sense: either they both fail or they both allow an extension of
    $V$ that satisfies $\head{\tau}$ or they both do not allow such an
    extension.
	Furthermore, the degree of $p$ only depends on $\alpha$ and $b$.

   To this end, let %
   $D$ and $\chseq$ be as above. In the first proof step, we define a new, \enquote{normalised} chase sequence $\chseq'$ of the same length in an inductive fashion. In the second step, we extract a polynomial size chase sequence $\chseq''$ from $\chseq'$.

  The idea of the normalisation step is to bound the number of
  {witness nodes} which are used in the chase sequence to trigger
  constraints. A node $k$ is a \emph{witness node} for a chase step
  with dtgd $\sigma$ and valuation $W$ if $W(\lambda)=k$ for some node variable $\lambda$ of $\sigma$ that is \emph{not} the head variable.

In a nutshell, if, for some constraint $\sigma$ of Type (G2), (C2) or (E2), a bounded non-head node variable $\lambda$ already had a witness node with the same valuation as $\lambda$ before, then the earliest such witness node is used again. Orthogonally, if for a constraint $\sigma$ of Type (C1) or (E1) the same valuation for the head variable (or for $\kappa$ and $\mu$ for Type (E1)) has occurred before, then the witness nodes of the earliest such occurrence are used again for the current chase step. 

We now inductively  describe the construction of $\chseq'$ in more detail. Let
$\sigma_i$ and $W_i$ denote the dtgd and valuation of the $i$-th chase
step in \chseq. For $\chseq'$, we use the same distribution constraints but possibly
different valuations $W'_i$. 
We let $W'_1=W_1$.

For $i> 1$, we define $W'_i$ as follows, depending on the type of $\sigma_i$: %

 \begin{itemize}
 \item If $\sigma_i$ is a dtgd of Type (G1), a global dtgd or a degd with data variables in its head, then $W'_i=W_i$.
 \item If  $\sigma_i$ is a dtgd of Type (G2) or (C2) or a degd of Type (E2), $W'_i$ is defined as follows:
   \begin{itemize}
   \item If for some non-head node variable\footnote{This includes $\mu$, in the case of degds.} $\lambda$, there is some $j<i$ with $\sigma_j=\sigma_i$, such that  $W_j(x)=W_i(x)$, for all $x\in \context{\lambda}(\sigma_i)$, then $W'_i(\lambda)=W_j(\lambda)$, for the smallest such $j$. 
   \item For all other (node or data) variables $x$, let $W'_i(x)=W_i(x)$. 
\end{itemize}
\item  If  $\sigma_i$ is a dtgd of Type (C1) with head variable $\kappa$, valuation~$W'_i$ is defined as follows:
  \begin{itemize}
  \item If there is some $j<i$ with $\sigma_j=\sigma_i$, such that  $W_j(x)=W_i(x)$, for all $x\in \context{\node}(\sigma_i)$, then $W'_i(\lambda)=W_j(\lambda)$ for all node variables $\lambda\not=\kappa$ and  $W'_i(x)=W_j(x)$, for all $x\in \context{\lambda}(\atoms)$, for the smallest such $j$.  
  \item For all other (node or data) variables $x$, let $W'_i(x)=W_i(x)$.
  \end{itemize}
\item  If  $\sigma_i$ is a degd \mbox{$\atoms,\comparisons \to \kappa=\mu$} of Type (E1), but not of Type (E2), $W'_i$ is defined as follows:
  \begin{itemize}
  \item If there is some $j<i$ with $\sigma_j=\sigma_i$, such that  $W_j(x)=W_i(x)$, for all $x\in \context{\node}(\sigma_j)\cup\context{\mu}(\sigma_i)$, then $W'_i(\lambda)=W_j(\lambda)$ for all node variables $\lambda\not\in\{\kappa,\mu\}$ and  $W'_i(x)=W_j(x)$, for all $x\in \context{\lambda}(\atoms)$, for the smallest such $j$.  
  \item For all other (node or data) variables $x$, let $W'_i(x)=W_i(x)$.
  \end{itemize}

 \end{itemize}
 We emphasise that, in all cases, $W'_i$ is well-defined for every data variable~$x$, even if some variables occur in the contexts of multiple node variables.
  
\medskip
Since the normalisation never changes the valuation of variables that
occur in the head of $\sigma_i$, it is not hard to show by induction
on $i$, that $\chseq'$ behaves like $\chseq$ in the above sense. 

 The new sequence $\chseq'$ needs not be sufficiently small, though. However, in the second proof step, we show that we can extract a subsequence $\chseq''$ from $\chseq'$  that still behaves like $\chseq$ and has polynomial size.

 To this end, let the set $Z$ consist of all nodes that occur as witness nodes (as defined above) in some chase step of $\chseq'$ and of a minimal set of nodes that certify the head of $\tau$. We show next that $|Z|=\bigO(\norm{\Sigma}^2|\tau|^{2b+\alpha})$.

In this proof, we always bound the number of data values in $\chseq'$ by $|\tau|$ and the number of variables per constraint, as well as the number of constraints, by $\norm{\Sigma}$. 

\begin{itemize}
\item Each dgtd of Type (G1) can fire at most $|\tau|^b$ times and therefore it has at most  $\norm{\Sigma}|\tau|^b$  witness nodes. Each global dgtd fires at most $|\tau|^\alpha$ times  and  has at most  $\norm{\Sigma}|\tau|^\alpha$ witness nodes.
\item Each degd with data variables in its head can fire at most $|\tau|$ times and   therefore has at most  $\norm{\Sigma}|\tau|$  witness nodes.
\item For each non-head node variable in a constraint of Type (G2), (C2) or (E2), there are at most $|\tau|^b$ valuations of their data variables, and therefore each such constraint needs at most  $\norm{\Sigma}|\tau|^b$  witness nodes.
\item For the data variables of a head variable of a dtgd of Type (C1) there are at most $|\tau|^{b+\alpha}$ valuations, and therefore each such dtgd needs at most $\norm{\Sigma}|\tau|^{b+\alpha}$  witness nodes.
\item For a degd of Type (E1) with head $\kappa=\mu$ there are at most $|\tau|^{2b}$ valuations of the data variables of $\kappa$ and $\mu$, and therefore  each such degd needs at most $\norm{\Sigma}|\tau|^{2b}$  witness nodes. 
\end{itemize}

Since each constraint of $\Sigma$ needs at most $\norm{\Sigma}|\tau|^{2b+\alpha}$ witness nodes, the overall number of witness nodes is at most $\norm{\Sigma}^2|\tau|^{2b+\alpha}$. Since only $|\tau|$ nodes are needed to certify (the head of) $\tau$, we have established the stated bound on $|Z|$.

 Let now $\chseq''$ be the subsequence of $\chseq'$ that contains all
 chase steps where the head variable of the dtgd is mapped to a node
 from $Z$. It is easy to see that $\chseq''$ behaves like $\chseq$ in
 the above sense.
 Since, for each node, the number of possible facts is bounded by
 $\norm{\Sigma}|\tau|^\alpha$, we can conclude that $\chseq''$ has
 polynomial length.

 The upper bounds now follow as in the proof of Theorem~\ref{thm:ImplicationForBoundedGenerationNP}.
\end{proof} 

\subsection{Proof details for classes with \texorpdfstring{\PSPACE}{PSPACE}-reasoning}\label{app:PSPACEReasoning}

      \begin{proof}[Further proof details for Theorem~\ref{thm:ImplicationForWeaklyBoundedTransferIsPSPACEcomplete}]
      
\begin{new}
	We first show
        Theorem~\ref{thm:ImplicationForWeaklyBoundedTransferIsPSPACEcomplete}.1,
        which states that, for every $\alpha \ge $ and $b \ge 1$,
        problem $\Implication(\Twbd \cup \Ebd)$ is in~\PSPACE.
Thanks to Proposition~\ref{prop:chase-correct} it suffices to
construct, for each  database $D=V(\rbody{\tau})$ in $\calD(\Sigma,\tau,\domain)$,
a chase sequence $\chseq $ for $D$ with $\Sigma$ and to check whether
it fails or whether there is an extension $V'$ of $V$ that satisfies $\head{\tau}$ on $\Chase(\chseq)$. 
It suffices to show how this can be done for one such $D$, since a
\PSPACE algorithm can then cycle though all instances from
$\calD(\Sigma,\tau,\domain)$. 

To this end, let in the following $\alpha$ and $b$ be fixed. Let
$\Sigma$ and $\tau$ be from $\Twbd \cup \Ebd$, let $D \in
\calD(\Sigma,\tau,\domain)$ and let $E$ be a set of distributed facts
with data values from $D$. We show the following claim.

\begin{claim}\label{claim:pspace}
  It can be decided in polynomial space (in the size of $D$, $\tau$
  and $E$) whether there is a chase sequence $\chseq$ with $\Sigma$,
  starting from $D$ whose result contains $E$. The polynomial only
  depends on $\alpha$ and $b$.
\end{claim}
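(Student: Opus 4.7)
The plan is to design a nondeterministic polynomial-space algorithm; by Savitch's theorem this yields a $\PSPACE$ upper bound. The starting point is the observation that the set of all distributed facts over data values occurring in $D$ has polynomial size (since the arity bound $\alpha$ is fixed), and hence every node carries at most polynomially many facts. Following the sketch preceding the Claim, chase sequences for $\Twbd\cup \Ebd$ may generate exponentially many nodes through (G3)-rules, but these arise only in linear successions of (G3)-applications, while all other rule applications can be witnessed by a polynomially bounded set of nodes.

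The algorithm first nondeterministically guesses a polynomial-size set $Z$ of witness nodes, each annotated with a fact set (a subset of the polynomial ambient set of facts) and with a creation timestamp represented in binary on polynomially many bits (since the total chase length can be exponential). It then checks that every fact in $E$ is assigned to some node in $Z$ and that, for every node $\kappa \in Z$ and every fact $f$ attached to $\kappa$ with positive timestamp $t$, there exist a constraint $\sigma \in \Sigma$ and a valuation $W$ with $f \in W(\head{\sigma})$ whose body is satisfied by nodes with strictly smaller timestamps. For body atoms on bounded node variables, the witnesses are taken from $Z$ directly. For body atoms on the unbounded body node variable of a (G3)-rule, the witness may lie outside $Z$, in which case its existence is verified by a recursive subroutine that walks back along a (G3)-chain; the timestamps strictly decrease along this recursion and thus preclude cyclic reasoning.

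To keep the recursive verification in polynomial space, I would use a Savitch-style divide-and-conquer on timestamps: to verify the existence of a node with required facts $F$ at a time in an interval $[t_1,t_2]$, guess an intermediate timestamp $t_m$ and a corresponding intermediate fact set, then recurse on both halves. Since timestamps are encoded in polynomially many bits, the recursion depth is polynomial in the input size and each stack frame carries only polynomially many bits of data. The main obstacle, as I see it, is twofold: first, to show that a polynomially bounded $Z$ indeed suffices, extending the normalisation argument of Theorem~\ref{thm:ImplicationForBoundedTransferIsNPComplete} to cope with the exponentially many intermediate (G3)-nodes that may be floating around in the chase; and second, to establish that the divide-and-conquer split on a (G3)-chain yields two independently verifiable sub-derivations, exploiting the crucial fact that the single unbounded body node variable of (G3)-rules forces (G3)-applications to form linear chains in the first place.
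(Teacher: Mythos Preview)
Your approach shares the same skeleton as the paper's: guess a polynomial-size set of witness nodes equipped with fact sets and timestamps (the paper calls this a \emph{timed witness set} $(F,t)$), check that $E$ is covered, and then verify each witness node by a chase derivation that draws all bounded-variable witnesses from the guessed set while following a linear (G3)-chain for the single unbounded body variable. The normalisation argument you flag as the first obstacle is exactly what the paper does in its step~(i), extending the proof of Theorem~\ref{thm:ImplicationForBoundedTransferIsNPComplete} by one extra rule for (G3)-dtgds.

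The one substantive difference is your verification subroutine. You propose a \emph{backward} recursion along the (G3)-chain and, worried about exponential recursion depth, introduce a Savitch-style binary split on timestamps. The paper instead verifies each witness node by guessing a partial linear chase sequence \emph{forward}, step by step: it maintains only the current intermediate node (plus the fixed set $F$), generates the next node via a (G3)-step, and immediately forgets the previous one. Because (G3)-predecessors form a chain, this forward walk never needs more than one non-$F$ node in memory, so it is already an $\NPSPACE$ procedure---exponentially many steps, but polynomial space throughout---and a single invocation of Savitch at the end yields $\PSPACE$. Your divide-and-conquer is correct but buys nothing here; the linearity of (G3)-chains that you identify as your second obstacle is precisely what makes the simpler forward iteration work.
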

We first argue how the upper bound of the theorem follows from this
claim. The algorithm cycles through all distributed instances $D=V(\rbody{\tau})$ from
$\calD(\Sigma,\tau,\domain)$. For each such $D$, it guesses a set $E$, tests in
polynomial space that there is a chase sequence \chseq as in the claim
and accepts if there is an degd in $\Sigma$ that fails over $E$ or if
there is an extension $V'$ of $V$ such that $V'(\head{\tau})\subseteq
E$. The correctness is evident given Proposition~\ref{prop:chase-correct}.

It thus suffices to prove Claim~\ref{claim:pspace} to establish the
upper bound of the theorem. 

A \emph{timed witness set} is a pair $(F,t)$ with a set $F$ of distributed facts and a  \emph{timing function} $t:F\to\Nat$. The intuition behind $t$ is basically to map each fact to the number of the chase step in which it is produced. It thus induces a partial order on $F$. If $t$ is clear from the context, we will usually represent $(F,t)$ just by $F$.
For a node $k$, we write $F[k]$ for the set of distributed facts of the form $f\at{k}$ and $F-k$ for $F-F[k]$. Furthermore, for a natural number $p$, let $F^{< p}$ denote the set of all distributed facts $f\in F$ with $t(f)< p$.

We next consider sequences of  extended chase steps that can use previously produced facts, on one hand, and facts from $F$, on the other hand, but only in a time-respecting fashion. For that purpose these sequences come with a timing function, as well.

 More precisely,  a  dtgd $\sigma$ is  \emph{applicable} to a
 distributed instance $D$ relative to fact set $F$ and time $p$  if it is applicable to $F^{< p}$
 or it is of Type (G3) and applicable to $D\cup F^{< p}$.
 
  A \emph{partial linear chase sequence relative to $F$} 
  is a pair $(\chseq,s)$ with  a  strictly increasing
  \emph{timing function} $s:\{1,\ldots,n\}\to\Nat$ and a sequence
  $\chseq=D_0,D_1,\ldots$ of distributed instances with $D_0=D$, such
  that each $D_{i}$ results from $D_{i-1}$ by an extended chase step
  with a dtgd $\sigma_i$ that is applicable to $D_i$ relative to  $F$
  and time $s(i)$.    A further requirement  is that all produced facts $f$ that are in $ F$ have the same value $t(f)$.

A timed witness set $(F,t)$ is \emph{consistent} with $D$ and
$\Sigma$, if for every node $k$ occurring in $F$ there is a partial
linear chase sequence for $D$ relative to $F-k$ that produces all facts from $F[k]$.

\end{new}

It remains to show the following.
\begin{enumerate}[(i)]
	\item For each chase sequence \chseq for $D$ and each set $E\subseteq\chase{\Sigma}{(\chseq)}$ there is a timed witness
  set $(F,t)$ with  $F$ of polynomial size that is consistent with $D$
  and $\Sigma$ and satisfies $E\subseteq F$. 
\item For each timed witness set $(F,t)$ that is
  consistent with $D$ and $\Sigma$, there exists a chase sequence \chseq starting
  from $D$ with $F\subseteq \chase{\Sigma}(\chseq)$. 
\item There is a polynomial space algorithm that tests, whether for
  $D$ and $E$ there is a timed witness
  set $(F,t)$ with  $F$ of polynomial size that is consistent with $D$
  and $\Sigma$ and satisfies $E\subseteq F$. 
\end{enumerate}

\noindent
Towards (i), let \chseq be a  chase sequence. Similarly as in the
proof of Theorem~\ref {thm:ImplicationForBoundedTransferIsNPComplete}
we can assume that \chseq is normalised according to the rules given in the proof of Theorem~\ref{thm:ImplicationForBoundedGenerationNP}  with the following extension:
\begin{itemize}
\item  If $\sigma_i$ is a dtgd of Type (G3), $W'_i$ is defined as follows:
   \begin{itemize}
   \item If for some non-head node variable $\lambda$, there is some $j<i$ with $\sigma_j=\sigma_i$, such that  $W_j(x)=W_i(x)$, for all $x\in \context{\lambda}(\sigma_i)$, then $W'_i(\lambda)=W_j(\lambda)$, for the smallest such $j$. 
   \item For all other (node or data) variables $x$, let $W'_i(x)=W_i(x)$. 
\end{itemize}
\end{itemize}

The timed witness set $(F,t)$ is constructed as follows. A witness node is a node $k$ that occurs in $\chseq$ in the application of a (G3) rule as $V(\lambda)$, for some bounded variable $\lambda$ or is a witness node for some other step, as defined in the proof of Theorem~\ref{thm:ImplicationForBoundedGenerationNP}. 
Let $Z$ consist of all these witness nodes. 
 Again it holds
$|Z|=\bigO(\norm{\Sigma}^2|\tau|^{2b+\alpha})$ since the same bounds can be shown for non-(G3) constraints, and for each dtgd of Type (G3) there are at most $\norm{\Sigma}|\tau|^b$ witness nodes for the bounded node variables. 

Let $F$ consist of the set of all distributed facts $f$ of nodes in
$Z$ and all facts from $E$. Again, thanks to the arity bound for $\Sigma$, the number of facts per node is polynomially bounded. 
For each fact $f$ in $F$, we let $t(f)$ be the number of the chase step in which $f$ is produced in \chseq.  

It  remains to show that $F$ is consistent with $D$ and $\Sigma$. To
this end, let $k$ be some node occurring in $F$.

For the construction of a partial linear chase sequence for $k$, we
use the concept of (G3)-predecessors. A node $\ell$ is an
\emph{immediate (G3)-predecessor} of a node $\ell'$, for a chase
sequence, if $\ell'$ is generated by a chase step with a
(G3)-dtgd in which the unbounded body node variable is mapped to
$\ell$. The set of (G3)-predecessors of $k$ is obtained by the closure
of $\{k\}$ under immediate (G3)-predecessors. We note that since each
node is generated only once, it can have at most one immediate
(G3)-predecessor, and the set of predecessors induces a linear chain
of nodes. 

Now we are able to define a partial linear chase sequence for $k$. It
consists of all chase steps that generate or contribute facts to $k$
and its (G3)-predecessors. However, if $\ell$ is the immediate
(G3)-predecessor of $\ell'$ only those chase steps producing facts for
$\ell$ are kept which occur in \chseq \emph{before} the generation of
$\ell'$. The timing function $s$ maps each chase step $c$ of the
sequence to the number of this step in \chseq. It is not hard to see
that this construction yields  a partial linear chase sequence for
$k$.

\noindent
Towards (ii), let $(F,t)$ be
  consistent with $D$ and $\Sigma$.  The idea for the construction of \chseq is to inductively merge  all partial
linear chase sequences for nodes of $F$ in an inductive fashion. Let,
to this end, the nodes of $F$ be numbered $k_1,\ldots,k_r$. We let
$\chseq_1$ be the partial  linear chase sequence for $k_1$. We define
$\chseq_i$ by merging $\chseq_{i-1}$ with a  partial  linear chase
sequence $\chseq'$ for $k_i$ as follows. A complication is caused by
the fact that there might be facts $\At{f}{k_i}$, $\At{f'}{k_i}$ and
$\At{g}{k_j}$, for some $j<i$ such that $\At{f}{k_i}$ is needed to
produce $\At{g}{k_j}$ in $\chseq_{i-1}$  and $\At{g}{k_j}$  is needed to
produce $\At{f'}{k_i}$ in $\chseq'$. 
Therefore, we divide $\chseq'$ in
subsequences that end with a chase step that produces a fact from
$F$.\footnote{We can safely assume that the last step of $\chseq'$ is
  of this kind.} In the example, one subsequence would end producing
$\At{f}{k_i}$, one other producing $\At{f'}{k_i}$.
A subsequence producing a fact $\At{f}{k_i}$ in step $\ell$ of
$\chseq_i$ is then inserted
right after the maximal chase step of $\chseq_{i-1}$ that produces a
fact $g'$ of $F$ with $t(g')<s(j)$. 

The construction guarantees that the sequence \mbox{$\chseq_r$}
produces all facts from $F$.  Furthermore, it is guaranteed by the
timing functions that all witness facts that are used in chase steps
are produced in earlier steps. Altogether, \chseq is a chase sequence
that certifies $\Sigma\models\tau$. 

\noindent
Towards (iii), we sketch a nondeterministic polynomial space algorithm
that checks, given $\Sigma$, $D$ and $E$, whether there exists a  timed
witness set of the desired size that is consistent  with $D$ and
$\Sigma$ and contains $E$. This algorithm first guesses $(F,t)$,
such that $E\subseteq F$ and then checks that each node $k$ has
a partial linear chase sequence \chseq relative to $F-k$ with the
linear structure with respect to (G3)-predeccessors. For the latter,
the algorithm just guesses such a sequence step-by-step. Actually, if
such a sequence exists, there is always one of the following simple
form: it consists of  the composition of some subsequences, each of
which starts with a node-creating tgd and is continued by (zero or
more) collecting tgds for this node. Besides the first one, each
subsequence begins with a node-creating dtgd of type~(G3) which uses
the node of the previous series for the unbounded body variable. Thus,
whenever this sequence generates a new node $\ell'$ by applying a
(G3)-dtgd to a node $\ell$ not in $F$, it can forget $\ell$ and its
facts afterwards. Indeed, $\ell$ is not needed as a witness node
thanks to $F$ and does not generate any further nodes because of the
linear structure of (G3)-predecessors. %
This completes the proof of the upper bound.

  \medskip
  \noindent
  To prove
  Theorem~\ref{thm:ImplicationForWeaklyBoundedTransferIsPSPACEcomplete}.2,
  that is, \PSPACE-hardness of $\Implication(\Twbd)$ for fixed $\alpha
  \ge 1$ and $b \ge 0$, we sketch a reduction from the \PSPACE-hard
  word problem for linear bounded automata similar to the proof of
  \PSPACE-hardness of the implication problem for inclusion
  constraints \cite{DBLP:journals/jcss/CasanovaFP84}.
  Let $w$ be an input word $w$ of length $n$ (which for simplicity is assumed to
  carry border symbols left and right) over some alphabet $\Gamma$,
  which also contains all tape symbols of the automaton and let $Q$ be
  the state state set of the automaton with initial state $s$ and
  accepting state $h$. We use one unary relation symbol $P_{a,i}$ for each
  symbol $a\in\Gamma$ and each position $i$ in $w$. Furthermore, we
  use one unary relation symbol $S_q$, for each state $q$ of $Q$ and
  one  unary relation symbol $H_{i}$ for each position of $w$. The
  idea is to encode information about configurations by facts over
  $\{0,1\}$. That position $3$ of the tape carries symbol $b$ would be
  represented by fact $P_{b,3}(1)$ and facts $P_{a,3}(0)$, for all
  $a\not=b$. Each configuration occurring in the computation is
  represented by one node. The body of $\tau$ consists of the facts that represent
  the initial configuration of the automaton  on input $w$ at one node
  $k_0$. For
  instance, it contains facts $\At{S_s(1)}{k_0}$, $\At{S_q(0)}{k_0}$ for $q\not=s$, and
  $\At{H_1(1)}{k_0},\At{H_2(0)}{k_0},\ldots$, $\At{P_{w_1,1}(1)}{k_0}$
  and so on.

  For each transition $\delta$ of the automaton applicable to state
  $q$ and symbol $a$, each position $j$ of
  $w$ and each combination of two symbols $b,c$ for positions
  $j-1$ and $j+1$, there is a dtgd of the following kind:
  $\At{S_q(1)}{\kappa}, \At{H_j(1)}{\kappa},
  \At{P_{b,j-1}(1)}{\kappa}, \At{P_{a,j}(1)}{\kappa},
  \At{P_{c,j+1}(1)}{\kappa}, \At{Y}{\kappa} \to \At{Y}{\lambda},\At{Z}{\lambda}$.
  Here $Y$ is a set of atoms of the form $P_{a,i}(x_{a,i})$, for all
  $a$ and all $i\not\in\{j-1,j,j+1\}$ and of the form $H_{i}(0)$, for all
$i\not\in\{j-1,j,j+1\}$. The set $Z$ contains atoms that represent the
state of the automaton after applying $\delta$ and the symbols at
positions $j-1,j,j+1$. It is important that all variables in the head
of the rule appear in the body, i.e., it is indeed data-full. Clearly,
these dtgds are of Type (G3). 

Finally, the head of $\tau$ is just $\At{S_h(1)}{\lambda}$.

By induction on the number of computation steps, it is straightforward
to show that an instance that satisfies the body of $\tau$ and all
constraints from $\Sigma$ must contain nodes for each configuration of
the computation and thus the existence of a node with fact $S_h(1)$ is
implied if and only if the computation reaches a configuration with
state $h$.

      \end{proof}

	  \subsection{Proof details for classes with \texorpdfstring{\EXPTIME}{EXPTIME}-reasoning}\label{app:EXPTIMEReasoning}

\begin{proof}[Proof of Theorem~\ref{prop:lowerbounds}]
  \begin{new}
     That $\Implication(\Tdf)$ is  $\EXPTIME$-hard follows from the fact that the implication problem for full tgds is already \EXPTIME-complete \cite{DBLP:conf/stoc/ChandraLM81}. That is, \EXPTIME-completeness can already be 
  realized by dtgds without node variables. However, the proof of that result uses schemas of unbounded arity, and  the problem is easily seen to be in~\NP for schemas of bounded arity.\
 Thus, the challenge of the proofs here is to work with  fixed schemas of arity 2.

For all four fragments, the basic proof strategy is the same.
Let $L$ be an \EXPTIME-complete problem that is decided by some alternating Turing machine with linearly bounded space.
We prove the lower bound  by a polynomial reduction from $L$.

	We next give a description of the basic idea and the general framework.\footnote{Although the general approach of the reduction is similar to the one described by Calì et~al. in \cite{DBLP:journals/jair/CaliGK13} for weakly guarded tgds, it is significantly different, since we can not use relations that relate two (or more) node variables.}

	\ProofSubstep{Basic idea}
	Given a word~$w$, an instance $(\Sigma,\tau)$ for the implication problem is computed such that the chase process is intended to simulate the computation of $M$ on $w$. To this end, every node~$\nodek$ represents a configuration~$C(\nodek)$ of $M$. The constraints in $\Sigma$ are used to generate nodes that represent all possible configurations (with $|w|$ tape cells) and to \enquote{compute} which configurations are accepting. A configuration of $M$ is \emph{accepting} if
        \begin{itemize}
        \item it has an accepting state,
        \item it is universal and both its successor configurations are accepting, or
        \item it is existential and at least one of its successor configurations is accepting.
        \end{itemize}
        $M$ accepts $w$ if the initial configuration is accepting.
        
	\ProofSubstep{General Framework}
        We can assume that  $M$ is of the form  $(Q,A,(\delta_1,\allowbreak\delta_2),\allowbreak q_0,F)$, where $q_0\in Q$ is the initial state of $M$, $F \subseteq Q$ is the set of accepting states,  $A=\{a_1,\ldots,a_t\}$ is the alphabet, and the set $Q$ of states is partitioned into existential and universal states, $Q = \Qex \uplus \Qall$. Furthermore, we assume that, for each state~$q$ and each symbol~$a \in A$, there are two transitions $\delta_1(q,a)$ and~$\delta_2(q,a)$. For convenience, we assume that the initial state of $M$ is not accepting.

        Let~$w=w_1 \dots w_n$ be an input word for~$M$. We can assume that $M$ uses only $n+2$ cells of the tape, where the first and the last cell (with positions~$0$ and~$n+1$, respectively) are marked with special symbols $a_1=\lmark,a_t=\rmark\in A$ that are never altered by the transition functions.  
Configurations of $M$ on input $w$ can be represented by triples $(q,i,u)$ with $q\in Q$, $i\le n+1$ and $|u|=n$, where the tape content is $\lmark u\rmark$ and the head is at position $i$.

Nodes generated during the chase are supposed to encode configurations in the following way.  The data values in $D_\tau$ are intended to consist of elements $0,\ldots,n+1$ representing the positions of the tape and further elements $j_1,\ldots,j_t$, one for each symbol in~$A$.

The schema of $(\Sigma,\tau)$ uses two kinds of relation symbols, with the given intended meaning. The first kind of relation symbols is only used for global facts:

\begin{itemize}
\item $\relAlph(j)$: element $j$ represents a symbol from $A$.
\item $\relAlph_r(j)$: element $j$ represents $a_r$.
\item $\relSucc(i,j)$: position $j$ is the successor position of position $i$, that is, the position to the right of $i$. 
\end{itemize}
The relation $\relSucc$ shall define a successor relation on $0,\ldots,n+1$.

                The second kind of relation symbols is used for local facts with the intention to encode one configuration per node:
                \begin{itemize}
                \item $\relSym(i,j)$: position $i$  carries the symbol represented by $j$.
\item $\relState_q()$: the configuration has state $q$, for  $q\in Q$.
		\item $\relHead(i)$: the head of the $M$ is at position~$i$.
                \item $\relAcc()$: the represented configuration is accepting.
                \item $\relAcc_1()$ and $\relAcc_2()$ indicate that the first (and the second, respectively) successor configuration is accepting. 
                \end{itemize}
                
            More precisely, a configuration $C=(q,p,u)$ is supposed to be represented by a node $k$ with the following facts:
                \begin{itemize}
				\item $\relState_q()\at{k}, \relHead(p)\at{k}$,
				\item                   $\relSym(0,j_1)\at{k},\relSym(n+1,j_t)\at{k}$,
				\item                   $\relSym(1,\ell_1)\at{k},\ldots, \relSym(n,\ell_n)$, where each $\ell_i$ is the element $j_r$ with $u_i=a_r$.
\end{itemize}

The intention of the body of $\tau$ is to establish in $D_\tau$ a successor relation on $0,\ldots,n+1$ and some elements $j_1,\ldots,j_t$ that represent $a_1,\ldots,a_t$. Furthermore, it guarantees that the initial configuration of $M$ on input $w$ is represented on some node $\nodek$.

The details of the construction of $\Sigma$ and $\tau$ differ for the four considered constraint classes.

\medskip\noindent
Now, we are ready to prove statements~(a)~--~(d) of Theorem~\ref{prop:lowerbounds}.

  \ProofStep{Proof (a)}
  We show that $\Implication(\Tdf)$ is \EXPTIME-hard, even if restricted to node-creating dtgds of type~(G2) and data-collecting dtgds of type~(C3) without comparison atoms.
  We start with a description of the proof idea, which is followed by the details.

  \ProofSubstep{Proof idea}
Algorithm~\ref{alg:atm} describes a procedure that  is supposed to be mimicked by $(\Sigma,\tau)$. 
During the first phase,
\codelineref[lst:ATMSim:Phase1:start]{lst:ATMSim:AcceptingStates:end}, it
generates nodes that represent all\footnote{It is not tested whether a
  configuration can actually occur in the computation of $M$ on input
  $w$.} possible configuration triples $(q,p,u)$ and adds fact
$\relAcc()$ to all nodes representing a configuration with an
accepting state $q$. In the second phase,
\codelineref[lst:ATMSim:SuccessorConfig:start]{lst:ATMSim:Phase2:end}, the
additional information whether a configuration is accepting is
transmitted to  configurations~$C$ from successor
configurations~$C_j$,  ($C \vdash_j C_j$), with the help of collecting
dtgds.
\begin{algorithm}[tbp]
  \caption{ATM Simulation} \label{alg:atm}
  \begin{algorithmic}[1]
   \REQUIRE String $w$
   \STATE Add node $\nodek_0$ representing $(q_0,0,w)$ \label{lst:ATMSim:init}
   \FOR {each possible configuration $(q,i,v)$ where $|v|=|w|$} \label{lst:ATMSim:Phase1:start}
	   \STATE Add a node representing~$(q,i,v)$
   \ENDFOR \label{lst:ATMSim:Phase1:end}
   \FOR {each node $\nodek$} \label{lst:ATMSim:Phase2:start}\label{lst:ATMSim:AcceptingStates:start}
   \IF {$\nodek$ has an accepting state}
   \STATE Add $\relAcc()$ to $\nodek$
	\ENDIF
	\ENDFOR \label{lst:ATMSim:AcceptingStates:end}
  \REPEAT
  \FOR {each pair $(\nodek, \nodem)$ of nodes and $j\in\{1,2\}$} \label{lst:ATMSim:SuccessorConfig:start}
   \IF {$C(\nodek) \vdash_j C(\nodem)$ and $\relAcc()\at\nodem$}
   \STATE Add $\relAcc_j()$ to $\nodek$
   \ENDIF
   \ENDFOR \label{lst:ATMSim:SuccessorConfig:end}
   \FOR {each node $\nodek$ with state~$q$} \label{lst:ATMSim:AcceptanceCheck:start}
   \IF {$q$ is existential and $\relAcc_1()$ \emph{or} $\relAcc_2()$ holds on~$k$}
   \STATE Add $\relAcc()$ to $\nodek$
   \ENDIF
   \IF {$q$ is universal and $\relAcc_1()$ \emph{and} $\relAcc_2()$ hold on~$k$}
   \STATE Add $\relAcc()$ to $\nodek$
  \ENDIF
  \ENDFOR \label{lst:ATMSim:AcceptanceCheck:end}
   \UNTIL {no more changes} \label{lst:ATMSim:Phase2:end}
   \STATE Accept iff $\relAcc()\at\nodek_0$ \label{lst:ATMSim:test}
  \end{algorithmic}
\end{algorithm}

\ProofSubstep{Proof details}
       We first describe the construction of $\Sigma$ and $\tau$.
			
      The initial assignment for $\nodek_0$ (\codelineref{lst:ATMSim:init}) is done by \body{\tau}. The final test whether $\nodek_0$ is accepting (\codelineref{lst:ATMSim:test}) is done by \head{\tau}. The intermediate processing has to be taken care of by $\Sigma$.
 
				\ProofSubstep{Construction of~$\tau$}

                With the initial configuration $C_0=(q_0,0,w)$ of $M$ on input $w$, we associate the set~$\atoms_{C_0}$, which is the union of the following sets
				\begin{itemize}
					\item $\{\relState_{q_0}(), \relHead(x_0)\}$,
					\item $\{\relSym(x_0,y_1), \relSym(x_{n+1}, y_t)\}$, and
					\item $\{\relSym(x_i,y_{r}) \mid i \in \{1,\dots,n\}, w_i=a_{r}\}$.
                                        \end{itemize}

				The body of $\tau$ is  the union of the sets $\atoms_\relSucc$,  $\atoms_\relAlph$, and $\atoms_{C_0}\at\vark$ of atoms, where $\atoms_\relSucc$ establishes a linear order on the variables $x_0,\ldots,x_{n+1}$ (and will be used more often), $\atoms_\relAlph$ assigns the alphabet elements and~$C_0$ is the initial configuration. To this end, we let
\begin{itemize}
\item $\atoms_\relSucc= \{\relSucc(x_0,x_1),\ldots,\relSucc(x_n,x_{n+1})\}$, and
\item $\atoms_\relAlph = \{\relAlph_r(y_r), \relAlph(y_r) \mid r \in \{1,\dots,t\}\}$.
\end{itemize}
Recall that $a_1$ and $a_t$ are the special symbols $\lmark$ and $\rmark$, respectively.

The head of $\tau$ consists of the single atom $\relAcc()\at\kappa$. 

\ProofSubstep{Construction of~$\Sigma$}
The set $\Sigma$ %
is the disjoint union of sets $\Sigma_1$ and~$\Sigma_2$ reflecting the first and the second phase of the algorithm, respectively.

For the generation of nodes representing all possible configurations in the algorithm (\codelineref[lst:ATMSim:Phase1:start]{lst:ATMSim:Phase1:end}) subset~$\Sigma_1$ contains a node-creating dtgd~$\sigma_{q,p}$ of Type (G2) for every state $q \in Q$ and every position $p \in \{0,\dots,n+1\}$.%

Its body is $\atoms_\relSucc\cup\{\relAlph_1(y_1),\relAlph_t(y_t),\relAlph(z_1),\ldots,\relAlph(z_{n})\}$ and its head is $\atoms_{q,p}\at\vark\cup\{\relAcc()\at\vark\}$ if~$q$ is accepting and $\atoms_{q,p}\at\vark$ otherwise, where set $\atoms_{q,p}$ is defined as the union of the sets
\begin{itemize}
	\item $\{\relState_q(), \relHead(x_p)\}$,
	\item $\{\relSym(x_0,y_1), \relSym(x_{n+1},y_t)\}$ and
	\item $\{\relSym(x_i,z_i) \mid i \in \{1,\dots,n\}\}$.
\end{itemize}
In particular, constraint $\sigma_{q,p}$ also takes care of \codelineref[lst:ATMSim:AcceptingStates:start]{lst:ATMSim:AcceptingStates:end}.

Subset~$\Sigma_2$ consists of all other constraints, defined in the following.

For \codelineref[lst:ATMSim:SuccessorConfig:start]{lst:ATMSim:SuccessorConfig:end}, there is one data-collecting dtgd of Type (C3), for each $q\in Q$,  $a_r\in A$, $j\in\{1,2\}$ and $p\in\{0,\ldots,n+1\}$ representing the $j$-th possible transition of $M$ in case its current state is $q$, the current tape symbol is $a_r$ and the current head position is $p$.

Let us assume that $\delta_j(q,a_r)$ requires that the current symbol is replaced by $a_s$, the head moves to the right, and the new state is $q'$. Then a dtgd exists if $p\le n$. Its body is the union of the following sets

\begin{itemize}
	\item $\atoms_\relSucc \cup \{\relAlph_r(z_p), \relAlph_s(z'_p)\}$,
	\item $\atoms_{q,p}\at\vark$,
	\item $\atoms_{q',p+1}\at\varm \setminus \{\relSym(x_p,z_p)\at\varm\}$ and
	\item $\{\relSym(x_p,z'_p), \relAcc()\}\at\varm$.
\end{itemize}

Thus, the intention of the body is to express that some node $\nodem$ encodes the $j$-th successor configuration~$C(m)$ of the configuration~$C(\nodek)$ of $\nodek$ and $C(m)$ is accepting.
The head of the dtgd thus just consists of $\relAcc_j()\at\vark$. 

The dtgds for other transitions are defined analogously.

Finally, to simulate \codelineref[lst:ATMSim:AcceptanceCheck:start]{lst:ATMSim:AcceptanceCheck:end} of the algorithm, for each existential state $q\in\Qex$ there are the dtgds
\begin{itemize}
\item $\relState_q()\at\kappa, \relAcc_1()\at\kappa \to \relAcc()\at\kappa$ and
\item $\relState_q()\at\kappa, \relAcc_2()\at\kappa \to \relAcc()\at\kappa$,
\end{itemize}
and for each universal state $q\in\Qall$ there is the dtgd
$$
\relState_q()\at\kappa, \relAcc_1()\at\kappa, \relAcc_2()\at\kappa \to \relAcc()\at\kappa.
$$

	\ProofSubstep{Correctness}
	We claim that $\Sigma\models\tau$ if and only if $M$ accepts the input word~$w$ encoded by~$\tau$. More precisely, starting from a canonical database $D_\tau$ the chase  procedure generates exactly the same nodes as Algorithm~\ref{alg:atm}, modulo renaming of elements.

        First of all, $\body{\tau}$ ensures that the canonical database $D_\tau$ consists of one node $k_0$ representing $C_0$ and thus takes care of \codelineref{lst:ATMSim:init}. The correspondence of the other parts of Algorithm~\ref{alg:atm} to the constraints of $\Sigma$ was already described above.

        It is straightforward to show by induction that, for every node produced by Algorithm~\ref{alg:atm}, a corresponding node with the same facts is generated by the chase, and vice versa. Finally $\head{\tau}$ is implied by $\Chase(\chseq,D_\tau)$ if and only if $M$ accepts $w$.

       \ProofStep{Proof of (b)}
	   We show that $\Implication(\Tdf)$ is \EXPTIME-hard, even if restricted to node-creating dtgds of type~(G2) and type~(G4) without comparison atoms.
  We start with a description of the proof idea, which is followed by the details.

	  \ProofSubstep{Proof idea}
        The proof mainly differs from the previous one in the way in which
        the information about accepting  configurations is propagated. Each configuration is represented by up to four nodes~$\nodek,\nodek_1,\nodek_2$ and~$\nodek^*$ that differ only with respect to acceptance facts. If, for nodes $\nodek$ and $\nodem$, it holds $C(\nodek) \vdash_j C(\nodem)$ and \nodem contains fact $\relAcc()$, then a new node $\nodek_j$ is generated with all facts of $\nodek$ plus the additional fact  $\relAcc_j()$. If a configuration $C$ with a universal state is represented by nodes $\nodek_1,\nodek_2$, and $\nodek_1,\nodek_2$ contain facts $\relAcc_1()$ and $\relAcc_2()$, respectively, then another node $\nodek^*$ is generated which also represents $C$ and has the additional fact $\relAcc()$. Similarly for configurations with existential states.

       \ProofSubstep{Proof details}
 	Let $L$, $M$ and $w$ be as in (a).

        The goal of the construction is to guarantee that a node that represents the initial configuration $C_0$ and has fact $\relAcc()$ is generated, if and only if $M$ accepts $w$.
        
	\ProofSubstep{Construction}
        $\Sigma=\Sigma_1 \uplus \Sigma_2$ and $\tau$ are again constructed to make the chase simulate an algorithm very similar to Algorithm~\ref{alg:atm}. 

	\ProofSubstep{Construction of~$\tau$}
	The body of~$\tau$ is the same as in~(a) and its head is $\atoms_{C_0}\at\vark^* \cup \{\relAcc()\at\vark^*\}$.

	\ProofSubstep{Construction of~$\Sigma$}
	Subset~$\Sigma_1$ is defined exactly as in the proof of (a).

	Subset~$\Sigma_2$ consists of  the following constraints.

To generate nodes~$\nodek_j$ along the lines sketched above, $\Sigma_2$ has one node-creating dtgd of Type (G4), for every $j \in \{1,2\}$, every $q \in Q$, every $p \in \{1,\dots,n+1\}$ and every symbol~$a_r \in A$, depending on the transition~$\delta_j(q,a_r)$. As an example we give the dtgd for a transition that replaces~$a_r$ by~$a_s$, moves the head to the right (assuming $i \le n$) and enters state~$q'$. The body of this dtgd is the union of the sets
	\begin{itemize}
		\item $\atoms_\relSucc \cup \{\relAlph_r(z_p), \relAlph_s(z'_p)\}$,
		\item $\atoms_{q,p}\at\vark$,
		\item $\atoms_{q',p+1}\at\varm \setminus \{\relSym(x_p,z_p)\at\varm\}$ and
		\item $\{\relSym(x_p,z'_p),\relAcc()\}\at\varm$.
	\end{itemize}
	The head of the dtgd is $\atoms_{q,p}\at\vark_j \cup \{\relAcc_j()\at\vark_j\}$.

	Thus, a new node~$\nodek_j$ is generated with an $\relAcc_j()$ fact in the local instance if there is a node~\nodek representing the same configuration $C(\nodek)=C(\nodek_j)$ and successor configuration~$C(m)$ for $C(\nodek) \vdash_j C(\nodem)$ is marked accepting on some node~\nodem. Dtgds for other transitions are defined analogously.

	Furthermore, for each universal state~$q$ and each $p \in \{0,\dots,n+1\}$, there is a node-creating dtgd of Type (G4) whose body is the union of
	\begin{itemize}
		\item $\atoms_{q,p}\at\vark_1 \cup \{\relAcc_1()\at\vark_1\}$ and
		\item $\atoms_{q,p}\at\vark_2 \cup \{\relAcc_2()\at\vark_2\}$
	\end{itemize}
	and whose head is $\atoms_{q,p}\at\vark^* \cup \{\relAcc()\at\vark^*\}$.
	Similarly, for each existential state~$q$ and each $p \in \{0,\dots,n+1\}$, there are two node-creating dtgds, one for each $j \in \{1,2\}$.  The body of the $j$-th dtgd is $\atoms_{q,p}\at\vark_j \cup \{\relAcc_j()\at\vark_j\}$ and the head is $\atoms_{q,p}\at\vark^* \cup \{\relAcc()\at\vark^*\}$.

	\ProofSubstep{Correctness}
It is not hard to show  for each configuration $C$, that $C$ is accepting if and only if there is a node $k$ that represents $C$ and contains fact $\relAcc()$.

\ProofStep{Proof of (c)}
	We show that $\Implication(\Tdf)$ is \EXPTIME-hard, even if restricted to node-creating dtgds of type~(G2) and degds of type~(E4) without comparison atoms.
  We start with a description of the proof idea, which is followed by the details.

	\ProofSubstep{Proof idea}
        The proof follows a similar strategy as the proof of (b).   However, the goal of the construction is slightly different. Instead of propagating acceptance information by the generation of new nodes, this construction propagates information with the help of degds. A degd can identify two nodes and thus yield a node that contains the facts of \emph{both nodes}.

        To this end, the chase first generates three nodes $\nodek,\nodek_1,\nodek_2$, for each possible configuration with the initial additional facts $\relEval()$, $\relAcc_1()$, and $\relAcc_2()$, respectively.
        The node $\nodek$ with fact  $\relEval()$ is supposed to collect the acceptance information for its configuration. More precisely, if $C(k)\vdash_1 C(m)$, then $\nodek$ and $\nodek_1$ are identified yielding a node with $\relEval()$ \emph{and} $\relAcc_1()$. Likewise for $C(k)\vdash_2 C(m)$, $\nodek$ and $\nodek_2$. The fact $\relAcc()$ is added according to the semantics of universal or existential nodes.
        
        Altogether, the chase should generate a node that represents the initial configuration $C_0$ and has facts $\relEval()$ \emph{and} $\relAcc()$,  if and only if $M$ accepts $w$.

	\ProofSubstep{Proof details}
	Let $L$, $M$ and $w$ be as in (a).

	\ProofSubstep{Construction of~$\tau$}
        The body of $\tau$ is defined as in (a). Its head is the union of~$\atoms_{C_0}\at\vark$ and $\{\relEval(), \relAcc()\}\at\vark$.

	\ProofSubstep{Construction of~$\Sigma$}
	Subset~$\Sigma_1$ contains three node-creating dtgds for every $q \in Q$ and every $p \in \{0,\dots,n+1\}$. They all share the same body, which is the union of sets $\atoms_\relSucc$ and $\atoms_\relAlph$. The head of each dtgd is the union of $\atoms_{q,p}\at\vark$ with $\{\relEval()\at\vark\}$, $\{\relAcc_1()\at\vark\}$, and $\{\relAcc_2()\at\vark\}$, respectively.

	Furthermore, for every $q \in Q$, every $j \in \{1,2\}$, every $p \in \{0,\dots,n+1\}$ and every $a_r \in A$, a degd of Type (E4) that depends on the transition $\delta_j(q,a_r)$ is added. We exemplify this for a transition that replaces the current symbol by~$a_s$, moves right and enters state~$q'$. The head of the degd is $\vark=\vark_j$. The body is the union of sets
	\begin{itemize}
		\item $\calA_\relSucc$,
		\item $\atoms_{q,p}\at\vark \cup \{\relEval()\at\vark\}$,
		\item $\atoms_{q,p}\at\vark_j \cup \{\relAcc_j()\at\vark_j\}$,
		\item $\big(\atoms_{q',p+1} \setminus \{\relSym(x_p,z_p)\}\big)\at\varm \cup \{\relAcc()\at\varm\}$ and
		\item $\{\relSym(x_p,z'_p), \relAlph_s(z'_p)\}\at\varm$.
	\end{itemize}
	Degds for other transitions are defined analogously.

	Additionally, for each universal state~$q$, there is a data-collecting dtgd
        \begin{itemize}
       \item $\relState_q()\at\kappa, \relAcc_1()\at\kappa, \relAcc_2()\at\kappa \to \relAcc()\at\kappa$ 
        \end{itemize}
 and for each existential state~$q$, there are two data-collecting dtgds
       \begin{itemize}
       \item $\relState_q()\at\kappa, \relAcc_1()\at\kappa, \to \relAcc()\at\kappa$ 
       \item $\relState_q()\at\kappa, \relAcc_2()\at\kappa \to \relAcc()\at\kappa$.
        \end{itemize}

        	\ProofSubstep{Correctness}
                It is straightforward to show by induction that a node representing a configuration $C$ and containing facts $\relEval()$ and  $\relAcc()$ is generated if and only if $C$ is accepting.

         \ProofStep{Proof of (d)}
		 We show that $\Implication(\Tdf)$ is \EXPTIME-hard, even if restricted to node-creating dtgds of types~(G2) and~(G3) and degds of type~(E3) without comparison atoms.
  We start with a description of the proof idea, which is followed by the details.

	  \ProofSubstep{Proof idea}
        Again, the proof strategy is similar to the previous proofs. However, it differs in that it does not start by generating nodes for all possible configurations, but only for those with accepting states. These nodes have, in particular, the  fact $\relAcc()$.

        If $C \vdash_1 C(m)$ and $m$ carries $\relAcc()$ then a new node $\nodek$ with $C(\nodek)=C$ is generated which contains also $\relAcc_1()$. Likewise for $C \vdash_2 C(m)$. Then, if there are two nodes $\nodek_1$ and $\nodek_2$ with facts  $\relAcc_1()$ and  $\relAcc_2()$, respectively, which both represent the same configuration, they are identified yielding one node representing $C$ and containing $\relAcc_1()$ \emph{and}  $\relAcc_2()$. Then $\relAcc()$ can be added, as before.
        
        Altogether, a node that represents the initial configuration $C_0$ and has facts $\relAcc()$ should be generated if and only if $M$ accepts $w$. 

\end{new}
	\ProofSubstep{Proof details}
	Let $L$, $M$ and $w$ be as in (a).

	\ProofSubstep{Construction of~$\tau$}
	The constraint $\tau$ is  defined just like in (a).

	\ProofSubstep{Construction of~$\Sigma$}
        Subset $\Sigma_1$ contains a node-creating dtgd~$\sigma_{q,p}$ of Type (G2) for every \emph{accepting} state $q \in F$ and every position $p \in \{0,\dots,n+1\}$.

        Subset $\Sigma_2$ has a dtgd of Type (G3) for each $q \in Q$, each $j \in \{1,2\}$, each $p \in \{0,\dots,n+1\}$ and each $a_r \in A$. We illustrate the definition of this dtgd for a transition $\delta_j(q,a_r)$ that replaces the current symbol by~$a_s$, moves to the right and enters state $q'$. In this case, the body of the dtgd is $\atoms_{q',p+1}\at\varm \cup \{\relAlph_s(z_p)\at\varm, \relAcc()\at\varm\}$ and its head is the union of
	\begin{itemize}
		\item $\atoms_{q,p}\at\vark \setminus \{\relSym(x_p,z_p)\at\vark\}$ and
		\item $\{\relSym(x_p,z'_p)\at\vark, \relAlph_r(z'_p)\at\vark, \relAcc_j()\at\vark\}$.
	\end{itemize}
	Dtgds for other transitions are defined in an analogous fashion.

        To identify the two nodes representing the same configuration, $\Sigma_2$ contains a degd of Type (E3), for every $q \in Q$ and every $p \in \{1,\dots,n+1\}$. Its head is $\vark=\varm$ and its body is $\atoms_{q,p}\at\vark \cup \atoms_{q,p}\at\varm$. We note that, although the facts $\relAcc_1()$ and  $\relAcc_2()$ do not occur in this degd, the only way in which two nodes can be identified is, if they represent the same configuration, one contains $\relAcc_1()$ and  the other contains $\relAcc_2()$.

	Finally, $\Sigma_2$ contains the same data-collecting dtgds as in (c) to infer $\relAcc()$-facts.

	\ProofSubstep{Correctness}
                It is again straightforward to show by induction that a node representing a configuration $C$ and containing facts $\relEval()$ and  $\relAcc()$ is generated, if and only if $C$ is accepting.
\end{proof}

\end{document}